\newcommand{\nvsp}{\begin{SHORT}\vspace{-0.85 mm}\end{SHORT}}
\renewcommand{\nvsp}{}
\pgfplotsset{width=7cm,compat=1.3}
\newcommand\TikzHitProbabilitiesFig[4]{
	\begin{tikzpicture}
	\begin{axis}[#1, width=#2, height=#3,every axis legend/.append style={fill opacity=0.8,nodes={right}, xlabel={Age of requested page}, ylabel={Hit probability}},xtick={0,1,...,15}] 
		#4
	\end{axis}
	\end{tikzpicture}
}
\newcommand\AddCurve[4]{
	\addplot[mark=#3, color=#2] plot file {curves/#1}; 
	\addlegendentry{#4};
}
\definecolor{color1}{RGB}{198, 219, 239}	
\definecolor{color2}{RGB}{252, 146, 114}	
\definecolor{color3}{RGB}{116, 196, 118}	
\definecolor{color4}{RGB}{231, 41, 138}		
\definecolor{color5}{RGB}{106, 81, 163}		
\definecolor{color6}{RGB}{189, 0, 38}			
\definecolor{color7}{RGB}{8, 48, 107}			
\definecolor{color8}{RGB}{0,0,0}					
\definecolor{myred}{rgb}{0.89,0.1,0.11}
\definecolor{mygreen}{rgb}{0.30,0.69,0.29}
\definecolor{myblue}{rgb}{0.22,0.49,0.72}
\newcommand{\polname}[1]{\textsc{#1}}
\newcommand{\equitable}{{\polname{Equitable}}\xspace}
\newcommand{\onlinemin}{{\polname{OnlineMin}}\xspace}
\newcommand{\partition}{{\polname{Partition}}\xspace}
\newcommand{\FWF}{{\polname{FWF}}\xspace}
\newcommand{\LRU}{{\polname{LRU}}\xspace}
\newcommand{\FIFO}{{\polname{FIFO}}\xspace}
\newcommand{\PARTITION}{{\polname{Partition}}\xspace}
\newcommand{\MARK}{{\polname{Mark}}\xspace}
\newcommand{\EQUITABLE}{{\polname{Equitable}}\xspace}
\newcommand{\RANDOM}{{\polname{Random}}\xspace}
\newcommand{\RAND}{{\polname{Random}}\xspace}
\newcommand{\lrurandom}{{\polname{LRU-Random}}\xspace}
\newcommand{\lrur}{{\polname{LRUR}}\xspace}
\newcommand{\smoothlru}{{\polname{Smoothed-LRU}}\xspace}
\newcommand{\steplru}{{\polname{Step-LRU}}\xspace}
\newcommand{\detsteplru}{{\polname{Det-Step-LRU}}\xspace}
\newcommand{\opt}{{\polname{OPT}}\xspace}
\newcommand{\adv}{{\polname{ADV}}\xspace}
\newcommand{\EOA}{{\polname{EOA}}\xspace}
\newcommand{\todo}[1]{{{\textcolor{Red}{(Todo: #1)}}}}
\renewcommand{\todo}[1]{}
\newcommand{\etal}{\textit{et~al.}\xspace}
\DeclareMathOperator*{\argmax}{arg\,max}
\tikzset{annotation/.style={draw=gray!50,fill=gray!20,line width=1pt}}
\tikzset{specialannotation/.style={draw=mygreen,fill=mygreen,rounded corners=0.5pt}}
\tikzset{klammer/.style={decorate,decoration={brace,amplitude=6pt}, line width=1.5pt}}
\newcommand{\crosswi}{2.3pt}
\newcommand{\wi}{4pt}
\newcommand{\wwi}{6.5pt}
\newcommand{\wwwi}{10.5pt}
\newcommand{\markx}[2]{\draw (#1 cm,2pt) -- (#1 cm,-2pt) node[anchor=north,fill=white] {#2}; \draw[dotted] (#1 cm, 2pt) -- (#1 cm, \yMax cm);}
\newcommand{\marky}[2]{\draw (2pt, #1 cm) -- (-2pt, #1 cm) node[anchor=east,fill=white] {#2}; \draw[dotted] (2pt, #1 cm) -- (\xMax cm, #1 cm);}
\newcommand{\drawcross}[2]{\draw[line width=1.5pt] (#1 cm+\crosswi, #2 cm-\crosswi) -- (#1 cm-\crosswi, #2 cm+\crosswi); \draw[line width=1.5pt] (#1 cm-\crosswi, #2 cm-\crosswi) -- (#1 cm+\crosswi, #2 cm + \crosswi);}
\newcommand{\drawcircle}[2]{\draw[line width=1.5pt, fill=gray!50] (#1 cm, #2 cm) circle (\crosswi);}
\newcommand{\markxy}[3]{\draw[annotation] (#1 cm, #2 cm) -- +(\wwi, \wwi) -- ++(\wi, \wi) node[anchor=south west,annotation] {#3}; \drawcross{#1}{#2}}
\newcommand{\markxylefta}[3]{\draw[annotation] (#1 cm, #2 cm) -- +(-\wwi, \wwi) -- ++(-\wi, \wi) node[anchor=south east,annotation] {#3}; \drawcross{#1}{#2}}
\newcommand{\markxybelow}[3]{\draw[annotation] (#1 cm, #2 cm) -- (#1 cm + \wwi, #2 cm - \wwi) -- (#1 cm + \wi, #2 cm - \wi) node[anchor=north west,annotation] {#3}; \drawcross{#1}{#2}}
\newcommand{\markxybelowcircle}[3]{\draw[annotation] (#1 cm, #2 cm) -- (#1 cm + \wwi, #2 cm - \wwi) -- (#1 cm + \wi, #2 cm - \wi) node[anchor=north west,annotation] {#3}; \drawcircle{#1}{#2}}
\newcommand{\markxyabove}[3]{\draw[annotation] (#1 cm, #2 cm) -- +(60:\wwwi) -- +(60:\wwi) node[anchor=south,annotation] {#3}; \drawcircle{#1}{#2}}
\newcommand{\markxyright}[3]{\draw[annotation,specialannotation] (#1 cm, #2 cm) -- +(0:\wwwi) -- ++(0:\wwi) node[anchor=west,fill=gray!20,draw=mygreen] {#3}; }
\newcommand{\markxyleft}[3]{\draw[annotation,specialannotation] (#1 cm, #2 cm) -- +(180:\wwwi) -- ++(180:\wwi) node[anchor=east,fill=gray!20,draw=mygreen] {#3}; }
\newcommand{\markxyinterval}[3]{\fill[specialannotation] (#1 cm-0.7mm, #2 cm+0.7mm) rectangle (#1 cm+0.7mm, #3 cm-0.7mm);}
\newcommand{\separationy}[3]{
\draw[color=violet,klammer] (#2, #1 cm) -- (#2, 0 cm) node[anchor=west,midway,right,fill=white,xshift=2.5mm, text width=2.3cm,yshift=2.1mm] {\small #3};}
\newcommand{\separationybelow}[3]{
\draw[color=red,klammer] (#2, \yMax cm) -- (#2, #1 cm) node[anchor=west,midway,right,fill=white,xshift=2.5mm, text width=2.3cm, yshift=3.95mm] {\small #3};}
\newcommand{\separationx}[4]{
\draw[color=blue,klammer] (#1 cm, #3 cm) -- (#2 cm, #3 cm)  node[midway, above,fill=white, xshift=0mm, yshift=2mm] {\small #4};}
\newcommand{\xOPT}{1}
\newcommand{\xHk}{1.75}
\newcommand{\xHkTwo}{2.65}
\newcommand{\xK}{4.1}
\newcommand{\xGamma}{6.65}
\newcommand{\xinfty}{7.65}
\newcommand{\xAnnot}{9.075}
\newcommand{\xAnnotTwo}{8.925}
\newcommand{\xMax}{8.85}
\newcommand{\ySLow}{0.4}
\newcommand{\ySTwo}{1.15}
\newcommand{\ySTwoHigh}{1.4}
\newcommand{\ySHk}{1.9}
\newcommand{\yLRURandom}{2.58}
\newcommand{\ySK}{2.95}
\newcommand{\ySKTwo}{3.35}
\newcommand{\ySAnnot}{3.7}
\newcommand{\ySc}{4.5}
\newcommand{\yHk}{5.5}
\newcommand{\yK}{6.5}
\newcommand{\yInfty}{7.5}
\newcommand{\yMax}{7.9}
\newcommand{\yAnnot}{7.95}
\newcommand{\yAnnotTwo}{5}
\newcommand{\robustnessvscompetitiveness}{
\begin{tikzpicture}[yscale=0.6, text height=1.5ex,text depth=.25ex,>=stealth]
	\draw[line width=1pt] (0,0) -- (0, \yMax) -- node[below=-10pt, xshift=-14.5mm, fill=white] {\textbf{Smoothness}} (0, \yMax);
	\draw[line width=1pt] (-0.5pt,0) -- (\xMax, 0) -- node[left=-65pt,yshift=-3.3mm,fill=white] {\textbf{Competitiveness}} (\xMax, 0);
	
	\markx{\xOPT}{$1$} 
	\markx{\xHk}{$H_k$} 
	\markx{\xHkTwo}{$2H_k{-}1$} 
	\markx{\xK}{$k$} 
	\markx{\xGamma}{$\gamma$} 
	\markx{\xinfty}{$\infty$} 
	
	\marky{\ySTwo}{$(1, 2\delta)$}
	\marky{\ySLow}{$(1, (1+\frac{k}{2k-1})\delta))$}
	\marky{\ySHk}{$(1, \delta H_k)$}
	\marky{\ySK}{$(1, \delta (k+1))$}
	\marky{\ySc}{$(1, \delta \beta)$}
	\marky{\yHk}{$({\cal O}(H_k), \gamma_2)$}
	\marky{\yK}{$(k, \gamma_1)$}
	\marky{\yInfty}{$\infty$}

	\separationx{\xK}{\xGamma}{\yAnnot}{Bounded mem. $(\alpha, \beta)$-smooth det. demand paging}

	\markxyleft{\xHk}{\ySAnnot}{Randomized strongly comp.}
	\markxyinterval{\xHk}{\yHk}{\ySHk}
	\markxyright{\xK}{\yAnnotTwo}{Deterministic strongly comp.}
	\markxyinterval{\xK}{\yK}{\ySK}

	\separationybelow{\ySK}{\xAnnot}{{Deterministic demand paging or competitive}}
	\separationy{\ySc}{\xAnnotTwo}{{Smooth\\algorithms}}


	\markxylefta{\xK}{\ySKTwo}{\FWF}
	\markxy{\xK}{\ySK}{\RANDOM, \LRU}
	\markxybelowcircle{\xK}{\yLRURandom}{$\lrurandom_{k=2}$}
	\markxybelow{\xOPT}{\ySTwo}{\opt}
	\markxybelow{\xK}{\ySTwoHigh}{$\smoothlru_{k+i,i}$}
	
	\markxy{\xK}{\yK}{\FIFO}
	
	\markxyabove{\xHk}{\yHk}{\PARTITION, \EQUITABLE}
	\markxybelowcircle{\xHkTwo}{\yHk}{\MARK}

	\markxy{\xinfty}{\ySLow}{\EOA}
	
\end{tikzpicture}}
\begin{document}  

\begin{FORPROOFSONLYINCLUDETHIS}
\section*{Appendix}
\end{FORPROOFSONLYINCLUDETHIS}
\begin{FORPROOFSONLYEXCLUDETHIS}

\title{On the Smoothness of Paging Algorithms}
\titlerunning{On the Smoothness of Paging Algorithms}  
\author{Jan Reineke\inst{1} \and Alejandro Salinger\inst{2}\thanks{Most of the reported work was carried out while this author was a postdoctoral researcher at Saarland~University.}}

\authorrunning{Reineke and Salinger}
\tocauthor{Jan Reineke, Alejandro Salinger}

\institute{Department of Computer Science, Saarland University, Saarbr\"ucken, Germany\\
\email{reineke@cs.uni-saarland.de}
\and
SAP SE, Walldorf, Germany\\
\email{alejandro.salinger@sap.com}
}
\maketitle

\begin{abstract}
We study the smoothness of paging algorithms.
How much can the number of page faults increase due to a perturbation of the request sequence?
We call a paging algorithm \emph{smooth} if the maximal increase in page faults is proportional to the number of changes in the request sequence.
We also introduce quantitative smoothness notions that measure the smoothness of an algorithm.

We derive lower and upper bounds on the smoothness of deterministic and randomized demand-paging and competitive algorithms. 
Among strongly-competitive deterministic algorithms \LRU matches the lower bound, while \FIFO matches the upper bound. 

Well-known randomized algorithms like \partition, \equitable, or \MARK are shown not to be smooth.
We introduce two new randomized algorithms, called \smoothlru and \lrurandom.
\smoothlru allows to sacrifice competitiveness for smoothness, where the trade-off is controlled by a parameter.
\lrurandom is at least as competitive as any deterministic algorithm while smoother.
\end{abstract}

\section{Introduction}

Due to their strong influence on system performance, paging algorithms have been studied extensively since the 1960s.
Early studies were based on probabilistic request models~\cite{Belady66,Mattson70,Aho71}.
In their seminal work, Sleator and Tarjan~\cite{Sleator85} introduced the notion of competitiveness, which relates the performance of an online algorithm to that of the optimal offline algorithm.
By now, the competitiveness of well-known deterministic and randomized paging algorithms is well understood, and various optimal online algorithms~\cite{McGeoch91,Achlioptas00} have been identified.

In this paper, we study the \emph{smoothness} of paging algorithms.
We seek to answer the following question:
How strongly may the performance of a paging algorithm change when the sequence of memory requests is slightly perturbed?
This question is relevant in various domains:
Can the cache performance of an algorithm suffer significantly due to the occasional execution of interrupt handling code?
Can the execution time of a safety-critical real-time application be safely and tightly bounded in the presence of interference on the cache?
Can secret-dependent memory requests have a significant influence on the number of cache misses of a cryptographic protocol and thus give rise to a timing side-channel attack?\looseness=-1

We formalize the notion of smoothness by identifying the performance of a paging algorithm with the number of page faults and the magnitude of a perturbation with the edit distance between two request sequences.

We show that for any deterministic, demand-paging or competitive algorithm, a single additional memory request may cause $k+1$ additional faults, where $k$ is the size of the cache.
Least-recently-used (\LRU) matches this lower bound, indicating that there is no trade-off between competitiveness and smoothness for deterministic algorithms.
In contrast, First-in first-out (\FIFO) is shown to be least smooth among all strongly-competitive deterministic algorithms.

Randomized algorithms have been shown to be more competitive than deterministic ones.
We derive lower bounds for the smoothness of randomized, demand-paging and randomized strongly-competitive algorithms that indicate that randomization might also help with smoothness.
However, we show that none of the well-known randomized algorithms \MARK, \EQUITABLE, and \PARTITION is smooth.
The simple randomized algorithm that evicts one of the cached pages uniformly at random is shown to be as smooth as~\LRU, but not more.

We then introduce a new parameterized randomized algorithm, \smoothlru, that allows to sacrifice competitiveness for smoothness.
For some parameter values \smoothlru is smoother than any randomized strongly-competitive algorithm can possibly be, indicating a trade-off between smoothness and competitiveness for randomized algorithms.
This leaves the question whether there is a randomized algorithms that is smoother than any deterministic algorithm without sacrificing competitiveness.
We answer this question in the affirmative by introducing \lrurandom, a randomized version of \LRU that evicts older pages with a higher probability than younger ones.
We show that \lrurandom is smoother than any deterministic algorithm for $k=2$. While we conjecture that this is the case as well for general $k$, this remains an open problem.

The notion of smoothness we present is not meant to be an alternative to competitive analysis for the evaluation of the performance of a paging algorithm; rather, it is a complementary quantitative measure that provides guarantees about the performance of an algorithm under uncertainty of the input. 
In general, smoothness is useful in both testing and verification:
\begin{compactitem}
	\item In testing: if a system is smooth, then a successful test run is indicative of the system's correct behavior not only on the particular test input, but also in its neighborhood.
	\item In verification, systems are shown to behave correctly under some assumption on their environment. Due to incomplete environment specifications, operator errors, faulty implementations, or other causes, the environment assumption may not always hold completely. In such a case, if the system is smooth, ``small'' violations of the environment assumptions will, in the worst case, result in ``small'' deviations from correct behavior.
\end{compactitem}
An example of the latter case that motivates our present work appears in safety-critical real-time systems, where static analyses are employed to derive guarantees on the worst-case execution time (WCET) of a program on a particular microarchitecture~\cite{Wilhelm08}.
While state-of-the-art WCET analyses are able to derive fairly precise bounds on execution times,
they usually only hold for the uninterrupted execution of a single program with no interference from the environment whatsoever.
These assumptions are increasingly difficult to satisfy with the adoption of preemptive scheduling or even multi-core architectures, which may introduce interference on shared resources such as caches and buses.
Given a smooth cache hierarchy, it is possible to separately analyze the effects of interference on the cache, e.g. due to interrupts, preemptions, or even co-running programs on other cores.
Our results may thus inform the design and analysis of microarchitectures for real-time systems~\cite{Axer14}.\looseness=-1

Interestingly, our model shows a significant difference between \LRU and \FIFO, two algorithms whose theoretical performance has proven difficult to separate.

Our results are summarized in Table~\ref{tab:results}. 
An algorithm $A$ is $(\alpha, \beta, \delta)$-smooth, if the number of page faults $A(\sigma')$ of $A$ on request sequence $\sigma'$ is bounded by $\alpha\cdot A(\sigma) + \beta$ whenever $\sigma$ can be transformed into $\sigma'$ by at most $\delta$ insertions, deletions, or substitutions of individual requests.
Often, our results apply to a generic value of $\delta$.
In such cases, we express the smoothness of a paging algorithm by a pair $(\alpha,\beta)$, where $\alpha$ and $\beta$ are functions of $\delta$, and $A$ is $(\alpha(\delta), \beta(\delta), \delta)$-smooth for every $\delta$.
Usually, the smoothness of an algorithm depends on the size of the cache, which we denote by $k$.
As an example, under \LRU the number of faults may increase by at most $\delta(k+1)$, where $\delta$ is the number of changes in the sequence.
A precise definition of these notions is given in Section~\ref{sec:smoothnesssec}.


\begin{CUSTOM}
\ifthenelse{\boolean{atleastoneomittedproof}}
{For readability, we place \ifthenelse{\boolean{allproofsomitted}}{all}{some of} the proofs of our results in the appendix.}
{}
\end{CUSTOM}

\begin{table}
\begin{SHORT}\vspace{-5mm}\end{SHORT}
\begin{center}
\caption{Upper and lower bounds on the smoothness of paging algorithms. In the table, $k$ is the size of the cache, $\delta$ is the distance between input sequences, $H_k$ denotes the $k^{th}$ harmonic number, and $\gamma$ is an arbitrary constant.\label{tab:results}}
\begin{SHORT}\vspace{-2mm}\end{SHORT}
\begin{tabular}{l | c | c}
\textbf{Algorithm} & \textbf{Lower bound} & \textbf{Upper bound} \\\hline\hline
Deterministic, demand-paging & $(1,\delta(k+1))$ & $\infty$ \\
Det. $c$-competitive with additive constant $\beta$ & $(1,\delta(k+1))$ & $(c,2\delta c+\beta)$ \\
Deterministic, strongly-competitive & $(1,\delta(k+1))$ & $(k,2\delta k)$ \\\hline
Optimal offline & $(1,2\delta)$ & $(1,2\delta)$\\ \hline
\LRU & $(1,\delta(k+1))$ & $(1,\delta(k+1))$\\ 
\FWF & $(1,2\delta k)$ & $(1,2\delta k)$\\
\FIFO & $(k, \gamma, 1)$ & $(k,2\delta k)$\\\hline\hline
Randomized, demand-paging & $(1,H_k+\frac{1}{k},1)$ & $\infty$ \\
Randomized, strongly-competitive & $(1,\delta H_k)$ & $(H_k,2\delta H_k)$  \\\hline
\equitable, \partition & $(1+\epsilon,\gamma,1)$ & $(H_k,2\delta H_k)$ \\
\MARK & $(\Omega(H_k), \gamma, 1)$ & $(2H_k-1,\delta (4H_k-2))$ \\
\RAND & $(1,\delta(k+1))$ & $(1,\delta(k+1))$ \\
Evict-On-Access & $(1,\delta(1+\frac{k}{2k-1}))$ & $(1,\delta(1+\frac{k}{2k-1}))$ \\
$\smoothlru_{k,i}$ & $(1,\delta(\frac{k+i}{2i+1}+1))$ & ${(1,\delta(\frac{k+i}{2i+1}+1))}$ \\
\todo{add lru-random}
\end{tabular}
\end{center}
\begin{SHORT}\vspace{-15mm}\end{SHORT}
\end{table}

\section{Related Work}
\begin{SHORT}\vspace{-2mm}\end{SHORT}
\subsection{Notions of Smoothness}
\begin{SHORT}\vspace{-1.5mm}\end{SHORT}

Robust control is a branch of control theory that explicitly deals with uncertainty in its approach to controller design. 
Informally, a controller designed for a particular set of parameters is said to be robust if it would also work well under a slightly different set of assumptions.
In computer science, the focus has long been on the binary property of correctness, as well as on average- and worst-case performance. 
Lately, however, various notions of smoothness have received increasing attention:
Chaudhuri~\etal~\cite{Chaudhuri12} develop analysis techniques to determine whether a given program computes a \emph{Lipschitz-continuous} function.
Lipschitz continuity is a special case of our notion of smoothness.
Continuity is also strongly related to differential privacy~\cite{Dwork06}, where the result of a query may not depend strongly on the information about any particular individual.
Differential privacy proofs with respect to cache side channels~\cite{Doychev15} may be achievable in a compositional manner for caches employing smooth paging algorithms.

Doyen~\etal~\cite{Doyen10} consider the robustness of sequential circuits.
They determine how long into the future a single disturbance in the inputs of a sequential circuit may affect the circuit's outputs.
Much earlier, but in a similar vein, Kleene~\cite{Kleene56}, Perles, Rabin, Shamir~\cite{Perles63}, and Liu~\cite{Liu63} developed the theory of definite events and definite automata.
The outputs of a definite automaton are determined by a fixed-length suffix of its inputs.
Definiteness is a sufficient condition for smoothness.\looseness=-1

The work of Reineke and Grund~\cite{Reineke13} is closest to ours: they study the maximal difference in the number of page faults on the \emph{same} request sequence starting from two different initial states for various deterministic paging algorithms.
In contrast, here, we study the effect of differences in the request sequences on the number of faults.
Also, in addition to only studying particular deterministic algorithms as in~\cite{Reineke13}, in this paper we determine smoothness properties that apply to classes of algorithms, such as all demand-paging or strongly-competitive ones, as well as to randomized algorithms.
One motivation to consider randomized algorithms in this work are recent efforts to employ randomized caches in the context of hard real-time systems~\cite{Cazorla13}.

\begin{SHORT}\vspace{-2mm}\end{SHORT}
\subsection{The Paging Problem}
\begin{SHORT}\vspace{-1mm}\end{SHORT}

Paging models a two-level memory system with a small fast memory known as cache, and a large but slow memory, usually referred to simply as memory. During a program's execution, data is transferred between the cache and memory in units of data known as pages. The size of the cache in pages is usually referred to as $k$. The size of the memory can be assumed to be infinite. The input to the paging problem is a sequence of page requests which must be made available in the cache as they arrive. When a request for a page arrives and this page is already in the cache, then no action is required. This is known as a \emph{hit}. Otherwise, the page must be brought from memory to the cache, possibly requiring the eviction of another page from the cache. This is known as a \emph{page fault} or \emph{miss}. 
A paging algorithm must decide which pages to keep in the cache in order to minimize the number of faults.

A paging algorithm is said to be \emph{demand paging} if it only evicts a page from the cache upon a fault with a full cache. Any non-demand paging algorithm can be made to be demand paging without sacrificing performance~\cite{borodin98}. 

In general, paging algorithms must make decisions as requests arrive, with no knowledge of future requests. That is, paging is an online problem. The most prevalent way to analyze online algorithms is competitive analysis~\cite{Sleator85}. In this framework, the performance of an online algorithm is measured against an algorithm with full knowledge of the input sequence, known as optimal offline or \opt. We denote by $A(\sigma)$ the number of misses of an algorithm when processing the request sequence $\sigma$. A paging algorithm $A$ is said to be $c$-competitive if for all sequences $\sigma$, $A(\sigma) \le c\cdot\opt(\sigma) + \beta$, where $\beta$ is a constant independent of $\sigma$. The \emph{competitive ratio} of an algorithm is the infimum over all possible values of $c$ satisfying the inequality above. An algorithm is called competitive if it has a constant competitive ratio and \emph{strongly competitive} if its competitive ratio is the best possible~\cite{McGeoch91}.\looseness=-1

Traditional paging algorithm are Least-recently-used (\LRU)---evict the page in the cache that has been requested least recently--- and First-in first-out (\FIFO)---evict the page in the cache that was brought into cache the earliest. Another simple algorithm often considered is Flush-when-full (\FWF)---empty the cache if the cache is full and a fault occurs. These algorithms are $k$-competitive, which is the best ratio that can be achieved for deterministic online algorithms~\cite{Sleator85}. An optimal offline algorithm for paging is Furthest-in-the-future, also known as Longest-forward-distance and Belady's algorithm~\cite{Belady66}. This algorithm evicts the page in the cache that will be requested at the latest time in the future.


A competitive ratio less than $k$ can be achieved by the use of randomization.
Important randomized paging algorithms are \RAND---evict a page chosen uniformly at random--- and \MARK~\cite{Fiat91}---mark a page when it is unmarked and requested, and upon a fault evict a page chosen uniformly at random among unmarked pages (unmarking all pages first if no unmarked pages remain). \RAND achieves a competitive ratio of $k$, while \MARK's competitive ratio is $2H_k-1$, where $H_k = \sum_{i=1}^k \frac{1}{i}$ is the $k^{th}$ harmonic number.  The strongly-competitive algorithms \partition~\cite{McGeoch91} and \equitable~\cite{Achlioptas00} achieve the optimal ratio of $H_k$.\looseness=-1


\begin{SHORT}\vspace{-2mm}\end{SHORT}
\section{Smoothness of Paging Algorithms}\label{sec:smoothnesssec}
\begin{SHORT}\vspace{-2mm}\end{SHORT}


We now formalize the notion of smoothness of paging algorithms.  
We are interested in answering the following question:
How does the number of misses of a paging algorithm vary as its inputs vary? 
We quantify the similarity of two request sequences by their edit~distance:

\begin{SHORT}\vspace{-0.6mm}\end{SHORT}
\begin{definition}[Distance]
Let $\sigma=x_1,\ldots,x_n$ and $\sigma'=x'_1,\ldots,x'_m$  be two request sequences. Then we denote by $\Delta(\sigma,\sigma')$ their edit distance, defined as the minimum number of substitutions, insertions, or deletions to transform $\sigma$ into~$\sigma'$. 
\end{definition}
\begin{SHORT}\vspace{-0.6mm}\end{SHORT}
This is also referred to as the \emph{Levenshtein distance}.
%
%
Based on this notion of distance we define $(\alpha, \beta, \delta)$-smoothness:
\begin{SHORT}\vspace{-0.6mm}\end{SHORT}
\begin{definition}[$(\alpha,\beta,\delta)$-smoothness]
Given a paging algorithm $A$, we say that $A$ is $(\alpha,\beta,\delta)$-smooth, if for all pairs of sequences $\sigma,\sigma'$ with $\Delta(\sigma,\sigma') \leq \delta$, 
\[A(\sigma')\le \alpha \cdot A(\sigma)+\beta\]
\end{definition}
\begin{SHORT}\vspace{-0.6mm}\end{SHORT}
For randomized algorithms, $A(\sigma)$ denotes the algorithm's \emph{expected} number of faults when serving~$\sigma$.

An algorithm that is $(\alpha, \beta, \delta)$-smooth may also be $(\alpha', \beta', \delta)$-smooth for $\alpha' > \alpha$ and $\beta' < \beta$.
As the multiplicative factor $\alpha$ dominates the additive constant $\beta$ in the long run, when analyzing the smoothness of an algorithm, we first look for the minimal $\alpha$ such that the algorithm is $(\alpha, \beta, \delta)$-smooth for any $\beta$.

We say that an algorithm is \emph{smooth} if it is $(1,\beta,1)$-smooth for some~$\beta$.
In this case, the maximal increase in the number of page faults is proportional to the number of changes in the request sequence.
This is called \emph{Lipschitz continuity} in mathematical analysis. 
For smooth algorithms, we also analyze the Lipschitz constant, i.e, the additive part $\beta$ in detail, otherwise we concentrate the analysis on the multiplicative factor $\alpha$.

We use the above notation when referring to a specific distance $\delta$. 
For a generic value of $\delta$ we omit this parameter and express the smoothness of a paging algorithm with a pair $(\alpha, \beta)$, where both $\alpha$ and $\beta$ are functions of $\delta$.

\nvsp
\begin{definition}[$(\alpha,\beta)$-smoothness]
Given a paging algorithm $A$, we say that $A$ is $(\alpha,\beta)$-smooth, if for all pairs of sequences $\sigma,\sigma'$, 
\nvsp
\[A(\sigma')\le \alpha(\delta) \cdot A(\sigma)+\beta(\delta),\]
\nvsp
\nvsp
where  $\alpha$ and $\beta$ are functions, and $\delta = \Delta(\sigma,\sigma')$.
\end{definition}

Often, it is enough to determine the effects of one change in the inputs to characterize the smoothness of an algorithm~$A$.
\nvsp
\end{FORPROOFSONLYEXCLUDETHIS}
\begin{restatable}[]{lemma}{lemonetodelta}
If $A$ is $(\alpha,\beta,1)$-smooth, then $A$ is  $(\alpha^\delta,\beta\sum_{i=0}^{\delta-1}\alpha^i)$-smooth. 
\end{restatable}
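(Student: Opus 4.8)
The plan is to induct on $\delta$. For the base case $\delta = 1$, the claim is exactly the hypothesis that $A$ is $(\alpha, \beta, 1)$-smooth, since $\alpha^1 = \alpha$ and $\sum_{i=0}^{0} \alpha^i = 1$. For the inductive step, assume the statement holds for $\delta - 1$, i.e.\ for any pair of sequences at edit distance at most $\delta - 1$, the bound $A(\tau') \le \alpha^{\delta-1} A(\tau) + \beta \sum_{i=0}^{\delta-2} \alpha^i$ applies.

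Now take $\sigma, \sigma'$ with $\Delta(\sigma, \sigma') \le \delta$. The key observation is that the edit distance is a metric, so there exists an intermediate sequence $\sigma''$ with $\Delta(\sigma, \sigma'') \le \delta - 1$ and $\Delta(\sigma'', \sigma') \le 1$: simply take $\sigma''$ to be the sequence obtained after applying all but the last of the $\le \delta$ edit operations witnessing $\Delta(\sigma,\sigma')$. Applying the induction hypothesis to the pair $(\sigma, \sigma'')$ gives $A(\sigma'') \le \alpha^{\delta-1} A(\sigma) + \beta \sum_{i=0}^{\delta-2}\alpha^i$, and applying $(\alpha,\beta,1)$-smoothness to $(\sigma'', \sigma')$ gives $A(\sigma') \le \alpha \cdot A(\sigma'') + \beta$. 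Chaining these two inequalities yields
\[
A(\sigma') \le \alpha\left(\alpha^{\delta-1} A(\sigma) + \beta\sum_{i=0}^{\delta-2}\alpha^i\right) + \beta = \alpha^\delta A(\sigma) + \beta\sum_{i=1}^{\delta-1}\alpha^i + \beta = \alpha^\delta A(\sigma) + \beta\sum_{i=0}^{\delta-1}\alpha^i,
\]
which is the desired bound for distance $\delta$. This completes the induction.

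There is essentially no obstacle here; the only point requiring a moment's care is the existence of the intermediate sequence $\sigma''$, which follows immediately from the definition of edit distance as a minimum number of single-symbol edit operations (one can peel off one operation at a time). One should also note the tacit use of monotonicity of the bound in $A(\sigma)$, i.e.\ that $\alpha \ge 0$, which holds since $\alpha$ is a multiplicative factor on a nonnegative quantity; if one wants to be fully rigorous about applying the $\delta=1$ smoothness with $\sigma''$ on the left and then substituting the induction bound, one uses $\alpha \ge 0$ so that the inequality direction is preserved.
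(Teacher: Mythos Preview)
Your proof is correct and follows essentially the same approach as the paper: induction on $\delta$, peeling off one edit operation to obtain an intermediate sequence, and chaining the $(\alpha,\beta,1)$-smoothness bound with the induction hypothesis. Your write-up is slightly more careful about justifying the existence of the intermediate sequence and the use of $\alpha \ge 0$, but the argument is otherwise identical.
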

\begin{SHORT}
All proofs can be found in the full version of this paper~\cite{smoothness_full}.
\end{SHORT}

\nvsp
\begin{FULL}
\ifthenelse{\boolean{proof-lemonetodelta}}{
\begin{proof}
By induction on $\delta$. The case $\delta=1$ is trivial. Assume the hypothesis is true for $1< \delta \le h$.  Let $\sigma_{h+1}$ and $\sigma$ be any pair of sequences such that $\Delta(\sigma,\sigma_{h+1})=h+1$. Then there exists a sequence $\sigma_h$ such that $\Delta(\sigma,\sigma_h)=h$ and  $\Delta(\sigma_h,\sigma_{h+1})=1$. Since $A$ is  $(\alpha,\beta,1)$-smooth, then 
$A(\sigma_{h+1})\le \alpha A(\sigma_h)+\beta$.  By the inductive hypothesis,
 $A(\sigma_h)\le \alpha^h A(\sigma)+\beta\sum_{i=0}^{h-1}\alpha^i$. Therefore,  $A(\sigma_{h+1})\le \alpha(\alpha^h A(\sigma)+\beta\sum_{i=0}^{h-1}\alpha^i)+\beta=\alpha^{h+1} A(\sigma)+\beta\sum_{i=0}^{h}\alpha^i$, and thus $A$ is  $(\alpha^\delta,\beta\sum_{i=0}^{\delta-1}\alpha^i)$-smooth.\looseness=-1\qed
\end{proof}}{}
\end{FULL}
\begin{restatable}[]{corollary}{lemcoronetodelta}
\label{cor:1delta}
If $A$ is $(1,\beta,1)$-smooth, then $A$ is  $(1,\delta\beta)$-smooth. 
\end{restatable}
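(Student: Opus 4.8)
The plan is to obtain this statement as an immediate specialization of the preceding lemma. That lemma asserts that $(\alpha,\beta,1)$-smoothness implies $(\alpha^\delta,\beta\sum_{i=0}^{\delta-1}\alpha^i)$-smoothness, so the natural first (and only) step is to substitute $\alpha = 1$. Then $\alpha^\delta = 1$ for every $\delta$, and the sum collapses to $\beta\sum_{i=0}^{\delta-1}1^i = \delta\beta$. Hence $A$ is $(1,\delta\beta)$-smooth, which is exactly the claim.

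Since this is a direct corollary of a result already proved in the excerpt, there is no real obstacle; the argument is a one-line instantiation. The only point worth a remark is the degenerate case $\delta = 0$: there the empty sum yields the bound $A(\sigma')\le A(\sigma)$, which is consistent because $\Delta(\sigma,\sigma') = 0$ forces $\sigma = \sigma'$. For completeness one could instead give a standalone proof by induction on $\delta$, mirroring the structure of the lemma's proof but carrying the multiplicative factor $1$ throughout, so that the additive constant simply accumulates as $\beta, 2\beta, \ldots, \delta\beta$; however, invoking the lemma directly is cleaner and avoids duplicating that induction.
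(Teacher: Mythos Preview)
Your proposal is correct and matches the paper's approach: the statement is presented as an immediate corollary of the preceding lemma, obtained by setting $\alpha=1$ so that $\alpha^\delta=1$ and $\beta\sum_{i=0}^{\delta-1}\alpha^i=\delta\beta$. The paper does not give a separate proof beyond this, so your one-line instantiation is exactly what is intended.
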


\begin{FORPROOFSONLYEXCLUDETHIS}

\begin{SHORT}\vspace{-4mm}\end{SHORT}
\section{Smoothness of Deterministic Paging Algorithms}
\begin{SHORT}\vspace{-2mm}\end{SHORT}

\subsection{Bounds on the Smoothness of Deterministic Paging Algorithms}
\nvsp

Before considering particular deterministic online algorithms, we determine upper and lower bounds for several important classes of algorithms.
Many natural algorithms are demand paging. 
\nvsp
\end{FORPROOFSONLYEXCLUDETHIS}
\begin{restatable}[Lower bound for deterministic, demand-paging algorithms]{theorem}{thmlowerbounddemandpaging}%
\label{thm:lowerbounddemandpaging}%
No deterministic,\hspace{1pt}demand-paging algorithm is $(1{,}\hspace{0.025pt}\delta (k{+}1{-}\epsilon))$-smooth for any~$\epsilon\!>\!0$.
\end{restatable}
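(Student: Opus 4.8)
The plan is to exhibit, for every deterministic demand-paging algorithm $A$ and every $\delta$, a pair of sequences $\sigma, \sigma'$ with $\Delta(\sigma,\sigma')\le\delta$ such that $A(\sigma')$ exceeds $A(\sigma) + \delta(k+1-\epsilon)$ for large enough instances, which contradicts $(1,\delta(k+1-\epsilon))$-smoothness. By Corollary~\ref{cor:1delta} it suffices to handle $\delta=1$ and then amplify: I will build sequences so that a single insertion forces $k+1$ extra faults on a block, and repeat the construction $\delta$ times (concatenating $\delta$ such blocks, each far apart in the page universe so they do not interact) to get the $\delta(k+1-\epsilon)$ bound. Actually the cleanest route is to directly construct, for arbitrary $N$, a sequence on which $A$ faults few times but whose perturbation (by one request) makes $A$ fault roughly $k+1$ times more often per repetition; letting $N\to\infty$ drives the ratio of extra faults to $\delta$ arbitrarily close to $k+1$, beating $k+1-\epsilon$.

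The core building block is the standard adversary argument against deterministic paging restricted to a working set of $k+1$ pages $\{p_1,\dots,p_{k+1}\}$. Over this set, the adversary can always name the one page currently \emph{not} in $A$'s cache, forcing a fault on every request; call this "evil" subsequence $E_N$ of length $N$. I will pick $\sigma$ to be a sequence on which $A$ is well-behaved — e.g.\ a long run that cycles through only $k$ pages $\{p_1,\dots,p_k\}$, after an initial fill, so $A(\sigma)\le k$ — and $\sigma'$ to be obtained from $\sigma$ by inserting a single request to $p_{k+1}$ at the start (distance $1$), which evicts one of $A$'s cached pages and, crucially, desynchronizes $A$ from the subsequent pattern so that it faults on a constant fraction of the remaining $N$ requests. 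The subtlety is that a fixed periodic $\sigma$ only buys a bounded number of extra faults; to force $\Theta(N)$ extra faults I instead let $\sigma'$ be adaptively the "evil" sequence for $A$ on $\{p_1,\dots,p_{k+1}\}$ and let $\sigma$ be the same sequence with one request deleted — but then I must argue $A(\sigma)$ is small, which it is not in general.

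The right resolution, and the step I expect to be the real obstacle, is a \emph{relative} adversary argument: rather than making $A$ good on $\sigma$ in absolute terms, I make $A(\sigma')/A(\sigma)$ close to $1$ while the \emph{difference} $A(\sigma')-A(\sigma)$ is at least $(k+1-\epsilon)$ per unit of edit distance. Concretely, for each of the $\delta$ blocks I take a long segment where $A$ is forced to fault on essentially every request regardless (so both $\sigma$ and $\sigma'$ incur $\approx M$ faults there, keeping the multiplicative factor at $1$), and I insert at the block boundary a single page that shifts the phase of the forced-fault pattern so that $\sigma'$ picks up an additional $k+1$ faults before resynchronizing — this is exactly where demand-paging is used: after the inserted fault, $A$'s cache differs from what it "should" be on exactly one page, and the adversary can exploit this one-page discrepancy for $k+1$ more forced faults (one to expose each of the $k$ resident pages plus the discrepancy) before the states recouple. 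Summing over $\delta$ independent blocks and letting $M\to\infty$ gives $A(\sigma')\ge (1-o(1))A(\sigma) + \delta(k+1)$, hence $A$ is not $(1,\delta(k+1-\epsilon))$-smooth. The delicate parts to get exactly right are (i) showing the phase shift genuinely yields $k+1$ and not merely $k$ extra faults, which requires carefully counting the transient, and (ii) ensuring the blocks are truly non-interacting so the extra faults add up across the $\delta$ insertions without the later blocks' adversary being "helped" by residual discrepancies from earlier ones.
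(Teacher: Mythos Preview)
Your final ``relative adversary'' approach has a genuine gap, and in fact your second, abandoned attempt was much closer to the mark.

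The problem with the block construction is twofold. First, if on the long segment $A$ already faults on essentially every request (as you stipulate, to keep the multiplicative factor at $1$), then inserting one page at the boundary can add at most one fault on the common part of the sequences: there are no hits left to convert into misses. So the ``phase shift picks up $k+1$ extra faults before resynchronizing'' cannot happen; the only extra fault is the inserted request itself, giving a difference of $\delta$, not $\delta(k+1)$. Second, the resynchronization claim (``after $k+1$ forced faults the states recouple'') is simply false for general deterministic demand-paging algorithms; \FIFO is a counterexample, as the paper later shows that a single edit can permanently desynchronize \FIFO by a multiplicative factor of $k$. Demand-paging gives you no recoupling guarantee.

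Go back to your second idea: let $\sigma'$ be the adversarial sequence on $k+1$ pages that forces $A$ to fault on every request, and obtain $\sigma$ by deletion. You abandoned this because deleting \emph{one} request does not make $A(\sigma)$ small. The paper's key insight is to delete \emph{all} occurrences of a single page, namely the least-frequent one. If $\sigma'$ has length $k+\delta(k+1)$ and uses $k+1$ pages, by pigeonhole the least-frequent page appears at most $\lfloor (k+\delta(k+1))/(k+1)\rfloor = \delta$ times, so the edit distance is at most $\delta$. The resulting $\sigma$ uses only $k$ distinct pages, and \emph{this} is where demand-paging is used: $A(\sigma)=k$. Then $A(\sigma')-A(\sigma) = \delta(k+1)$, which rules out $(1,\delta(k+1-\epsilon))$-smoothness for any $\epsilon>0$.
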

\nvsp
\begin{SHORT}
The idea of the proof is to first construct a sequence of length $k+\delta(k+1)$ containing $k+1$ distinct pages, such that $A$ faults on every request.
This sequence can then be transformed into a sequence containing only $k$ distinct pages by removing all requests to the page that occurs least often, which reduces the overall number of misses to $k$, while requiring at most $\delta$ changes.
\end{SHORT}
\begin{FULL}
\ifthenelse{\boolean{proof-thmlowerbounddemandpaging}}{
\begin{proof}
Let $A$ be any deterministic, demand-paging algorithm. 
Using $k+1$ distinct pages, we can construct a sequence $\sigma_A(\delta)$ of length $k+\delta(k+1)$ such that $A$ faults on every request: first request the $k+1$ distinct pages in any order; then arbitrarily extend the sequence by requesting the page that $A$ has just evicted. 
Let $p$ be the page that occurs least frequently in $\sigma_A(\delta)$.
By removing all requests to $p$ from $\sigma_A(\delta)$, we obtain a sequence $\sigma'_A(\delta)$ that consists of $k$ distinct pages only.
By assumption $A$ is demand paging. 
Thus, $A$ incurs only $k$ page faults on the entire sequence.
Assume for a contradiction that $A$ is $(1, \delta(k+1-\epsilon))$-smooth for some $\epsilon > 0$.
Then, we have by definition:
\[A(\sigma_A(\delta)) \leq 1\cdot A(\sigma'_A(\delta)) + \Delta(\sigma'_A(\delta), \sigma_A(\delta))\cdot(k+1-\epsilon).\]
Clearly, $p$ occurs at most $\left\lfloor \frac{k+\delta(k+1)}{k+1} \right\rfloor = \delta$ times in $\sigma_A(\delta)$.
So $\Delta(\sigma'_A(\delta), \sigma_A(\delta)) \leq \delta$, and we get:
\begin{align*}
 k+\delta(k+1) &\leq  k + \delta\cdot(k+1-\epsilon)\\
\Leftrightarrow \epsilon &\leq 0, 
\end{align*}
which contradicts the assumption that $\epsilon > 0$.\qed
\end{proof}}
{
The idea of the proof is to first construct a sequence of length $k+\delta(k+1)$ containing $k+1$ distinct pages, such that $A$ faults on every request.
This sequence can then be transformed into a sequence containing only $k$ distinct pages by removing all requests to the page that occurs least often, which reduces the overall number of misses to $k$, while requiring at most $\delta$ changes.
}
\end{FULL}
\begin{FORPROOFSONLYEXCLUDETHIS}
While most algorithms are demand paging, it is not a necessary condition for an algorithm to be competitive, as demonstrated by \FWF.
However, we obtain the same lower bound for competitive algorithms as for demand-paging ones.
\nvsp
\end{FORPROOFSONLYEXCLUDETHIS}
\begin{restatable}[Lower bound for deterministic, competitive paging algorithms]{theorem}{thmlowerboundcompetitive}
\label{thm:lowerboundcompetitive}
No deterministic, competitive paging algorithm is $(1, \delta(k+1-\epsilon))$-smooth for any $\epsilon > 0$.
\end{restatable}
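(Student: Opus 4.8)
The plan is to mimic the proof of Theorem~\ref{thm:lowerbounddemandpaging}, replacing the use of the demand-paging property by competitiveness. Let $A$ be a deterministic, $c$-competitive paging algorithm with additive constant $\beta$, so that $A(\sigma)\le c\cdot\opt(\sigma)+\beta$ for every $\sigma$; fix a value $\delta$ to be chosen large at the end, and assume without loss of generality $\epsilon<1$ (so $k+1-\epsilon>0$).

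First I would construct, using $k+1$ distinct pages, a sequence $\tau=\tau_A(\delta)$ of length $k+\delta(k+1)$ on which $A$ faults on every request. The construction from Theorem~\ref{thm:lowerbounddemandpaging} ("request the page $A$ just evicted") needs a small fix, since a non-demand-paging $A$ may not evict anything on a fault with a non-full cache: instead, after any prefix, $A$'s cache holds at most $k$ of the $k+1$ pages, so at least one of them is missing, and requesting that page forces a fault. Hence $A(\tau)=k+\delta(k+1)$. As in the original proof, a least frequent of the $k+1$ pages, say $p$, occurs at most $\lfloor (k+\delta(k+1))/(k+1)\rfloor=\delta$ times, so deleting all its occurrences yields a sequence $\tau'$ on only $k$ distinct pages with $\Delta(\tau,\tau')\le\delta$.

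The one genuinely new step is bounding $A(\tau')$. Since $\tau'$ uses only $k$ distinct pages, $\opt(\tau')\le k$, and therefore $A(\tau')\le ck+\beta$ --- a constant independent of $\delta$, as opposed to exactly $k$ in the demand-paging case. Now assume for contradiction that $A$ is $(1,\delta(k+1-\epsilon))$-smooth, and instantiate the definition with $\sigma=\tau'$ and $\sigma'=\tau$: this gives $k+\delta(k+1)=A(\tau)\le A(\tau')+\Delta(\tau,\tau')\cdot(k+1-\epsilon)\le ck+\beta+\delta(k+1-\epsilon)$, which rearranges to $\delta\epsilon\le(c-1)k+\beta$. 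Choosing $\delta>((c-1)k+\beta)/\epsilon$ contradicts this, proving the claim.

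The only delicate point --- the "main obstacle" --- is exactly that the bound on $A(\tau')$ is now a $\delta$-independent constant rather than the exact value $k$. This is harmless: the additive term $\delta(k+1-\epsilon)$ allowed by the purported smoothness bound falls short of the true increase $\delta(k+1)$ by $\delta\epsilon$, which grows without bound in $\delta$ and hence eventually swamps any fixed slack of size $(c-1)k+\beta$. Besides this, one only needs the standard conventions used throughout (empty initial cache, or choosing the $k+1$ pages outside $A$'s initial cache) so that the first $k+1$ requests are indeed all faults.
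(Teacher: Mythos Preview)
Your proof is correct and follows essentially the same route as the paper's: construct a fault-on-every-request sequence of length $k+\delta(k+1)$ over $k+1$ pages, delete the rarest page to drop to $k$ distinct pages at edit distance at most $\delta$, bound $A(\tau')\le ck+\beta$ via competitiveness, and derive $\delta\epsilon\le(c-1)k+\beta$, contradicted for large~$\delta$. Your explicit fix to the adversary construction (request any uncached page among the $k+1$, rather than ``the page just evicted'') is a genuine improvement over the paper's phrasing, which tacitly reuses the demand-paging construction without addressing that a non-demand-paging algorithm need not evict on every fault.
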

\nvsp
\begin{FULL}
\ifthenelse{\boolean{proof-thmlowerboundcompetitive}}{
\begin{proof}
Let $A$ be any $c$-competitive deterministic online paging algorithm. The proof is essentially the same as the one for Theorem~\ref{thm:lowerbounddemandpaging} with the number of faults of $A$ on $\sigma'_A(\delta)$ being at most $ck + \beta$, for some constant $\beta$. This follows from the competitiveness of $A$ and the fact that $OPT$ makes at most $k$ faults on $\sigma'_A(\delta)$.
	Assuming for a contradiction that $A$ is $(1, \delta(k+1-\epsilon))$-smooth for some $\epsilon$, we get
\begin{align*}
	A(\sigma_A(\delta)) & \leq 1\cdot A(\sigma'_A(\delta)) + \Delta(\sigma'_A(\delta), \sigma_A(\delta))\cdot(k+1-\epsilon)\\
	\Rightarrow k+\delta(k+1) & \leq ck + \beta + \delta\cdot(k+1-\epsilon)\\
	\Rightarrow \delta\epsilon & \leq (c-1)k+\beta
\end{align*}
For any $\epsilon > 0$, there is a $\delta_m$ such that any $\delta > \delta_m$ contradicts the above inequality.
By a slight generalization of Corollary~\ref{cor:1delta} this implies that the algorithm is not $(1,\delta(k+1-\epsilon),\delta)$-smooth for any $\delta$ and $\epsilon > 0$.
\qed
\end{proof}}{}
\end{FULL}
\begin{FORPROOFSONLYEXCLUDETHIS}
By contraposition of Corollary~\ref{cor:1delta}, the two previous theorems show that no deterministic, demand-paging or competitive algorithm is $(1, k+1-\epsilon, 1)$-smooth for any $\epsilon > 0$.


Intuitively, the optimal offline algorithm should be very smooth, and this is indeed the case as we show next:
\end{FORPROOFSONLYEXCLUDETHIS}
\begin{restatable}[Smoothness of OPT]{theorem}{thmoptsmoothness}
\label{thm:optsmoothness}
\opt is $(1,2\delta)$-smooth. This is tight.%
\end{restatable}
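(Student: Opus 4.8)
The plan is to prove the two inequalities $\opt(\sigma') \le \opt(\sigma) + 2\delta$ and $\opt(\sigma) \le \opt(\sigma') + 2\delta$ separately; by symmetry of the edit distance it suffices to show that a single edit operation (substitution, insertion, or deletion) changes the optimal cost by at most $2$, and then invoke Corollary~\ref{cor:1delta} (in its two-sided form) to lift this to arbitrary $\delta$. So first I would reduce to the case $\Delta(\sigma,\sigma') = 1$ and aim to show $|\opt(\sigma) - \opt(\sigma')| \le 2$.

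For the single-edit case, the cleanest approach is a simulation/surgery argument on an optimal schedule. Suppose $\sigma'$ is obtained from $\sigma$ by inserting one request to a page $p$ at some position $i$. Take an optimal offline schedule $S$ for $\sigma$. I build a schedule $S'$ for $\sigma'$ that mimics $S$ on all requests inherited from $\sigma$, and handles the new request to $p$ as follows: just before serving it, evict some page (say the one $S$ would evict next, or any page) to make room for $p$ if $p$ is not already cached, serve $p$, and then immediately on the following request restore the original cache contents of $S$ (fetching back whatever was displaced). This costs at most one extra fault for bringing in $p$ and at most one extra fault for restoring the displaced page — hence $\opt(\sigma') \le \opt(\sigma) + 2$. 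The substitution case is analogous (treat it as handling one request differently, again costing at most $2$ to deviate and resynchronize), and deletion is the reverse of insertion. The key invariant to maintain is that after the ``repair'' step the cache state of $S'$ agrees with that of $S$, so the remainder of the simulation incurs no further excess.

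The main obstacle — and the place to be careful — is making the surgery argument airtight with respect to \emph{when} exactly the caches resynchronize and ensuring the excess is genuinely bounded by $2$ rather than leaking further down the sequence. One has to handle edge cases: the edit occurring at the very end of the sequence (no ``next request'' to repair on, but then there is nothing left to pay for either), the inserted page $p$ already being in the cache (then only the restore, or even nothing, is needed), and the interaction with \opt possibly not being demand paging — though since \opt can be assumed demand paging without loss (as noted in the paper, citing~\cite{borodin98}), this is not a real complication. An alternative, perhaps slicker route for the insertion direction is to observe that $\sigma$ is a subsequence of $\sigma'$, so any schedule for $\sigma'$ restricted to $\sigma$'s requests is a valid schedule for $\sigma$ of no greater cost, giving $\opt(\sigma) \le \opt(\sigma')$ for free in that direction; only the reverse direction then needs the explicit two-fault repair.

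Finally, for tightness I would exhibit a concrete pair $\sigma, \sigma'$ with $\Delta(\sigma,\sigma') = \delta$ achieving the bound: e.g., with $k+1$ pages, interleaving $\delta$ copies of a ``probe'' page into a sequence so that each insertion forces \opt to fault both on the probe and on the re-fetch of an evicted page, while the original sequence had \opt faulting only $O(1)$ times overall. Concretely one takes $\sigma$ on $k$ pages (constant optimal cost) and inserts $\delta$ requests to a $(k{+}1)$-st page spaced so that each genuinely costs $\opt$ two faults, yielding $\opt(\sigma') = \opt(\sigma) + 2\delta$; this matches the additive term and shows $(1, 2\delta)$ cannot be improved to $(1, 2\delta - \epsilon)$.
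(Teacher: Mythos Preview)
Your plan is correct and matches the paper's approach at the high level: reduce to a single edit, run a simulation of an optimal schedule for $\sigma$ on $\sigma'$ paying at most two extra fetches, invoke Corollary~\ref{cor:1delta}, and exhibit the tight example with a $(k{+}1)$-st probe page inserted $\delta$ times into a sequence on $k$ pages.

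The one place where your sketch diverges from the paper, and where your own misgivings are justified, is the ``eager restore'' step. Proactively fetching back the displaced page on the very next request does not compose cleanly with what $S$ itself does on that request (e.g., if $S$ happens to evict the same page you want to restore, or if the next request is to $p$ and $S$ misses there). The paper avoids this by \emph{not} resynchronizing eagerly: it lets the simulated cache $C_A$ differ from $C_{\opt}$ in at most one page and maintains the invariant that either (i) $A$ has incurred no excess and $|C_A \,\triangle\, C_{\opt}| \le 2$, or (ii) $A$ has incurred exactly one excess fault and $C_A = C_{\opt}$. A short case analysis on each subsequent request (hit/hit, miss/miss, hit/miss, miss/hit) shows the invariant is preserved, and the second excess fault is paid only at the moment $A$ is forced to fetch the one page it lacks. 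This lazy approach is what makes the argument airtight without appealing to non-demand-paging moves; your eager version can be made to work too, but you would end up re-deriving essentially the same case analysis. For tightness, the paper's concrete instance is $\sigma = (1,\dots,k,1,\dots,k)^{\delta+1}$ versus $\sigma' = (1,\dots,k,1,\dots,k,x)^\delta(1,\dots,k,1,\dots,k)$, which realizes exactly the spacing you describe.
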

\begin{FULL}
\ifthenelse{\boolean{proof-thmoptsmoothness}}{\begin{proof}
For the lower bound consider the following two sequences $\sigma_\delta$ and $\sigma'_\delta$ with $\Delta(\sigma_\delta,\sigma'_\delta)~=~\delta$: $\sigma_\delta = (1, \dots, k, 1, \dots, k)^{\delta+1}$ and $\sigma_\delta' = (1, \dots, k, 1, \dots, k, x)^\delta(1, \dots, k, 1, \dots, k)$, where $\sigma^l$ denotes the concatenation of $l$ copies of $\sigma$ and $x \not\in \{1, \dots, k\}$.
Clearly, $\opt(\sigma_\delta) = k$ as $\sigma_\delta$ contains only $k$ distinct pages.
Further, under optimal replacement every request to $x$ faults in $\sigma'_\delta$ and it replaces one of the pages $1, \dots, k$, which results in an additional fault later on.
So, $\opt(\sigma'_\delta) \geq k + 2\delta$.

We show that $\opt$ is $(1,2,1)$-smooth, which implies the theorem by Corollary~\ref{cor:1delta}. Let $\sigma$ and $\sigma'$ be two sequences such that $\Delta(\sigma,\sigma')=1$.
We will show that there exists an algorithm $A$ such that $A(\sigma')\le OPT(\sigma)+2$, from which the theorem follows since $\opt(\sigma')\le A(\sigma')$.
Let $A$ be an offline paging algorithm serving $\sigma'$. On the equal prefix of $\sigma'$ and $\sigma$, A will act exactly as $\opt$ does on $\sigma$. This implies that right before the difference the caches $C_{\opt}$ of $\opt$ and $C_{A}$ of $A$ have the same pages. No matter what the difference between the sequences is, the different request can only make the caches of $A$ and $\opt$ differ by at most one page: if it is an insertion of $p$ in $\sigma'$ that results in a hit for $A$ then $C_{\opt}=C_{A}$, otherwise $p$ evicts a page $q$ and fetches $p$. If the difference is a deletion of $p$ from $\sigma'$ then if $p$ is a hit $C_{\opt}=C_{A}$, and otherwise $\opt$ evicts some page $q$ and fetches $p$. If the difference is a substitution of $p$ in $\sigma$ by $r$ in $\sigma'$, then if either of the requests is a hit this is equivalent to the cases above, and if they are both misses, $A$ evicts what $\opt$ evicts. In all cases, after the difference either  $C_{A}=C_{\opt}$, or $C_{A}=(C_{\opt}\setminus \{q\}) \cup \{p\}$ for some  pages $p$ and $q$, with $p\ne q$. 
At this point the number of faults of $A$ exceeds those of $\opt$ by at most one. We now show that $A$ can manage to incur at most one more fault than $\opt$ in the rest of the sequence.
 
 Let $\rho$ be the suffix of $\sigma$ and $\sigma'$ after the difference and let $A_i(\rho)$ and $\opt_i(\rho)$ denote the number of faults of the algorithms on the suffix up to request $i$.
We now claim that $A$ can be such that after every request $\rho_i$ either (1) $A_i(\rho)\le \opt_i(\rho)$ and either $C_A=C_\opt$ or $C_{A}=(C_{\opt}\setminus \{q\}) \cup \{p\}$, for some pages $p$ and $q$ with $p\ne q$, or (2) $A_i(\rho)= \opt_i(\rho)+1$ and $C_{A}=C_{\opt}$. If at any point $C_{A}=C_{\opt}$, then $A$ acts like $\opt$ for the rest of the suffix and the claim is true.

We show that this invariant holds after every request $\rho_i$, which implies that $A(\rho)\le \opt(\rho)+1$ and hence that $A(\sigma')\le \opt(\sigma)+2$.  

Initially $A_i(\rho)= \opt_i(\rho)$ and assume that $C_{A}=(C_{\opt}\setminus \{q\}) \cup \{p\}$. $A$ acts as follows on $\rho_i$:
\begin{itemize}
\item If $\rho_i\in C_A$ and $\rho_i\in C_{\opt}$, $A$ does nothing and the invariant holds.
\item If $\rho_i\in C_A$ and $\rho_i\notin C_{\opt}$, then $\rho_i=p$ and $A$ does nothing. It holds that $A_i(\rho)\le \opt_i(\rho)$. If $\opt$ evicts a page $q'\ne q$, then $C_{A}=(C_{\opt}\setminus \{q\}) \cup \{q'\}$ and the invariant holds. If $\opt$ evicts $q$ then both caches are equal and the claim is true.
\item If $\rho_i\notin C_A$ and $\rho_i\notin C_{\opt}$, $A_i(\rho)\le \opt_i(\rho)$ still holds. If $\opt$ evicts $q$, $A$ evicts $p$, and the caches are equal. Otherwise $A$ evicts the same page as $\opt$ and $C_{A}=(C_{\opt}\setminus \{q\}) \cup \{p\}$.
\item If $\rho_i\notin C_A$ and $\rho_i\in C_{\opt}$, then $\rho_i=q$. $A_i(\rho)=\opt_i(\rho)+1$, $A$ evicts $p$ and $C_{A}=C_{\opt}$. \qed
\end{itemize}
\end{proof}}{}
\end{FULL}
\begin{FORPROOFSONLYEXCLUDETHIS}
With Theorem~\ref{thm:optsmoothness} it is easy to show the following upper bound on the smoothness of any competitive algorithm:
\end{FORPROOFSONLYEXCLUDETHIS}
\begin{restatable}[Smoothness of competitive algorithms]{theorem}{thmcompetitivesmoothness}
\label{thm:competitivesmoothness}
Let $A$ be any paging algorithm such that for all sequences $\sigma$, $A(\sigma) \le c\cdot \opt(\sigma) +\beta$. Then $A$ is $(c,2\delta c+\beta)$-smooth. 
\end{restatable}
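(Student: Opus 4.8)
The plan is to chain together the competitiveness assumption on $A$ with the smoothness of \opt\ established in Theorem~\ref{thm:optsmoothness}. Let $\sigma, \sigma'$ be any pair of request sequences with $\Delta(\sigma,\sigma') = \delta$. The first step is to apply the hypothesis directly to $\sigma'$: we have $A(\sigma') \le c\cdot\opt(\sigma') + \beta$. This reduces the problem to bounding $\opt(\sigma')$ in terms of $\opt(\sigma)$, which is exactly what Theorem~\ref{thm:optsmoothness} provides, since \opt\ is a genuine paging algorithm and the smoothness definition applies to it.

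The second step is to invoke Theorem~\ref{thm:optsmoothness}, which states that \opt\ is $(1, 2\delta)$-smooth, i.e., $\opt(\sigma') \le 1\cdot\opt(\sigma) + 2\delta$ whenever $\Delta(\sigma,\sigma')\le\delta$. Substituting this into the inequality from the first step yields
\begin{align*}
A(\sigma') &\le c\cdot\opt(\sigma') + \beta \\
&\le c\bigl(\opt(\sigma) + 2\delta\bigr) + \beta \\
&= c\cdot\opt(\sigma) + 2\delta c + \beta.
\end{align*}
The final step is to replace $\opt(\sigma)$ by $A(\sigma)$: since \opt\ is the optimal offline algorithm, $\opt(\sigma) \le A(\sigma)$ for every sequence $\sigma$, so $c\cdot\opt(\sigma) \le c\cdot A(\sigma)$ (using $c \ge 1$, which holds for any competitive paging algorithm). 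This gives $A(\sigma') \le c\cdot A(\sigma) + 2\delta c + \beta$, which is precisely the claim that $A$ is $(c, 2\delta c + \beta)$-smooth.

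There is really no hard part here — the statement is essentially a composition lemma, and the only thing to be slightly careful about is the direction of the inequality $\opt(\sigma) \le A(\sigma)$ and the fact that we need $c \ge 1$ to push the factor $c$ through it (any paging algorithm trivially faults at least as often as \opt, and no competitive algorithm can have ratio below $1$, so this is harmless). One might also note that the argument is uniform in $\delta$, so it directly establishes $(\alpha,\beta)$-smoothness in the functional sense of the paper with $\alpha(\delta) = c$ constant and $\beta(\delta) = 2\delta c + \beta$, without needing Corollary~\ref{cor:1delta} to lift from $\delta=1$.
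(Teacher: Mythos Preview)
Your proof is correct and follows exactly the same chain as the paper: apply competitiveness to $\sigma'$, invoke Theorem~\ref{thm:optsmoothness} to bound $\opt(\sigma')$ by $\opt(\sigma)+2\delta$, and then use $\opt(\sigma)\le A(\sigma)$. One tiny remark: the last step only requires $c\ge 0$, not $c\ge 1$, since multiplying the inequality $\opt(\sigma)\le A(\sigma)$ by any nonnegative constant preserves it.
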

\begin{FULL}
\ifthenelse{\boolean{proof-thmcompetitivesmoothness}}{\begin{proof}
Let $\sigma'$ be a sequence such that $\Delta(\sigma,\sigma')=\delta$. By Theorem~\ref{thm:optsmoothness}, $\opt(\sigma')\le \opt(\sigma)+2\delta$. Therefore, $A(\sigma') \le c\cdot (\opt(\sigma) +2\delta) +\beta \le  c\cdot A(\sigma) +2\delta c +\beta$.\qed
\end{proof}}{}
\end{FULL}

\begin{FORPROOFSONLYEXCLUDETHIS}
Note that the above theorem applies to both deterministic and randomized algorithms.
Given that every competitive algorithm is $(\alpha, \beta)$-smooth for some $\alpha$ and $\beta$, the natural question to ask is whether the converse also holds.
Below, we answer this question in the affirmative for deterministic bounded-memory, demand-paging algorithms.
By \emph{bounded memory} we mean algorithms that, in addition to the contents of their fast memory, only have a finite amount of additional state.
For a more formal definition consult \cite[page 93]{borodin98}.
Paging algorithms implemented in hardware caches are bounded memory.
\end{FORPROOFSONLYEXCLUDETHIS}
\begin{FULL}
\ifthenelse{\boolean{proof-thmboundedmemorysmooth}}{\begin{FORPROOFSONLYEXCLUDETHIS}
Our proof requires the notion of a $k$-phase partition:
\end{FORPROOFSONLYEXCLUDETHIS}
\begin{definition}[k-phase partition]
\label{def:phasepartition}
The $k$-phase partition of a sequence $\sigma$ is a partition of $\sigma$ into contiguous subsequences called $k$-phases, or simply phases, such that the first phase starts with the first request of $\sigma$ and a new phase starts when $(k+1)$ distinct pages have been requested since the beginning of the previous phase.
\end{definition}}{}
\end{FULL}

\begin{restatable}[Competitiveness of smooth algorithms]{theorem}{thmboundedmemorysmooth}
\label{thm:boundedmemorysmooth}
If algorithm $A$ is deterministic bounded-memory, demand-paging, and $(\alpha, \beta)$-smooth for some $\alpha$~and~$\beta$, then $A$ is also competitive.
\end{restatable}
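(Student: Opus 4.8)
The plan is to prove the contrapositive: assuming $A$ is deterministic, bounded-memory, demand-paging and \emph{not} competitive, I will show $A$ is not $(\alpha,\beta)$-smooth for \emph{any} functions $\alpha,\beta$. Concretely, I will produce a fixed constant $d$ and, for arbitrarily large $N$, sequences $\sigma_N,\sigma'_N$ with $\Delta(\sigma_N,\sigma'_N)$ equal to $d$, with $A(\sigma_N)\le k$, and with $A(\sigma'_N)\ge N$. Since then $A(\sigma'_N)$ is unbounded while $A(\sigma_N)$ stays bounded over pairs at the fixed distance $d$, no choice of $\alpha(d),\beta(d)$ can satisfy the inequality $A(\sigma'_N)\le\alpha(d)A(\sigma_N)+\beta(d)$ required by $(\alpha,\beta)$-smoothness.

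First I would record a phase-counting fact: if $A$ is not competitive, then for every $N$ some request sequence contains a $k$-phase during which $A$ faults more than $N$ times. Otherwise some $B$ bounds $A$'s faults per $k$-phase, so $A(\sigma)\le mB$ where $m$ is the number of $k$-phases of $\sigma$; pairing phases as $(P_1,P_2),(P_3,P_4),\dots$ and using that $P_{2j-1}$ together with the first request of $P_{2j}$ spans $k+1$ distinct pages --- hence forces at least one \opt{} fault, over $\lfloor m/2\rfloor$ pairwise disjoint windows --- gives $\opt(\sigma)\ge(m-1)/2$, whence $A(\sigma)\le 2B\cdot\opt(\sigma)+B$, contradicting non-competitiveness.

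Next, and this is the crux, I would extract a finite \emph{bad gadget}. Fix a sequence with such a bad $k$-phase $P$. Throughout $P$ only the $k$ pages of $P$ plus the at most $k$ further pages sitting in $A$'s cache when $P$ begins are ever loaded or evicted, so during $P$ the configuration of $A$ (its cache contents together with its $O(1)$-bit internal state) ranges over a set of size at most some $M$ depending only on $k$ and on $A$. As $A$ faults more than $N>M$ times in $P$, some configuration $c$ occurs as the configuration immediately before a fault at least twice; taking the two earliest such occurrences yields a subsequence $\rho$ of $P$ with $c\xrightarrow{\rho}c$ that incurs at least one fault and uses at most $k$ distinct pages, and choosing it shortest makes $|\rho|$ bounded. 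Also $c$ is reachable from the empty cache by a sequence $\pi$ of bounded length, since bounded memory makes the configuration space finite up to renaming of pages, so each reachable configuration has a bounded shortest reaching sequence. There being only finitely many triples $(\pi,\rho,c)$ up to renaming, one fixed gadget serves infinitely many $N$.

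For those $N$, set $\sigma'_N:=\pi\,\rho^N$ and $\sigma_N:=\rho^N$. Then $A(\sigma'_N)\ge N$, since each of the $N$ copies of $\rho$ brings $A$ back to $c$ and causes at least one fault; and $A(\sigma_N)\le k$, because a demand-paging algorithm started on the empty cache and fed only requests to a set of at most $k$ distinct pages never faults with a full cache, hence never evicts, so its cache only grows and it faults at most once per distinct page. Finally $\Delta(\sigma_N,\sigma'_N)\le|\pi|$, a fixed constant, so over the infinitely many chosen $N$ the distance takes some fixed value $d$, and along those $N$ we get $A(\sigma'_N)\to\infty$ while $A(\sigma_N)\le k$ --- the desired contradiction with smoothness. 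The main obstacle is exactly the gadget extraction: making rigorous that bounded memory confines the "bad" recurrence to an effectively finite configuration space, and that the entry configuration $c$ is reachable by a sequence whose length does not depend on $N$, so that $\pi$, $\rho$, and $c$ can all be fixed uniformly.
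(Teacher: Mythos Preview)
Your proposal follows essentially the same route as the paper: prove the contrapositive, observe that non-competitiveness forces unboundedly many faults within a single $k$-phase, apply pigeonhole over the finitely many (cache, internal state) configurations visited in that phase to find a pumpable loop, and compare the pumped sequence against a nearby one using only $k$ pages.

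The paper dispatches your ``main obstacle'' more cheaply than you do. Rather than constructing a separate reaching sequence~$\pi$ and arguing that the configuration space is finite up to renaming so that $|\pi|$ is uniformly bounded, the paper simply keeps the \emph{actual} prefix of the original bad sequence $\sigma$ up to the start of the bad phase. That prefix is fixed once $\sigma$ is fixed, hence its length is automatically independent of how many times you pump; the edit that turns the pumped sequence into a $k$-page sequence is just the deletion of this prefix. This sidesteps any appeal to renaming-invariance or to bounded reachability in the abstract configuration graph. Relatedly, your ``finitely many triples $(\pi,\rho,c)$ up to renaming'' step is unnecessary: a single choice of $N_0>M$ already yields one loop $c\xrightarrow{\rho}c$ with a fault, and that one loop suffices to build $\sigma'_N$ for \emph{all}~$N$ by iterating~$\rho$. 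Your pigeonhole formulation (two faults preceded by the same configuration) is in fact slightly cleaner than the paper's minimality argument for ensuring the loop contains a fault.
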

\begin{FULL}
\ifthenelse{\boolean{proof-thmboundedmemorysmooth}}{
\begin{proof}
Assume algorithm~$A$ is non-competitive.
Then, there is no bound on the number of misses in a single $k$-phase for $A$:  otherwise, if $r$ is a bound on the number of misses of $A$ in every phase, then $A$ is competitive with competitive ratio $c \leq r$.

Let $n$ be the number of states of $A$.
Within a $k$-phase, a demand-paging algorithm can reach at most $(2k)^k$ different configurations: each of the $k$ slots can either contain one of the $k$ ``old'' pages cached at the start of the $k$-phase, or one of the up to $k$ ``new'' pages requested within the phase. \todo{an exact bound assumes no other information held by the cache, but we say that there might be some more finite amount of additional state}
Let $\sigma$ be a sequence of minimal length that ends on a phase in which $A$ misses more than $n\cdot (2k)^k$ times.
 By the pigeon-hole principle, $A$ must assume the same state and configuration pair twice within that phase.
 Due to the minimality of the sequence, $A$ must fault at least once between those two occurrences. 
 By repeating the sequence of requests between the two occurrences, we can thus pump up the sequence and the number of faults arbitrarily without increasing the number of phases.
By removing the finite prefix of the sequence that comprises all but the final phase, we can construct a sequence $\sigma'$ containing at most $k$ distinct pages.
Any demand-paging algorithm, in particular $A$, will fault at most $k$ times on this sequence. 
The edit distance between $\sigma'$ and $\sigma$ is finite, but the difference in faults is unbounded.
This shows that $A$ is not $(\alpha, \beta)$-smooth for any $\alpha$ and $\beta$.\looseness=-1\qed
\end{proof}}{}
\end{FULL}
\todo{We conjecture that this also holds without the \emph{bounded memory} assumption.}

\begin{FORPROOFSONLYEXCLUDETHIS}
\subsection{Smoothness of Particular Deterministic Algorithms}
\label{sec:smoothnessdeterministic}

Now let us turn to the analysis of three well-known deterministic algorithms: \LRU, \FWF, and~\FIFO. We show that both \LRU and \FWF are smooth. On the other hand, \FIFO is not smooth, as a single change in the request sequence may increase the number of misses by a factor of $k$.
\end{FORPROOFSONLYEXCLUDETHIS}

\begin{restatable}[Smoothness of Least-recently-used]{theorem}{thmlrusmoothness}
\label{thm:lrusmoothness}~\\
\indent$\LRU$ is $(1,\delta(k+1))$-smooth. This is tight.
\end{restatable}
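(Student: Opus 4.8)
The plan is to prove the two directions separately, starting with tightness, which follows from the general lower bound already available. By Theorem~\ref{thm:lowerbounddemandpaging}, no deterministic demand-paging algorithm is $(1,\delta(k+1-\epsilon))$-smooth for any $\epsilon>0$; since $\LRU$ is deterministic and demand-paging, the multiplicative constant $1$ cannot be paired with an additive term below $\delta(k+1)$, so if we establish the upper bound it is automatically tight. (More precisely, one should remark that the theorem statement fixes $\alpha=1$, so ``tight'' refers to the additive constant $\delta(k+1)$; the lower-bound construction shows $\delta(k+1-\epsilon)$ is impossible for every $\epsilon>0$, hence $\delta(k+1)$ is the best achievable.)

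For the upper bound, by Corollary~\ref{cor:1delta} it suffices to show that $\LRU$ is $(1,k+1,1)$-smooth, i.e. that a single edit operation increases the number of $\LRU$ faults by at most $k+1$. So fix sequences $\sigma$ and $\sigma'$ with $\Delta(\sigma,\sigma')=1$, and write $\sigma = \pi\, a\, \rho$ and $\sigma' = \pi\, a'\, \rho$ in the substitution case (with the insertion/deletion cases handled by taking $a$ or $a'$ empty), where $\pi$ is the common prefix and $\rho$ the common suffix. Before the edit, $\LRU$ is in the same cache state and the same ``recency order'' on both sequences. The key structural fact I will use is the standard characterization of $\LRU$ faults: when serving a request, $\LRU$ faults on $x$ iff between the previous request to $x$ (or the start) and the current one, at least $k$ \emph{distinct} other pages were requested. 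Equivalently, on the suffix $\rho$, whether request $\rho_i$ is a fault is determined by the multiset of distinct pages seen in a bounded-length window of recent history. The single edit changes this history by at most one request, so it can only affect the fault status of requests $\rho_i$ whose relevant ``look-back window'' straddles the edited position. I will argue that the edit can turn a hit into a fault for at most $k+1$ requests of $\rho$ (including the edited request itself in the insertion case), by tracking how the recency order in $\LRU$'s cache can differ between the two runs.

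The cleanest way to carry this out is a coupling/invariant argument analogous to the $\opt$ proof (Theorem~\ref{thm:optsmoothness}): immediately after processing $a$ resp.\ $a'$, the two $\LRU$ states agree on the \emph{set} of cached pages up to one page, and more importantly their recency orders agree except for the position of one page; then one shows that as common requests of $\rho$ are processed, the number of ``extra'' faults accumulated on the $\sigma'$-run relative to the $\sigma$-run is bounded. The bound of $k+1$ arises because once $k$ further distinct pages have been requested in $\rho$, the recency orders of the two runs re-synchronize completely (the discrepancy is ``flushed out''), so extra faults can occur only during that flushing window, which spans at most $k$ distinct-page requests plus possibly the edited request itself. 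I expect the main obstacle to be making the re-synchronization argument fully rigorous: one must carefully define the potential/discrepancy between the two recency stacks, show it is non-increasing in a suitable sense, and verify that each unit of discrepancy can be ``charged'' to at most one extra fault before it disappears, across all three edit types (insertion, deletion, substitution) and both sub-cases (edited request is a hit / a fault). Handling the deletion case, where $\sigma'$ is \emph{shorter}, needs a small amount of care since there the $\sigma'$-run is ``ahead'' rather than ``behind'', but it is symmetric to the insertion case.
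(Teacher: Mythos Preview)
Your proposal is correct and shares the same skeleton as the paper: reduce to $\delta=1$ via Corollary~\ref{cor:1delta}, derive tightness from Theorem~\ref{thm:lowerbounddemandpaging}, and rely on the characterization of \LRU faults in terms of the number of distinct pages seen since the previous access. Where you diverge is in the core step for the upper bound. You propose a state-coupling argument in the style of the \opt proof, tracking the discrepancy between the two recency stacks and arguing that it is flushed out after $k$ distinct requests in the common suffix. The paper instead gives a direct counting argument that never compares two cache states at all: define the \emph{age} of a page as the number of distinct pages requested since its previous access; a request faults iff its age is at least $k$. An inserted request can only affect the age of a page up to that page's \emph{next} request, and at the moment of insertion at most $k$ pages have age less than $k$, so at most $k$ future requests can flip from hit to miss; together with the inserted request itself this gives $k+1$. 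Substitution is identical, and deletion is even easier (ages of other pages do not increase; only the next request to the deleted page can flip). Your approach would go through, but the bookkeeping you anticipate---defining a potential on the pair of stacks, checking monotonicity, charging each unit to one extra fault across all edit types and hit/miss sub-cases---is exactly what the paper's age argument avoids. The age viewpoint localizes the effect of the edit entirely on the perturbed sequence, so no two-run invariant is needed.
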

\begin{FULL}
\ifthenelse{\boolean{proof-thmlrusmoothness}}{\begin{proof}
We show that $\LRU$ is $(1,k+1,1)$-smooth. 
Corollary~\ref{cor:1delta} then immediately implies that $\LRU$ is $(1,\delta(k+1))$-smooth. 
Tightness follows from Theorem~\ref{thm:lowerbounddemandpaging} as $\LRU$ is demand paging.
\todo{write this down more elegantly}
To analyze $\LRU$, it is convenient to introduce the notion of \emph{age}. 
The age of page $p$ is the number of distinct pages that have been requested since the previous request to $p$. 
Before their first request, all pages have age $\infty$.
A request to page $p$ results in a fault if and only if $p$'s age is greater than or equal to $k$, the size of the cache.
Finite ages are unique, i.e., no two pages have the same age less than $\infty$.
At any time at most $k$ pages are cached, and at most $k$ pages have an age less than $k$.

Let us now consider how the insertion of one request may affect ages and the expected number of faults.
By definition, the age of any page is only affected from the point of insertion up to its next request.
Only the next request to a page may thus turn from a hit into a miss. 
At any time at most $k$ pages have an age less than $k$. 
So at most $k$ requests may turn from hits into misses.
As the inserted request itself may also introduce a fault, the overall number of faults may thus increase by at most $k+1$.

Substitutions are similar to insertions: they turn at most $k$ succeeding hits into misses, and the substituted request itself may introduce one additional fault.
The deletion of a request to page $p$ does not increase the ages of other pages.
Only the next request to $p$ may turn from a hit into a miss.\looseness=-1 \qed
\end{proof}}{}
\end{FULL}

\begin{FORPROOFSONLYEXCLUDETHIS}
So \LRU matches the lower bound for both demand-paging and competitive paging algorithms. We now show that \FWF is also smooth, with a factor that is almost twice that of~\LRU. The smoothness of \FWF follows from the fact that it always misses $k$ times per phase, and the number of phases can only change marginally when perturbing a sequence\begin{FULL}, as we show in Lemma~\ref{prop:phases}.

For a sequence $\sigma$, let $\Phi(\sigma)$ denote the number of phases in its $k$-phase partition.\end{FULL}\begin{SHORT}.\end{SHORT}\looseness=-1
\end{FORPROOFSONLYEXCLUDETHIS}

\begin{FULL}
\ifthenelse{\boolean{proof-lemsuffix}}{
\begin{restatable}{proposition}{lemsuffix}
\label{prop:suffix}
For a sequence $\sigma$, let $\Phi(\sigma)$ denote the number of phases in its $k$-phase partition.
Let $\sigma$ be a sequence, let $\rho$ be a suffix of $\sigma$, and let $\ell$ and $\ell'$ denote the number of distinct pages in the last phase of $\sigma$ and~$\rho$, respectively. Then $\Phi(\rho)\le \Phi(\sigma)$. Furthermore, if $\Phi(\rho)= \Phi(\sigma)$ then $\ell'\le \ell$. 
\end{restatable}
\begin{proof}
Let $i_j$ and $i_j'$ denote the indices in $\sigma$ of the first request of the $j^{th}$ phase in $\sigma$ and $\rho$, respectively, 
with $i_j=|\sigma|+1$ for $j>\Phi(\sigma)$ and $i_j'=|\rho'|+1$ for $j>\Phi(\rho')$.
Then, for all $j$ it holds that $i_j\le i_j'$. We prove this by induction on $j$. The case $j=1$ is trivially true as $i_1=1$ and $i_1'\ge 1$ since $\rho$ is a suffix of $\sigma$. Suppose that the hypothesis holds for $1 < j \le n$. It is easy to see that it holds for $j=n+1$: since there are at most $k$ distinct pages between $i_j$ and $i_{j+1}-1$, inclusive, and by the inductive hypothesis $i_j'\ge i_j$, then the request that ends the $j^{th}$ phase in $\rho$ cannot be earlier than $i_{j+1}$, and hence $i_{j+1}'\ge i_{j+1}$. Since this is true for all phases including the last one, then $\Phi(\rho)\le \Phi(\sigma)$.
Now, assume that $\Phi(\rho)=\Phi(\sigma)=m$. Then $i_{m}' < |\rho'|+1$ and by the proof above $i_{m}\le i_{m}'$, which implies that $\ell'\le \ell$.\qed
 \end{proof}}{}

\begin{restatable}{lemma}{lemphases}
\label{prop:phases}
Let $\sigma$ and $\sigma'$ be two sequences such that $\Delta(\sigma,\sigma')=1$. Then $\Phi(\sigma')\le \Phi(\sigma)+2$. Furthermore, let $\ell$ and $\ell'$ be the number of distinct pages in the last phase of $\sigma$ and $\sigma'$, respectively. If $\Phi(\sigma')= \Phi(\sigma)+2$, then $\ell' \le \ell$.
\end{restatable}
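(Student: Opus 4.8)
## Proof Proposal for Lemma \ref{prop:phases}

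The plan is to analyze how a single edit operation — insertion, deletion, or substitution — can affect the $k$-phase partition, and to track the positions of phase boundaries before and after the change. I would set up notation by letting $i_1 < i_2 < \dots$ denote the starting indices of the phases in $\sigma$'s $k$-phase partition and $i'_1 < i'_2 < \dots$ the starting indices in $\sigma'$'s. The key observation is that a phase boundary is triggered purely by a local counting condition: a new phase begins exactly when the $(k+1)$-st distinct page is requested within the current phase. So I would argue about how the ``distinct-page count'' within a candidate phase can shift by at most one due to a single edit.

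The core argument I would carry out: consider the position $j$ in $\sigma$ at which the edit occurs. Before position $j$, the two partitions agree completely, so the last phase boundary at or before $j$ is at the same index in both sequences. From there, I would show that the edit can cause at most one ``extra'' phase boundary to appear in the suffix, plus possibly one more due to a shift at the very end; more precisely, I expect the bound to come from two effects that can each cost one phase: (i) the edit itself may introduce a new distinct page (for an insertion or a substitution whose new page is fresh), prematurely closing the current phase and hence shifting all subsequent boundaries, and (ii) the resulting ``misalignment'' of one request can cascade so that the final phase in $\sigma'$ is split where $\sigma$ had a single phase. A clean way to formalize this is to compare, phase by phase, the interval $[i'_m, i'_{m+1})$ against $[i_m, i_{m+1})$ and show by induction that $i'_{m} \le i_{m+1}$ (the $\sigma'$ boundaries can never fall more than one full phase ``behind'' those of $\sigma$), which directly yields $\Phi(\sigma') \le \Phi(\sigma) + 2$. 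For the deletion case a symmetric argument (removing a request can only merge phases or shift boundaries forward) actually gives the stronger $\Phi(\sigma') \le \Phi(\sigma) + 1$, so insertion/substitution is the binding case.

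For the second part — the claim that $\Phi(\sigma') = \Phi(\sigma) + 2$ forces $\ell' \le \ell$ on the last phases — I would argue that attaining the maximum increase of two phases requires the edit to be ``maximally splitting'': it must both close the current phase early and cause the tail to be partitioned into one more phase than before. In that extremal situation the last phase of $\sigma'$ can be shown to be a proper sub-segment (in terms of distinct pages requested) of what was the last phase of $\sigma$, because the ``saved'' distinct-page budget that created the extra boundary is exactly what would otherwise have populated $\sigma$'s final phase. I would likely lean on a suffix-comparison lemma in the spirit of Proposition \ref{prop:suffix} (applied to the common suffix starting just after the edit) to transfer the inequality on distinct-page counts to the very last phase.

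The main obstacle I anticipate is the bookkeeping in the substitution case combined with hits versus misses of the edited request: a substitution can simultaneously remove one distinct page (the old one) and add another (the new one), so I need to be careful that the net effect on the distinct-page count within any window is bounded by one in absolute value, and that the ``$+2$'' rather than ``$+1$'' genuinely can arise (for tightness of the statement, though only the upper bound is needed here). The inductive step $i'_m \le i_{m+1}$ is the delicate part — I need to verify that even when a phase boundary in $\sigma'$ lands strictly inside phase $m$ of $\sigma$, the next boundary cannot slip past $i_{m+2}$, which amounts to checking that within $[i_m, i_{m+1})$ there are at most $k$ distinct pages of $\sigma$ and hence at most $k+1$ of $\sigma'$, so that at most one additional boundary is generated per original phase and the offset never grows beyond one phase. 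Once that invariant is nailed down, both conclusions follow without further heavy computation.
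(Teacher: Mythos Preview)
Your outline --- split on the edit type, handle deletion directly, reduce substitution to insertion, and finish via Proposition~\ref{prop:suffix} on the common suffix --- is exactly the paper's route. The gap is in the formal invariant you propose for the insertion case. The inequality $i'_m \le i_{m+1}$ is an \emph{upper} bound on $i'_m$; to cap $\Phi(\sigma')$ you need a \emph{lower} bound on where the phase starts of $\sigma'$ fall. Moreover the natural one-phase lower bound $i'_m \ge i_{m-1}$ is false in general: when the inserted page $p$ itself opens phase $h{-}1$ of $\sigma'$, the \emph{next} boundary can also slip ahead, i.e.\ $i'_h < i_{h-1}$. What does hold --- and what the paper establishes by a short case distinction on whether $p$ is the first request of its phase in~$\sigma'$ --- is the two-phase shift $i'_{h+1} \ge i_{h-1}$; Proposition~\ref{prop:suffix} applied to the common suffix then gives $\Phi(\sigma',h{+}1)\le\Phi(\sigma,h{-}1)$, hence $\Phi(\sigma')\le\Phi(\sigma)+2$, and its equality clause simultaneously delivers $\ell'\le\ell$ in the extremal case.

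Your per-phase justification (``at most $k{+}1$ distinct pages of $\sigma'$ in $[i_m,i_{m+1})$, hence at most one extra boundary per original phase, so the offset never grows beyond one'') does not work either. For $k=2$, inserting a fresh page into $\sigma=a,b,a,b,c,a$ to obtain $\sigma'=a,b,d,a,b,c,a$ places \emph{three} phase starts of $\sigma'$ inside the single first phase of $\sigma$, and indeed $\Phi(\sigma')=\Phi(\sigma)+2$ here. The distinct-page count controls only the window containing the edit; the propagation of the misalignment through the suffix is what Proposition~\ref{prop:suffix} handles, and it cannot be replaced by a per-interval count.
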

\ifthenelse{\boolean{proof-lemphases}}{
\begin{proof}
%
Let $i_j$ and $i_j'$ be the indices of the requests that mark the first page of the $j^{th}$ phase in $\sigma$ and $\sigma'$, respectively, with $i_j=|\sigma|+1$ for $j>\Phi(\sigma)$ and $i_j'=|\sigma'|+1$ for $j>\Phi(\sigma')$. Let $\Phi(\sigma,j)$ denote the number of phases of $\sigma$ starting from the $j^{th}$ phase (with $\Phi(\sigma,j)=0$ if $j>\Phi(\sigma)$). Let $h-1$ be the phase in $\sigma'$ where the difference between both sequences occurs. 
For simplicity, assume that if the difference is an insertion (deletion) on $\sigma'_i$, then $i$ refers to an empty page in $\sigma$ ($\sigma'$), i.e., unaffected requests have equal indices in both sequences. If the difference is a deletion, then $i_{h}\le i'_{h}$\todo{check if it's the first page of the phase, I think it's the same} and by Proposition~\ref{prop:suffix}, $\Phi(\sigma',h)\le \Phi(\sigma,h)$, which implies the lemma. If it is a substitution, suppose that $q$ in $\sigma$  is changed to $p$ in $\sigma'$. Then consider $\sigma''$ resulting from the deletion of $q$ from $\sigma$. By the argument above, $\Phi(\sigma'')\le \Phi(\sigma)$. Hence, showing that $\Phi(\sigma')\le \Phi(\sigma'')+2$ implies as well that $\Phi(\sigma')\le \Phi(\sigma)+2$. Since $\sigma'$ is the result of inserting $p$ into $\sigma''$, it suffices to consider the insertion case (we argue later that if $\Phi(\sigma')= \Phi(\sigma'')+2$, then $\ell''<\ell'$ also holds).

Let $p$ be the page that is added to $\sigma$ to make $\sigma'$. We analyze $\Phi(\sigma)$ in terms of $\Phi(\sigma')$. We have the following cases:

\begin{itemize}
\item $[$$p$ is not the first page of phase $h-1$$]$. If $p$ occurs again in the phase then $\Phi(\sigma)=\Phi(\sigma')$. This is also the case if $h-1$ is the last phase of $\sigma'$. Otherwise, $i_{h}' < i_{h} \le i_{h+1}'$ ($i_h'$ cannot be larger than $i_{h+1}$ as in this case phase $h-1$ in $\sigma'$ would include the $k+1$ distinct pages in $\sigma[i_h..i_{h+1}]$). Then by Proposition~\ref{prop:suffix}, $\Phi(\sigma',h+1)\le \Phi(\sigma,h)$, and therefore $\Phi(\sigma')\le \Phi(\sigma)+1$.
\item $[$$p$ is the first page of phase $h-1$$]$. Then $i_{h-1}>i_{h-1}'$. We have two cases:
\begin{itemize}
  \item If $i_{h-1} \le i_h'$ then we have the same case as above but with  $i_{h-1}$ and $i_h'$. Thus, $\Phi(\sigma')\le \Phi(\sigma)+1$. 
	\item If $i_h' < i_{h-1} \le i_{h+1}'$ (again, $i_{h-1}$ cannot be greater than $i'_{h+1}$ as in this case the $(h-2)^{th}$ phase of $\sigma$ would include all $k+1$ distinct pages in $\sigma'[i'_h..i'_{h+1}]$), then by Proposition~\ref{prop:suffix}, $\Phi(\sigma',h+1)\le \Phi(\sigma,h-1)$. If $\Phi(\sigma',h+1)= \Phi(\sigma,h-1)$, then $\ell'\le \ell$ and $\Phi(\sigma')= \Phi(\sigma)+2$. Otherwise, $\Phi(\sigma',h+1)< \Phi(\sigma,h-1)$ and $\Phi(\sigma')\le \Phi(\sigma)+1$.
\end{itemize}
\end{itemize}

In all cases above either $\Phi(\sigma')\le \Phi(\sigma)+1$ or $\Phi(\sigma')= \Phi(\sigma)+2$ with $\ell'\le \ell$. 
If the difference is a substitution, let $q$ be the page in $\sigma'$ that replaces $p$ in $\sigma'$. Note that the case $\Phi(\sigma')= \Phi(\sigma)+2$ can only happen if $q$ is requested earlier in the same phase in $\sigma$. Then, removing the request to $q$ would not change the $k$-phase partition of $\sigma$, and hence the same analysis above for an insertion applies and thus $\ell'\le \ell$ as well.\qed
\end{proof}}{}
\end{FULL}

\begin{restatable}[Smoothness of Flush-when-full]{theorem}{thmfwfsmoothness}~\\
\indent$\FWF$ is $(1,2\delta k)$-smooth. This is tight.
\end{restatable}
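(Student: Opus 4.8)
The plan is to prove the statement in two parts: the upper bound $\FWF(\sigma') \le \FWF(\sigma) + 2\delta k$, and a matching lower bound construction. As usual, by Corollary~\ref{cor:1delta} it suffices to prove the $\delta = 1$ case of the upper bound, i.e.\ that $\FWF$ is $(1, 2k, 1)$-smooth, and then invoke the corollary to lift it to general $\delta$. The key structural fact is that $\FWF$ faults \emph{exactly} $k$ times in every completed $k$-phase and at most $k$ times in the final (possibly incomplete) phase: at the start of each phase the cache is flushed, the first request is a miss bringing in one page, and subsequent misses within the phase fill the cache until $k$ distinct pages have been seen, at which point a new phase begins. Hence $\FWF(\sigma)$ is sandwiched between $k(\Phi(\sigma)-1)$ and $k\Phi(\sigma)$, where $\Phi(\sigma)$ is the number of $k$-phases of $\sigma$.

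First I would make the sandwich precise. Since $\FWF$ flushes at the boundary of each phase and then misses on the first request, it misses at least once per phase, and it never misses more than $k$ times per phase (a $(k+1)$-st distinct page triggers a new phase before it can cause a miss). Combining, $k\cdot(\Phi(\sigma)-1) < \FWF(\sigma) \le k\cdot\Phi(\sigma)$; more usefully, $\FWF(\sigma) \ge k\cdot(\Phi(\sigma)-1) + \ell(\sigma)$ where $\ell(\sigma)$ is the number of distinct pages in the last phase, since the last phase contributes exactly $\ell(\sigma)$ misses, and $\FWF(\sigma) = k\cdot(\Phi(\sigma)-1) + \ell(\sigma)$ exactly. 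Then, for sequences $\sigma, \sigma'$ with $\Delta(\sigma,\sigma')=1$, Lemma~\ref{prop:phases} gives $\Phi(\sigma') \le \Phi(\sigma)+2$, and moreover if equality holds then $\ell(\sigma') \le \ell(\sigma)$. I would case-split: if $\Phi(\sigma') \le \Phi(\sigma)+1$, then $\FWF(\sigma') \le k\cdot\Phi(\sigma') \le k\cdot(\Phi(\sigma)+1) \le k\cdot(\Phi(\sigma)-1) + \ell(\sigma) + 2k \le \FWF(\sigma) + 2k$ using $\ell(\sigma) \ge 1$ — wait, this needs $\ell(\sigma) \ge 1$ only, but we actually want $k\Phi(\sigma') \le \FWF(\sigma) + 2k$, i.e.\ $k\Phi(\sigma)+k \le k(\Phi(\sigma)-1)+\ell(\sigma)+2k$, i.e.\ $\ell(\sigma) \ge 0$, which is trivially true. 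If instead $\Phi(\sigma') = \Phi(\sigma)+2$, then $\ell(\sigma') \le \ell(\sigma)$, so $\FWF(\sigma') = k(\Phi(\sigma)+1) + \ell(\sigma') \le k\Phi(\sigma)+k + \ell(\sigma) = \FWF(\sigma) + 2k$. Either way the $(1,2k,1)$-smoothness bound holds; Corollary~\ref{cor:1delta} yields $(1, 2\delta k)$-smoothness.

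For tightness, I would exhibit sequences $\sigma_\delta, \sigma'_\delta$ with $\Delta(\sigma_\delta, \sigma'_\delta) = \delta$ achieving $\FWF(\sigma'_\delta) = \FWF(\sigma_\delta) + 2\delta k$. A natural candidate: take $\sigma_\delta = (p_1, \dots, p_k)^{2\delta+1}$ over a fixed set of $k$ pages, on which $\FWF$ faults exactly $k$ times total (one flush at the very start, then the cache stays full and every subsequent request hits). Now insert $\delta$ copies of a fresh page $x \notin \{p_1,\dots,p_k\}$, placed so that each insertion sits right after a block $(p_1,\dots,p_k)$ and is immediately followed by another such block; each inserted $x$ both completes a phase (forcing a flush and $k$ fresh misses on the ensuing $p_1,\dots,p_k$) and is itself a miss in a freshly flushed cache — it needs to be accounted carefully so that each insertion costs $2k$ extra faults (the $k$ misses to refill after the flush triggered by $x$, plus the effect on the following block). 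Concretely one checks that $\FWF(\sigma'_\delta) = k + 2\delta k$ by counting phases: $\sigma'_\delta$ has $2\delta + 1$ phases whereas $\sigma_\delta$ has $1$ phase, so $\FWF$ goes from $k$ to $k(2\delta+1) = k + 2\delta k$. The main obstacle is getting the insertion pattern exactly right so that the phase count jumps by precisely $2\delta$ (not $\delta$) per the phase-counting bound — i.e.\ arranging that a single inserted request can cause two new phase boundaries, matching the ``$+2$'' in Lemma~\ref{prop:phases} — and verifying the boundary bookkeeping for the first and last phases; this is routine but must be done with care.
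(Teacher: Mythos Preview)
Your proposal is correct and follows essentially the same route as the paper: you use the exact formula $\FWF(\sigma)=k(\Phi(\sigma)-1)+\ell(\sigma)$, invoke Lemma~\ref{prop:phases} to bound $\Phi(\sigma')\le\Phi(\sigma)+2$ (with $\ell'\le\ell$ in the equality case), do the same two-case split, and lift via Corollary~\ref{cor:1delta}; your tightness example $(p_1,\dots,p_k)^{2\delta+1}$ with $\delta$ insertions of a fresh page between alternating blocks is exactly the paper's construction. The only rough edge is the verbal accounting in the tightness paragraph (``is itself a miss in a freshly flushed cache'' is not quite right---$x$ \emph{triggers} the flush), but your subsequent phase count $\Phi(\sigma'_\delta)=2\delta+1$ with $\ell'=k$ is correct and is what actually carries the argument.
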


\begin{FULL}
\ifthenelse{\boolean{proof-thmfwfsmoothness}}{
\begin{proof}
Let $\sigma$ and $\sigma'$ be two sequences such that $\Delta(\sigma,\sigma')=1$. Let $\Phi(\sigma)$ (resp. $\Phi(\sigma')$) be the number of phases in the $k$-phase partition of $\sigma$ (resp. $\sigma'$), and let $\ell$ (resp. $\ell'$) be the number distinct pages in the last phase of the partition of $\sigma$ (resp. $\sigma'$). 
FWF misses exactly $k$ times in any phase of a sequence, except possibly for the last one, in which it misses a number of times equal to the number of distinct pages in the phase. Then, $\FWF(\sigma)=k \cdot (\Phi(\sigma)-1) + \ell$, and  $\FWF(\sigma')=k \cdot (\Phi(\sigma')-1) + \ell'$.
By Lemma~\ref{prop:phases}, if $\Phi(\sigma')=\Phi(\sigma)+2$, $\ell'\le \ell$, and thus $\FWF(\sigma')\le k(\Phi(\sigma)+2-1) + \ell=\FWF(\sigma)+2k$. Otherwise, if $\Phi(\sigma')\le \Phi(\sigma)+1$, then $\FWF(\sigma')\le k(\Phi(\sigma)+1-1) + \ell'=\FWF(\sigma)-\ell+k+\ell'$. Since $\ell\ge 0$ and $\ell'\le k$, $\FWF(\sigma')\le \FWF(\sigma)+2k$. The upper bound in the lemma follows by Corollary~\ref{cor:1delta}.
To see that this upper bound is tight, let $\sigma = (x_1,\ldots,x_k)^{2\delta+1}$, where $x_i\ne x_j$ for all $i\ne j$. Let $\sigma' = x_1,\ldots,x_k(x_{k+1},x_1,\ldots,x_k,x_1,\ldots,x_k)^\delta$, where $x_{k+1}\ne x_i$ for all $i\le k$. Thus, $\Delta(\sigma,\sigma')=\delta$. Clearly $\FWF(\sigma)=k$, while $\FWF(\sigma')=k+2\delta k$, and hence $\FWF(\sigma')=\FWF(\sigma)+2\delta k$.\qed
\end{proof}}{}
\end{FULL}

\begin{FORPROOFSONLYEXCLUDETHIS}
We now show that \FIFO is not smooth. 
In fact, we show that with only a single difference in the sequences, the number of misses of \FIFO can be $k$ times higher than the number of misses in the original sequence. On the other hand, since \FIFO is strongly competitive, the multiplicative factor $k$ is also an upper bound for \FIFO's smoothness.
\end{FORPROOFSONLYEXCLUDETHIS}

\begin{restatable}[Smoothness of First-in first-out]{theorem}{thmfifosmoothness}~\\
\FIFO is $(k, 2\delta k)$-smooth. \FIFO is not $(k-\epsilon,\gamma,1)$-smooth for any $\epsilon > 0$ and $\gamma$.
\end{restatable}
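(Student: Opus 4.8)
The plan is to establish the two halves separately. The upper bound $(k, 2\delta k)$ is immediate: \FIFO is $k$-competitive with additive constant $0$, so Theorem~\ref{thm:competitivesmoothness} applied with $c = k$ and $\beta = 0$ yields that \FIFO is $(k, 2\delta k)$-smooth. Nothing more is needed there. The substance of the statement is the lower bound, i.e.\ exhibiting, for every $\epsilon > 0$ and every additive constant $\gamma$, a pair of sequences $\sigma, \sigma'$ with $\Delta(\sigma,\sigma') = 1$ such that $\FIFO(\sigma') > (k-\epsilon)\FIFO(\sigma) + \gamma$.

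First I would construct a base sequence $\sigma$ on which \FIFO performs very well, and then insert (or substitute) a single request to push \FIFO into a pathological eviction pattern where it faults on (almost) every subsequent request while \opt---and hence the unperturbed \FIFO run---would not. The standard bad instance for \FIFO is the following: with $k+1$ distinct pages $p_1,\dots,p_{k+1}$, a round-robin sequence $(p_1,\dots,p_{k+1})^m$ causes \FIFO to fault on every request, because the page it is about to need is always the one it evicted $k$ steps earlier. The idea is therefore to take $\sigma$ to be a sequence on which \FIFO incurs only $O(1)$ faults per ``block''---for instance $\sigma = (p_1, \dots, p_k, p_1, \dots, p_k)^m$ style repetitions, or better, a sequence that after an initial warm-up keeps hitting---and to take $\sigma'$ to be the result of inserting one well-placed request to the ``extra'' page $p_{k+1}$. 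That single insertion shifts \FIFO's FIFO-queue by one relative to the request pattern, converting the periodic all-hit tail into a periodic all-miss tail of length $\Theta(m)$, while on the unperturbed $\sigma$ the tail still costs only $O(1)$ per period. Choosing $m$ large enough makes the ratio approach $k$ (each period of length $k$ contributes $k$ misses in $\sigma'$ versus roughly $1$ miss in $\sigma$), and in particular beats $(k-\epsilon)\FIFO(\sigma) + \gamma$ for any fixed $\epsilon, \gamma$ once $m$ is sufficiently large.

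The key steps, in order, are: (i) pin down the exact periodic sequence $\sigma$ and verify by tracking \FIFO's queue contents that \FIFO faults only $c_0$ times total on $\sigma$ for some constant $c_0$ independent of $m$ (or grows much slower than the perturbed run); (ii) specify the single insertion yielding $\sigma'$ and argue $\Delta(\sigma,\sigma') = 1$; (iii) track \FIFO's queue on $\sigma'$ and show that after the inserted request the alignment between the FIFO order and the request period is such that every one of the roughly $km$ remaining requests is a miss, so $\FIFO(\sigma') \ge km - O(1)$; (iv) combine: for any $\epsilon>0$ and $\gamma$, pick $m$ with $km - O(1) > (k-\epsilon)c_0 + \gamma$, contradicting $(k-\epsilon, \gamma, 1)$-smoothness. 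The main obstacle I anticipate is step (iii): one must choose $\sigma$ so that it genuinely is cheap for \FIFO (a naive round-robin over $k$ pages is already all-hits after warm-up, which is good), and simultaneously so that inserting a single request to a $(k{+}1)$-st page throws \FIFO permanently out of sync---this requires a careful, concrete case analysis of which page sits at the head of the FIFO queue at each step, since \FIFO's behaviour, unlike \LRU's, is sensitive to insertion order rather than recency and small misalignments propagate forever. Getting the bookkeeping exactly right so the tail is provably all-misses (not just mostly) is where the care is needed; the competitive upper bound and the limit computation are routine.
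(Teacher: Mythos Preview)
Your upper bound is fine and matches the paper. The lower bound proposal, however, contains a genuine gap.

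The specific construction you describe---take $\sigma = (p_1,\dots,p_k)^m$ and insert one request to $p_{k+1}$---does \emph{not} throw \FIFO permanently out of sync. Track the queue: after the initial fill the state is $[p_k,\dots,p_1]$; inserting $p_{k+1}$ evicts $p_1$; the next $k$ requests $p_1,\dots,p_k$ each miss once, successively evicting $p_2,\dots,p_k,p_{k+1}$, and after exactly $k$ misses the queue is back to $[p_k,\dots,p_1]$. Every subsequent request is a hit. So one insertion causes precisely $k+1$ extra misses, not $\Theta(km)$. More fundamentally, your plan has $\FIFO(\sigma)$ bounded by a constant $c_0$; but then the upper bound you just proved forces $\FIFO(\sigma') \le k c_0 + 2k$, also a constant, so you cannot beat every additive $\gamma$. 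To show \FIFO is not $(k-\epsilon,\gamma,1)$-smooth for \emph{all} $\gamma$, both $\FIFO(\sigma)$ and $\FIFO(\sigma')$ must grow without bound, with ratio approaching $k$.

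The paper's approach is accordingly quite different. It first constructs a common prefix (with one extra request prepended to form $\sigma'$) that drives \FIFO into two queue states that are \emph{reversals} of each other, say $[1,\dots,k]$ versus $[k,\dots,1]$. From such a reversed pair, the block $0,1,\dots,k-1$ causes $1$ miss on one side and $k$ misses on the other, and---crucially---leaves the two states again reversed (up to renaming). Iterating this block $m$ times gives $\FIFO(\sigma)\sim m$ and $\FIFO(\sigma')\sim km$. The delicate part is the prefix construction reaching the reversed pair from a single edit; the paper does this by explicit small cases ($k=2,3$) and a recursive padding argument for general $k$. This ``maintain a bad invariant across repeated blocks'' idea is what your proposal is missing.
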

\begin{FULL}
\ifthenelse{\boolean{proof-thmfifosmoothness}}{
\begin{proof}
The upper bound follows from the competitiveness of \FIFO and Theorem~\ref{thm:competitivesmoothness}.

For the lower bound, we show how to construct two sequences $\sigma_k$ and $\sigma'_k$ for each cache size $k$, such that $\Delta(\sigma'_k, \sigma_k) = 1$, that yield\todo{clarify terms configuration, state, etc. currently the following use of the term configurations conflicts with its definition} configurations $c = [1, \dots, k]$ and $c' = [k, \dots, 1]$, where pages are sorted from last-in to first-in from left to right.
Then, the sequence $0, 1, 2, \dots, k-1$ yields $k$ misses starting from configuration $c'$ and only one miss starting from $c$.
The resulting configurations are $[0, \dots, k-1]$ and $[k-1, \dots, 0]$, which are equal to $c$ and $c'$ up to renaming.
So we can construct an arbitrarily long sequence that yields $k$ times as many misses starting from configuration $c'$ as it does from configuration $c$.

For $k=2$, $\sigma_2 = 2, 1$ and $\sigma'_2 = 1,2,1 = 1 \circ \sigma_2$ have edit distance $\Delta(\sigma'_2, \sigma_2) = 1$ and yield configurations $[1, 2]$ and $[2, 1]$, respectively.
For $k=3$, $\sigma_3 = 2, 3, 1, 4, 2, 1, 5, 1, 4$ and $\sigma'_3 = 1 \circ \sigma_3$ yield configurations $[4, 1, 5]$ and $[5,1,4]$, respectively, which are equal up to renaming to $[1,2,3]$ and $[3,2,1]$.

For $k > 3$, we present a recursive construction of $\sigma_k$ and $\sigma'_k$ based on $\sigma_{k-1}$ and $\sigma'_{k-1}$.
Notice, that $\sigma'_2 = 1 \circ \sigma_2$ and $\sigma'_3 = 1 \circ \sigma_3$.
We will maintain that $\sigma'_k = 1 \circ \sigma_k$ in the recursive construction.

As $\sigma_{k-1}$ and $\sigma'_{k-1}$ are constructed for a cache of size $k-1$, they will behave differently on a larger cache of size $k$.
However, we can pad $\sigma_{k-1}$ and $\sigma'_{k-1}$ with requests to one additional page $x$ that fills up the additional space in the cache.
This can be achieved as follows:
Add a request to $x$ at the start of the two sequences (following the request to $1$ in $\sigma'_k$).
Also, whenever $x$ is evicted in either of the two sequences, in a cache of size $k$, add a request to $x$ in both sequences.
By construction, the additional requests do not increase the edit distance between the two sequences.
Further, the additional requests ensure that every request that belongs to the original sequences faults in the new sequence on a cache of size $k$ if and only if it faults in the original sequence on a cache of size $k-1$.
In this way, we obtain $\sigma_{k, pre}$ and $\sigma_{k, pre}'$.
The two sequences yield configurations $c = [1, \dots, i', x, i'+1, \dots, k-1]$ and $c' = [k-1, \dots, j'+1, x, j', \dots 1]$, respectively, which, unless $i'=j'$, are almost solutions to the original problem.

Observe that $c = [1, \dots, i', x, i'+1, \dots, k-1]$ and $c' = [k-1, \dots, j'+1, x, j', \dots 1]$ are equal up to renaming to $d = [1, \dots, k]$ and $d' = [k	, \dots, j+1, i, j \dots, i+1, i-1, \dots 1]$ for some $i, j$ with $1 \leq i < j \leq k$.
We distinguish five cases depending on the values of $i$ and $j$:

Case 1: $1 < i < j < k$. Below we build a suffix $\sigma_{k, post}$ that finishes the construction:
\[\small\begin{array}{c|cc}
	\parbox[c]{2.5cm}{Requests}	& \parbox[c]{3.5cm}{State for prefix $\sigma'_{k, pre}$}	& \parbox[c]{3.5cm}{State for prefix $\sigma_{k, pre}$}\\\hline
   							 					& [1, \dots, k] 					& [k, \dots, j+1,i,j,\dots, i+1,i-1,\dots, 1]\\
	\xrightarrow{v} 							& [v,1, \dots, k-1] 				& [v,k ,\dots ,j+1,i,j, \dots, i+1, i-1, \dots, 2]\\ 
	\xrightarrow{k, \dots, i+1} 				& [i+1,\dots, k,v,1, \dots, i-1]  & [v,k, \dots, j+1,i,j,\dots, i+1, i-1, \dots, 2]\\
	\xrightarrow{1, \dots, i-1} 				& [i+1,\dots, k,v,1, \dots, i-1]	& [i-1,\dots, 1,v,k, \dots, j+1,i,j,\dots, i+2]\\
	\xrightarrow{w} 							& [w,i+1,\dots, k,v,1, \dots, i-2]	& [w,i-1,\dots, 1,v,k, \dots, j+1,i,j,\dots, i+3]\\
	\xrightarrow{i+1, \dots, k,v,1, \dots, i-2} 	& [w,i+1,\dots, k,v,1, \dots, i-2]	& [i-2,\dots, 1,v,k, \dots, i+1,w]
\end{array}\]
\normalsize

Case 2: $1 = i < j < k-1$. Consider the following suffix:
\[\small\begin{array}{c|cc}
	\parbox[c]{2.5cm}{Requests}	& \parbox[c]{3.5cm}{State for prefix $\sigma'_{k, pre}$}	& \parbox[c]{3.5cm}{State for prefix $\sigma_{k, pre}$}\\\hline
   							 					& [1, \dots, k] 					& [k, \dots, j+1,1,j, \dots, 2]\\
	\xrightarrow{y} 									& [y,1, \dots, k-1] 				& [y,k ,\dots ,j+1,1,j, \dots, 3]\\ 
	\xrightarrow{2, \dots, j} 							& [y,1, \dots, k-1] 				& [j, \dots, 2, y,k ,\dots, j+1]\\
	\xrightarrow{1} 									& [y,1, \dots, k-1] 				& [1, j, \dots, 2, y, k, \dots, j+2]\\
	\xrightarrow{j+1, \dots, k-1} 						& [y,1, \dots, k-1] 				& [k-1, \dots, j+1, 1, j,\dots, 2,y]
\end{array}\]
\normalsize
The final pair of states is equal up to renaming to the pair $d = [1, \dots, k]$ and $d' = [k, \dots, j+2, 2, j+1, \dots, 3, 1]$, and so it fulfills the conditions under which the suffix $\sigma_{k, post}$ constructed in Case 1 finishes the construction.

Case 3: $1 = i < j = k-1$. Consider the following suffix:
\[\small\begin{array}{c|cc}
	\parbox[c]{2.5cm}{Requests}	& \parbox[c]{3.5cm}{State for prefix $\sigma'_{k, pre}$}	& \parbox[c]{3.5cm}{State for prefix $\sigma_{k, pre}$}\\\hline
   							 					& [1, \dots, k] 					& [k, 1, k-1, \dots, 2]\\
	\xrightarrow{x} 									& [x,1, \dots, k-1] 				& [x, k, 1, k-1, \dots, 3]\\ 
	\xrightarrow{2, \dots, k-1} 							& [x,1, \dots, k-1] 				& [k-1, \dots, 2, x, k]\\
	\xrightarrow{y, z} 								& [z, y, x, 1, \dots, k-3] 				& [z, y, k-1, \dots, 2]\\
	\xrightarrow{x, 1, \dots, k-3} 						& [z, y, x, 1, \dots, k-3] 				& [k-3, \dots, 1, x, z, y]
\end{array}\]
\normalsize
The final pair of states is equal up to renaming to the pair $d = [1, \dots, k]$ and $d' = [k, \dots, 3, 1, 2]$, and so it fulfills the conditions under which Case 2 continues the construction.

Case 4: $1 < i < j = k$. Exchanging $\sigma_{k, pre}$ and $\sigma_{k, pre}'$ yields states that fulfill the conditions of either Case 2 or Case 3.

Case 5: $1 = i < j = k$. Consider the following suffix:
\[\small\begin{array}{c|cc}
	\parbox[c]{2.5cm}{Requests}	& \parbox[c]{3.5cm}{State for prefix $\sigma'_{k, pre}$}	& \parbox[c]{3.5cm}{State for prefix $\sigma_{k, pre}$}\\\hline
   							 					& [1, \dots, k] 					& [1, k, \dots, 2]\\
	\xrightarrow{x} 									& [x,1, \dots, k-1] 				& [x,1, k ,\dots, 3]\\ 
	\xrightarrow{2, \dots, k-1} 							& [x,1, \dots, k-1] 				& [k-1, \dots, 2, x, 1]
\end{array}\]
\normalsize
The resulting pair of states is equal up to renaming to $[1, \dots, k]$ and $[k, \dots, 3, 1, 2]$, which corresponds to Case 2.\qed

\end{proof}}{}
\end{FULL}

\begin{FORPROOFSONLYEXCLUDETHIS}
\FIFO matches the upper bound for strongly-competitive deterministic paging algorithms.
With the result for \LRU, this demonstrates that the upper and lower bounds for the smoothness of strongly-competitive algorithms are tight.

\begin{SHORT}\vspace{-2mm}\end{SHORT}
\section{Smoothness of Randomized Paging Algorithms}
\begin{SHORT}\vspace{-2mm}\end{SHORT}

\subsection{Bounds on the Smoothness of Randomized Paging Algorithms}
\nvsp

Similarly to deterministic algorithms, we can show a lower bound on the smoothness of any randomized demand-paging algorithm. 
\begin{SHORT}
The proof is strongly inspired by the proof of a lower bound for the competitiveness of randomized algorithms by Fiat~\etal~\cite{Fiat91}.
The high-level idea is to construct a sequence using $k+1$ distinct pages on which any randomized algorithm faults at least $k+H_k + \frac{1}{k}$ times that can be converted into a sequence containing only $k$ distinct pages by deleting a single request.\end{SHORT} \todo{this should be included if not including the proof}
Notice that the lower bound only applies to $\delta=1$ and so additional disturbances might have a smaller effect than the first one.

\nvsp
\end{FORPROOFSONLYEXCLUDETHIS}
\begin{restatable}[Lower bound for randomized, demand-paging algorithms]{theorem}{thmlowerboundrandomizeddemangpaging}\label{thm:lowerboundrandomizeddemandpaging}No randomized, demand-paging algorithm is $(1, H_k{+}\frac{1}{k}{-}\epsilon, 1)$-smooth for any $\epsilon{>}0$.\looseness=-1
\end{restatable}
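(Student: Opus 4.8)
The plan is to adapt the classical Fiat \etal~\cite{Fiat91} lower bound for the competitive ratio of randomized paging to our setting. I would work with $k+1$ distinct pages $\{p_0, p_1, \dots, p_k\}$. The target sequence $\sigma$ will use only the $k$ pages $\{p_1, \dots, p_k\}$, so that any demand-paging algorithm faults at most $k$ times on it (in fact exactly $k$ after the first run, since a demand-paging algorithm keeps all $k$ pages resident once they are loaded). The perturbed sequence $\sigma'$ is obtained from $\sigma$ by inserting a single request to $p_0$ at a carefully chosen position, so that $\Delta(\sigma,\sigma') = 1$, and the insertion forces an expected $H_k + \frac{1}{k}$ additional faults (on top of the $k$). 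Then $(1,\beta,1)$-smoothness would give $A(\sigma') \le A(\sigma) + \beta \le k + \beta$, while $A(\sigma') \ge k + H_k + \frac{1}{k}$, forcing $\beta \ge H_k + \frac{1}{k} - \epsilon$ to fail for any $\epsilon > 0$; strictly speaking one takes a slightly longer construction and an averaging/limit argument as in Theorem~\ref{thm:lowerbounddemandpaging} to nail down the exact constant, but the core is the single-insertion gadget.

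The key steps, in order: (1) Fix the algorithm $A$ (randomized, demand-paging). Reduce to deterministic algorithms via Yao's principle — it suffices to exhibit a distribution over pairs $(\sigma,\sigma')$, or equivalently a single $\sigma'$ built adaptively against $A$'s behavior, on which the expected gap is at least $H_k + \frac{1}{k}$. (2) Build $\sigma$ as a sequence over $\{p_1,\dots,p_k\}$ that ends in a ``clean'' state where $A$ deterministically (or with probability 1, by demand paging) holds exactly $\{p_1,\dots,p_k\}$; the simplest choice is $\sigma = p_1, p_2, \dots, p_k$ repeated enough times, or just one pass followed by the gadget. (3) Describe the insertion gadget: insert $p_0$, then replay a sequence of requests to pages in $\{p_1,\dots,p_k\}$ in an order chosen to be the ``least likely to be cached'' page at each step (the adversary strategy of Fiat \etal). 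The first forced miss is $p_0$ itself; then, as in the classical analysis, requesting at each of the next $k$ steps a page currently least likely to be in the cache yields expected miss counts summing to $1 + \frac12 + \dots + \frac1k = H_k$ additional faults beyond the $p_0$ miss, i.e.\ $1 + H_k$ in total. Combined with the need to restore $\{p_1,\dots,p_k\}$ afterwards (which, exactly as in the tightness proof for \opt, costs a further small expected amount), one extracts the constant $H_k + \frac1k$. (4) Verify $\Delta(\sigma,\sigma') = 1$: since the gadget's replayed requests are all to pages already appearing in $\sigma$ and can be arranged to coincide with a suffix of $\sigma$ (we choose $\sigma$'s tail to match), only the single insertion of $p_0$ distinguishes the two sequences. (5) Conclude via the smoothness definition and, if the exact constant requires it, a $\delta$-scaling / prefix-removal argument analogous to Theorems~\ref{thm:lowerbounddemandpaging} and~\ref{thm:lowerboundcompetitive}.

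The main obstacle I anticipate is step (4): making the single inserted request be the \emph{only} difference between $\sigma$ and $\sigma'$, while still having the adversarial replay after the insertion force the full $H_k + \frac1k$ gap. In the classical competitiveness proof one freely appends an adaptively chosen tail; here the tail of $\sigma'$ after the insertion must be (almost) a verbatim tail of $\sigma$, since we are not allowed extra edits. The resolution is that the requests in the gadget are to pages among $\{p_1,\dots,p_k\}$, so the adversary's ``least-likely-cached'' choices only determine the \emph{order} of those $k$ requests, and one must argue that this order can be baked into $\sigma$ in advance — or, more carefully, that for every permutation the resulting $\sigma$ differs from a canonical $\sigma$ by no edits relevant to $A$'s fault count on the common prefix, so that one may quantify over the algorithm first and choose $\sigma$ (hence the permutation) afterwards. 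A second, more technical obstacle is pinning down the extra $\frac1k$ term precisely: this comes from the cost of re-establishing the page that $p_0$ displaced, and bounding $A$'s expected faults there requires the same invariant-tracking as in the proof of Theorem~\ref{thm:optsmoothness}, adapted to the probabilistic cache state. I would handle both by following the Fiat \etal\ potential/probability bookkeeping closely and being careful about when the adversary commits to the request order versus when $A$ is fixed.
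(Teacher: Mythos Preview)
Your high-level plan---adapt the Fiat~\etal\ lower bound, use $k+1$ pages, and exhibit a pair of sequences at edit distance~$1$---is exactly the right idea, and you correctly flag step~(4) as the crux. But your proposed resolution of that obstacle does not work, and the gap is not merely technical.

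In your construction $\sigma' = p_1,\dots,p_k,\,p_0,\,[\text{tail over }\{p_1,\dots,p_k\}]$, the inserted page $p_0$ is never requested again. A demand-paging algorithm that, upon the first fault in the tail, evicts $p_0$ (for instance LIFO, or any policy that prefers to evict the most recently inserted page) will then hold exactly $\{p_1,\dots,p_k\}$ and incur no further faults in the tail. Against such an algorithm your gadget yields at most one extra fault beyond the fault on $p_0$, so $A(\sigma')-A(\sigma)\le 2$, which is strictly smaller than $H_k+\tfrac1k$ already for $k\ge 3$. The point is not the \emph{order} of the tail requests; it is that by excluding $p_0$ from the tail you hand the algorithm a free page to sacrifice. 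Relatedly, your simplified ``request the least-likely-cached page $k$ times'' does not give $H_k$ in general---the full Fiat~\etal\ argument needs the subphase structure with repeated requests to marked pages---but even the full subphase machinery cannot recover once the algorithm has dumped $p_0$.

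The fix, and what the paper does, is to reverse the direction of the edit. Build $\sigma'$ first, fully adaptively: request all $k+1$ pages (all fault), then request the page $m$ with out-of-cache probability $\ge\tfrac1k$ (this is where the $\tfrac1k$ term actually comes from, not from any ``restoring'' step), then run $k-1$ adaptive subphases as in Fiat~\etal\ over all $k+1$ pages. At the end exactly one page is unmarked; it was requested only once, in the initial burst. Delete that single request to obtain $\sigma$, which now contains $k$ distinct pages and costs any demand-paging algorithm exactly~$k$. Because the identity of the deleted page is determined only after the adaptive tail is built, the algorithm cannot single it out for early eviction, and the full $k+H_k+\tfrac1k$ lower bound on $A(\sigma')$ goes through.
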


\begin{FULL}
\ifthenelse{\boolean{proof-thmlowerboundrandomizeddemangpaging}}{
\begin{proof}
For a given randomized, demand-paging algorithm $A$, we show how an oblivious adversary can construct two sequences, a ``bad'' sequence $\sigma'_A$ and a ``good'' sequence $\sigma_A$, with edit distance~1, such that $A(\sigma')$ is at least $k+H_k+\frac{1}{k}$ and $A(\sigma)$ is exactly $k$.
The existence of such sequences immediately implies the theorem.
The construction is inspired by the nemesis sequence devised by Fiat~\etal~\cite{Fiat91}%
\begin{FORPROOFSONLYINCLUDETHIS}%
\footnote{All references cited in the appendix appear at the end of the document.}
\end{FORPROOFSONLYINCLUDETHIS}
 in their proof of a lower bound for the competitiveness of randomized algorithms.
 
The sequence $\sigma'_A$ consists of requests to $n = k+1$ distinct pages.
During the construction of the sequence, the adversary maintains for each of the $n$ pages its probability $p_i$ of \emph{not} being in the cache.
This is possible, because the adversary knows the probability distribution used by $A$. 
We have $\sum_i p_i \geq 1$, as only $k$ of the $n=k+1$ pages can be in the fast memory.

The ``bad'' sequence $\sigma'_A$ begins by $n$ requests, a single request to each of the $n$ pages in an arbitrary order.
Initially, the fast memory is empty, and so these requests will result in $k+1$ faults.
After those requests, as $p_n = 0$, there will be at least one page $i$ with $p_i \geq \frac{1}{k}$.
The next request in $\sigma'_A$ is to such a page. 
We will later refer to this page as $m$.
The remainder of $\sigma'_A$ is composed of $k-1$ subphases, the $i^{th}$ subphase of which will contribute an expected $\frac{1}{k-i+1}$ page faults.
By linearity of expectation, we can sum up the expected faults on the entire sequence, and obtain 
$A(\sigma') \geq k+1 + \frac{1}{k} + \sum_{i=1}^{k-1} \frac{1}{k-i+1} = k+1+\frac{1}{k} + H_k-1 = k + H_k + \frac{1}{k}$.
It remains to show how to construct the remaining $k-1$ subphases and the ``good'' sequence~$\sigma_A$.

Each of the $k-1$ subphases consists of zero or more requests to \emph{marked} pages followed by exactly one request to an \emph{unmarked} page.
A page is \emph{marked} at the start of subphase $i$ if it is page $m$ or if it has been requested in at least one of the preceding subphases $1 \leq j < i$.
Let $M$ be the set of marked pages at the start of the $j^{th}$ subphase.
Then the number of marked pages is $|M| = j$ and the number of unmarked pages is $u = k+1-j$.
Let $p_M = \sum_{i \in M} p_i$.
If $p_M = 0$, then there must be an unmarked page $n$ with $p_n \geq \frac{1}{u}$ and the adversary can pick this page to end the subphase.
Otherwise, if $p_M > 0$ there must be a marked page $l$ with $p_l > 0$.
The first request of subphase $j$ is to page $l$.
Let $\epsilon = p_l$. 
The adversary can now generate requests to marked pages using the following loop:\looseness=-1

While the expected number of faults in subphase $i$ is less than $\frac{1}{u}$, and while $p_M > \epsilon$, request page $l$ such that $l = \argmax_{i \in M} p_i$.

Note that the loop must terminate, as each iteration will contribute $p_l \geq \frac{p_M}{|M|} > \frac{\epsilon}{|M|}$ expected faults.
If the loop terminates due to the first condition, the adversary can request an arbitrary unmarked page to end the subphase.
Otherwise, the adversary requests the unmarked page $i$ with the highest probability values. 
Clearly, $p_i \geq \frac{1-p_M}{u} > \frac{1-\epsilon}{u}$.
The total expected number of faults of the subphase is then $\epsilon + p_i > \frac{1}{u}$.
This concludes the construction of $\sigma'_A$.

Notice that there is one unmarked page that has only been requested in the initial $n$ requests of~$\sigma'_A$.
We obtain the ``good'' sequence $\sigma_A$ be deleting the request to this unmarked page from~$\sigma'_A$.
By construction, $\sigma_A$ contains requests to only $k$ distinct pages.
As $A$ is by assumption demand paging, $\sigma_A$ will thus incur $k$ page faults only.\qed
\end{proof}}{}
\end{FULL}

\nvsp
\begin{FORPROOFSONLYEXCLUDETHIS}
For strongly-competitive randomized algorithms we can show a similar statement using a similar yet more complex construction:
\end{FORPROOFSONLYEXCLUDETHIS}

\begin{restatable}[Lower bound for strongly-competitive randomized paging algorithms]{theorem}{thmlowerboundstronglycompetitive}
\label{thm:lowerboundrandomizedstronglycompetitive}
No strongly-competitive, randomized paging algorithm is $(1, \delta (H_k-\epsilon))$-smooth for any $\epsilon > 0$.
\end{restatable}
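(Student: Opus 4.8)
The plan is to follow the same pattern by which Theorem~\ref{thm:lowerboundcompetitive} extended Theorem~\ref{thm:lowerbounddemandpaging}: take the nemesis construction behind Theorem~\ref{thm:lowerboundrandomizeddemandpaging}, \emph{iterate} it $\delta$ times, and replace the use of ``demand paging forces exactly $k$ faults'' on the good sequence by an appeal to competitiveness. It is convenient to first reduce to the demand-paging case: if $A$ is strongly competitive, let $A'$ be its demand-paging variant, which satisfies $A'(\sigma)\le A(\sigma)$ for every $\sigma$~\cite{borodin98}; after any request to a page $p$, a demand-paging algorithm has $p$ cached, which is the only property of $A'$ we need. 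We will lower-bound $A'$ on a ``bad'' sequence $\sigma'$ and upper-bound $A$ directly on a ``good'' sequence $\sigma$ via $A(\sigma)\le H_k\cdot\opt(\sigma)+\beta$, which holds because $A$ is strongly competitive; since $A(\sigma')\ge A'(\sigma')$, this suffices.

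Fix a set of $k$ pages $P=\{1,\dots,k\}$ and one extra page $x\notin P$. The good sequence $\sigma$ consists of a short prefix that fills the cache with $P$, followed by $\delta$ \emph{gadgets}, where each gadget is a block of requests to pages in $P$ only; hence $\sigma$ uses only $k$ distinct pages, $\opt(\sigma)\le k$, and $A(\sigma)\le H_k k+\beta=:C_2$, a constant independent of $\delta$. The bad sequence $\sigma'$ is obtained from $\sigma$ by inserting one request to $x$ at the start of each gadget, so $\Delta(\sigma,\sigma')\le\delta$. The contents of the gadgets are chosen adaptively by an oblivious adversary that, for $A'$ running on $\sigma'$, tracks the probability $p_p$ that each page $p$ is not cached; since the full cache holds only $k$ of the $k+1$ pages in $P\cup\{x\}$, we always have $\sum_p p_p\ge 1$.

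Inside a gadget, immediately after the single request to $x$ we have $p_x=0$, and the adversary runs the subphase construction of Theorem~\ref{thm:lowerboundrandomizeddemandpaging} over $P$ with $x$ treated as a permanently marked page: the $j$-th subphase begins with $j$ marked pages ($x$ plus the $j-1$ unmarked pages consumed by earlier subphases) and $k+1-j$ unmarked ones, and by the same draining argument it contributes at least $\tfrac{1}{k+1-j}$ expected faults; summing over $j=1,\dots,k$ gives at least $\sum_{i=1}^{k}\tfrac1i=H_k$ expected faults per gadget. By linearity of expectation $A(\sigma')\ge A'(\sigma')\ge \delta H_k-C_1$ for a constant $C_1$ covering only the cache-filling prefix, so $A(\sigma')-A(\sigma)\ge \delta H_k-C_1-C_2$. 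For any $\epsilon>0$ this exceeds $\delta(H_k-\epsilon)$ once $\delta>(C_1+C_2)/\epsilon$; choosing such a $\delta$ shows $A$ is not $(1,\delta(H_k-\epsilon),\delta)$-smooth, and by the slight generalization of Corollary~\ref{cor:1delta} already invoked in the proof of Theorem~\ref{thm:lowerboundcompetitive} this shows $A$ is not $(1,\delta(H_k-\epsilon))$-smooth for any $\epsilon>0$.

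The step I expect to be the main obstacle is extracting $\approx H_k$ expected faults from a gadget while requesting $x$ \emph{exactly once} in it; any additional request to $x$ would increase $\Delta(\sigma,\sigma')$ and ruin the bound. The tension is that the natural draining step --- request the currently most-likely-uncached marked page --- may want to re-request $x$. One resolves this with a win--win argument: either $x$ remains cached throughout the gadget, in which case it permanently occupies a slot and the subphases over $P$ behave exactly as on a $(k+1)$-page instance, yielding the claimed faults; or $A'$ evicts $x$, but an eviction occurs only on a fault, which is counted, and $x$ can be evicted at most once per gadget, after which the remaining requests in $P$ merely fit in the cache. Turning this into a clean quantitative per-gadget bound of $H_k$ (up to an error term that can be driven below any fixed $\epsilon$ by lengthening the draining loops), while keeping everything compatible with the oblivious adversary tracking probabilities across gadgets, is the technical heart; the remaining accounting parallels Theorems~\ref{thm:lowerboundrandomizeddemandpaging} and~\ref{thm:lowerboundcompetitive}.
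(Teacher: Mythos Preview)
Your plan has a real gap, and it is precisely the obstacle you flag but do not resolve: you cannot force $H_k$ expected faults per gadget while requesting $x$ only once. The draining argument from Theorem~\ref{thm:lowerboundrandomizeddemandpaging} relies on being able to request \emph{any} marked page, including $x$; once you forbid re-requesting $x$, probability mass can pile up on $p_x$ and the subphase bound degrades to roughly $(1-p_x)/(k+1-j)$. Your win--win does not recover this: the single fault on which $x$ is evicted does not compensate for the $H_k-2$ you lose across the remaining subphases. Concretely, consider the demand-paging algorithm that, upon the request to $x$, evicts a uniformly random page of $P$, and on the very next fault evicts $x$. Then every requested page of $P$ faults with probability exactly $1/k$, draining is useless (all marked pages in $P$ have $p=0$), and each gadget incurs exactly $1 + k\cdot\tfrac{1}{k}=2$ expected faults. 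For $k\ge 4$ this is strictly below $H_k$, so your per-gadget bound fails. Your argument never uses strong competitiveness to rule out such behaviour inside a gadget; you only invoke it to bound $A(\sigma)$.

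The paper's proof takes a genuinely different route that confronts this issue head-on. It does \emph{not} try to keep the bad sequence close to a sequence on $k$ pages. Instead it builds $\sigma'_A$ as a long run of $k$-phases using the full Fiat~\etal construction (re-requesting marked pages freely), but whenever a phase can already reach $H_k-\tfrac{\epsilon}{2}$ expected faults using each page only once it does so (a ``type~I'' phase); otherwise it produces a ``type~II'' phase with strictly more than $H_k$ faults. Strong competitiveness is then used \emph{structurally}: type~II phases can occur only finitely often per type~I phase, or $A$ would exceed $H_k\cdot\opt$. The good sequence is obtained by deleting, from each type~I phase, the one page not requested in the following phase, which merges two $k$-phases and drops $\opt$ by one; competitiveness then bounds $A(\sigma)$ by $H_k$ times the reduced phase count. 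The difference comes out to $d(H_k-\tfrac{\epsilon}{2})$ minus a constant, giving the theorem. The essential idea you are missing is this type~I/type~II dichotomy that lets competitiveness control how often the adversary must re-request marked pages.
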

\nvsp
\begin{FULL}
\ifthenelse{\boolean{proof-thmlowerboundstronglycompetitive}}{
\begin{proof}
For any $H_k$-competitive algorithm $A$ and any $\epsilon > 0$, we can construct two sequences $\sigma_A'$ and $\sigma_A$, such that $A(\sigma_A')-A(\sigma_A) > \Delta(\sigma_A', \sigma_A)\cdot (H_k-\epsilon)$, which proves the theorem.

Fiat~\etal~\cite{Fiat91} show how to construct a ``bad'' sequence consisting of an arbitrary number of $k$-phases, in each of which any randomized algorithm incurs at least $H_k$ misses, while the optimal offline algorithm incurs only one miss.
To follow this proof, it is helpful to be familiar with the proof of Theorem~4 in \cite{Fiat91}.
We adapt their construction in the following way:
whenever it is possible to incur a cost of $H_k-\frac{\epsilon}{2}$ in a $k$-phase by requesting only \emph{unmarked} pages, we do so.
We call such phases \emph{type I}.
Whenever, this is not possible, we can construct a \emph{type II} phase that results in more than $H_k + \delta$ misses, where $\delta > 0$ depends on $\epsilon$.
Type~II phases may only happen a finite number of times for each occurrence of a type~I phase: otherwise, the sequence would be a counterexample to the $H_k$-competitiveness of~$A$.
Thus, for any $d$ we can construct a sequence $\sigma_A'(d)$ that includes exactly~$d$ type~I phases.

From the resulting ``bad'' sequence $\sigma_A'(d)$ we obtain the ``good'' sequence $\sigma_A(d)$ by deleting one request from each of the type~I phases.
By construction, each page is only requested once in a type~I phase. 
By deleting the one request to the page that is not requested in the following phase, we reduce the number of $k$-phases in the sequence by one.
As a consequence, the resulting ``good'' sequence contains $n-d$ phases, where $n$ is the number of phases of $\sigma_A'(d)$.
Including compulsory misses, the optimal offline algorithm incurs $(k+n-1)-d$ misses.
As~$A$ is strongly competitive, it incurs at most $H_k\cdot (k+n-1-d)$ misses on $\sigma_A$.
On the other hand, by construction, the number of expected misses on $\sigma_A'$ is at least $k+(n-1)\cdot H_k-d\cdot \frac{\epsilon}{2}$, and we get:
\[A(\sigma_A'(d)) - A(\sigma_A(d)) \geq k\cdot(1-H_k) + d\cdot\left(H_k - \frac{\epsilon}{2}\right),\]
which is greater than $d\cdot(H_k - \epsilon)$ for a large enough value of $d$.
As, by construction $d = \Delta(\sigma_A', \sigma_A)$ this proves the theorem.

It remains to show how to construct type~I and type~II phases with the properties discussed above.
Let us first discuss how to adapt the construction of the $i^{th}$ subphase.
We introduce an additional parameter $\epsilon_i$ that controls the reduction in expected faults we are willing to pay for a type~I subphase.

Let $M$ be the set of marked pages at the start of the subphase and $p_M = \sum_{i \in M} p_i$ the probability that a marked page is not cached.
Let $u = k+1-i$ denote the number of unmarked pages.
If there is an unmarked page $j$ with $p_j \geq \frac{1}{u} - \epsilon_i$ the adversary requests this page to end the subphase.
We call such a subphase type~I, analogously to the convention for phases.
Other subphases are called type~II.

Note, that in the first subphase there is always such an unmarked page, as $p_m = 0$ for the only marked $m$ page, as it has just been requested.
Otherwise, $p_M > 0$, and the adversary requests the marked page $l$ with $l = \argmax_{i \in M} p_i$.
Let $\mu = p_l$.
As there are $i$ marked pages and $p_M > \epsilon_i\cdot u$, $\mu$ must be at least $\frac{\epsilon_i\cdot u}{i} = \frac{\epsilon_i\cdot (k+1-i)}{i} \geq \frac{\epsilon_i}{k}$ for $1 < i \leq k$.
The adversary can now generate requests to marked pages using the following loop:

While the expected number of faults in subphase $i$ is less than $\frac{1}{u}+\frac{\epsilon_i}{2k}$, and while $p_M > \frac{\epsilon_i u}{2 k}$, request a marked page $l$ such that $l = \argmax_{i \in M} p_i$.

This loop is guaranteed to terminate, as each iteration adds at least $\frac{\epsilon_i u}{2 k  i}$ to the expected number faults in the subphase.
If the total expected number of faults ends up exceeding $\frac{1}{u}+\frac{\epsilon_i}{2k}$ an arbitrary request is made to an unmarked page.
Otherwise, the page with the highest probability value is requested. 
Its fault probability is at least $\left(1-\frac{\epsilon_i\cdot u}{2k}\right)\cdot\frac{1}{u} = \frac{1}{u} - \frac{\epsilon_i}{2k}$.
Taking into account the initial request to a marked page that contributed at least $\frac{\epsilon_i}{k}$ faults, the subphase has  an expected number of at least $\frac{1}{u} + \frac{\epsilon_i}{2k}$ faults.

For a given $\epsilon$, we can choose the parameters $\epsilon_i$ to the subphases in a way that if all subphases end up as type~I, and thus the phase itself ends up as type~I, the expected number of faults is at least $H_k - \epsilon$:
the sum of the $\epsilon_i$ needs to be less than or equal to $\epsilon$.
Further, by picking $\epsilon_i$ such that $\frac{\epsilon_i}{2k} > \delta + \sum_{1 < j < i} \epsilon_i$ we can make sure that a type~II phase will have an expected number of misses greater than $H_k + \delta$ for some $\delta > 0$.\qed
\end{proof}}{}
\end{FULL}
\nvsp
\begin{FORPROOFSONLYEXCLUDETHIS}
In contrast to the deterministic case, this lower bound only applies to strongly-competitive algorithms, as opposed to simply competitive.
So with randomization there might be a trade-off between competitiveness and smoothness. 
\begin{FULL}
There might be competitive algorithms that are smoother than all strongly-competitive ones.
\end{FULL}

\nvsp
\nvsp
\subsection{Smoothness of Particular Randomized Algorithms}
\nvsp
\nvsp

\begin{FULL}
Two known strongly-competitive randomized paging algorithms are \partition, introduced by McGeoch and Sleator~\cite{McGeoch91} and \equitable, introduced by Achlioptas, Chrobak, and Noga~\cite{Achlioptas00}.
We show that neither of the two algorithms is smooth.
\end{FULL}
\begin{SHORT}
Two known strongly-competitive randomized paging algorithms are \partition~\cite{McGeoch91} and \equitable~\cite{Achlioptas00}.
We show that neither of the two algorithms is smooth.
\end{SHORT}
\nvsp
\end{FORPROOFSONLYEXCLUDETHIS}
\begin{restatable}[Smoothness of \partition and \equitable]{theorem}{thmlowerboundpartitionequitable}
For any cache size $k \geq 2$, there is an $\epsilon > 0$, such that neither \partition nor \equitable is $(1+\epsilon, \gamma, 1)$-smooth for any $\gamma$.
Also, \partition and \equitable are $(H_k,2\delta H_k)$-smooth.
\end{restatable}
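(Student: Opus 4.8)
The statement splits into an easy upper bound and a substantial lower bound. For the upper bound, \partition~\cite{McGeoch91} and \equitable~\cite{Achlioptas00} are $H_k$-competitive, so Theorem~\ref{thm:competitivesmoothness}, applied with $c=H_k$, immediately yields that both are $(H_k,2\delta H_k)$-smooth. The rest of the plan concerns the non-smoothness claim: that for each $k\ge 2$ there is a fixed $\epsilon=\epsilon(k)>0$ for which neither algorithm is $(1+\epsilon,\gamma,1)$-smooth, whatever $\gamma$ is. Since only one edit is allowed, it suffices to exhibit, for $A\in\{\partition,\equitable\}$, a family of pairs $(\sigma_N,\sigma'_N)_{N\ge1}$ with $\Delta(\sigma_N,\sigma'_N)=1$, $A(\sigma_N)=\Theta(N)\to\infty$, and $A(\sigma'_N)\ge(1+\epsilon)\,A(\sigma_N)-c_k$ for a constant $c_k$ depending only on $k$: given any $\gamma$, choosing $N$ large enough then violates the $(1+\epsilon,\gamma,1)$-smoothness inequality.

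The way I would build such a family mirrors the lower bound for \FIFO above: a single insertion (say a request to a fresh page at the very front) forces the algorithm's \emph{state} into a configuration from which a short, repeatable block of requests is strictly more expensive, and then the block is looped $\Theta(N)$ times. The quantitative handle is the $k$-phase partition (Definition~\ref{def:phasepartition}): for marking-like algorithms such as \partition and \equitable the expected cost of a $k$-phase is governed by how many of its pages are \emph{clean} (requested in this phase but not in the previous one). One can design $\sigma_N$ so that its phases are ``cheap'' — few clean pages, padded with stale repetitions, so the expected number of faults per phase is strictly below what a worst-case phase costs — while the front insertion shifts all subsequent phase boundaries so that in $\sigma'_N$ every phase becomes ``expensive'', e.g.\ a copy of the nemesis phase of Fiat~\etal~\cite{Fiat91} with a single adversarially placed clean page, on which \partition and \equitable are pinned close to their worst case of $\approx H_k$ expected faults. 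Summing the per-phase expectations by linearity of expectation, $A(\sigma'_N)/A(\sigma_N)$ tends to the ratio of the two per-phase costs, which is bounded away from $1$; any value strictly below this ratio can serve as $1+\epsilon$, and the $O(1)$ boundary and startup terms are absorbed into $c_k$.

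The hard part will be making this concrete: choosing explicit blocks that are genuinely cheap for \emph{both} \partition and \equitable yet flip to expensive under a single shift, and carrying out the exact per-block expected-cost computation for each algorithm across the shifted boundaries. Because the fine behaviour of \partition and \equitable is driven by auxiliary state — \partition's set partition of the pages seen so far, and \equitable's equitable distribution over cache contents — I would have to verify that this state does not ``resynchronise'' after finitely many blocks; otherwise the per-block cost of $\sigma'_N$ would drift back towards that of $\sigma_N$, the limiting ratio would be $1$, and the additive $\gamma$ would dominate. Establishing this persistence of the gap (analogous to the fact that the cache configurations in the \FIFO construction cycle back to $[1,\dots,k]$ versus $[k,\dots,1]$ up to renaming), together with a uniform-in-$N$ bound on the boundary corrections, is the crux; the remaining bookkeeping is routine.
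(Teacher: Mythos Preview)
Your upper bound is exactly the paper's: $H_k$-competitiveness with additive constant $0$ plus Theorem~\ref{thm:competitivesmoothness}.

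For the lower bound, your plan diverges from the paper's in a way that leaves a real gap. The phase-shifting idea you sketch---one insertion displaces all later $k$-phase boundaries so that ``cheap'' phases become ``expensive''---is precisely the mechanism the paper uses for \MARK (Theorem~\ref{thm:mark}). It works for \MARK because \MARK's state resets at every phase boundary: the expected cost of a phase is a function of the new/old pattern \emph{within} that phase only, so once the boundaries are shifted the per-phase cost changes permanently and your resynchronisation worry does not arise. \partition and \equitable are not like this. Their behaviour is determined by the \emph{layer representation} of the current work function (Koutsoupias and Papadimitriou~\cite{Koutsoupias00}), which evolves continuously across phase boundaries; the count of clean pages in a phase does not determine their cost, and a boundary shift does not by itself pin the layers to a persistently more expensive configuration. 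So the very thing you flag as ``the crux''---ruling out resynchronisation---cannot be settled with the phase/clean-page abstraction you propose; you need a state description fine enough to track \partition and \equitable exactly.

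The paper supplies that tool and builds the argument directly on it. It writes down the layer update rule, produces two prefixes $\sigma,\sigma'$ with $\Delta(\sigma,\sigma')=1$ that land in distinct layer tuples (for $k\ge 3$: $(0,1,\dots,k-2,(k{-}1)xy)$ versus $(0,1,\dots,k-1)$), and then exhibits an explicit block of $2k-1$ requests whose per-request hit probabilities it computes from the layers under each prefix. The block is chosen so that the final layer pair equals the initial pair \emph{up to renaming}, which is exactly the non-resynchronisation certificate you were looking for: the block can be iterated indefinitely with identical per-block cost gap. Summing gives $2H_k-\frac{1}{2k}$ expected misses per block under $\sigma'$ versus $2H_k-\frac{k-1}{k+1}$ under $\sigma$ for \partition (and $2H_k-\frac{1}{k+1}$ under $\sigma'$ for \equitable), a fixed positive gap per block that yields the claimed $\epsilon(k)>0$. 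A separate short construction handles $k=2$. The missing idea in your plan is therefore the layer representation itself; once you have it, the construction and the cycle-up-to-renaming verification are concrete calculations rather than open-ended searches.
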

\nvsp
\begin{FULL}
\ifthenelse{\boolean{proof-thmlowerboundpartitionequitable}}{
\begin{proof}
The fact that the two policies are $(H_k,2\delta H_k)$-smooth follows immediately from Theorem~\ref{thm:competitivesmoothness} and the fact that the two policies are $H_k$-competitive with additive constant~$0$.

\newcommand{\hits}[2]{\left<#1,#2\right>}
Koutsoupias and Papadimitriou~\cite{Koutsoupias00} introduced the \emph{layer representation}, a sequence $(L_1, \dots, L_k)$ of $k$ sets of pages that compactly represents the current work function.
For both \partition and \equitable, the probability of being in a particular configuration is determined by the current work function and thus its layer representation.
See Achlioptas~\etal~\cite{Achlioptas00} for more details. 

Let $\omega = (L_1, \dots, L_k)$ and let $p$ be the next page to be requested.
Then the layer representation of the work function is updated as follows:
\[\omega^p = \begin{cases}
	(p, L_1, \dots, L_{j-1}, L_j \cup L_{j+1} - \{p\}, L_{j+2}, \dots, L_k)	& $if $ p \in L_j \wedge j < k\\
	(p, L_1, \dots, L_{k-1})		& $if $ p \in L_k\\
	(p, L_1 \cup L_2, L_3, \dots, L_k) & $if $ p \not\in \bigcup_{i=1}^k L_i 
	\end{cases}\]
In the following, to save space, we will omit braces in the representation of the sets $L_i$, i.e.
$(abc, d, e) = (\{a,b,c\}, \{d\}, \{e\})$.

Starting from an empty cache, and the corresponding empty layer representation, the sequence $\sigma = y, x, k-1, k-2, \dots, 1, 0, k-2, k-3, \dots, 1, 0$ yields the following layers:
\[(0, 1, \dots, k-2, (k-1)xy).\]
The sequence $\sigma' = y, x, k-1, k-2, \dots, 1, 0, k-1, k-2, \dots, 1, 0$, on the other hand, with $\Delta(\sigma', \sigma) = 1$ yields the following layers, in which all pages are revealed:
\[(0, 1, \dots, k-2, k-1).\]

In the following, we show how to extend these two sequences to yield an unbounded difference in the expected number of faults.
An arrow $\xrightarrow[]{p : \hits{h_1}{h_2}}$ indicates a request to page $p$ with a hit probability of $h_1$ under prefix $\sigma'$ and a hit probability of $h_2$ under prefix $\sigma$.

\[\small\begin{array}{c|cc}
	\parbox{2.5cm}{Request and hit probabilities}				& \parbox[c]{3.5cm}{Layers for prefix $\sigma'$}	& \parbox[c]{3.5cm}{Layers for prefix $\sigma$}\\\hline
   							 					& (0, 1, 2, \dots, k-1) 					& (0, 1, 2, \dots, (k-1)xy)\\
	\xrightarrow[]{x : \hits{0}{\frac{1}{3}}} 	& (x, 01, 2, 3, \dots, k-1) 				& (x, 0, 1, \dots,	k-2)\\
	\xrightarrow[]{0 : \hits{\frac{k-1}{k}}{1}} 	& (0, x, 12, 3, \dots, k-1)						& (0, x, 1, 2, \dots, k-2)\\
	\xrightarrow[]{1 : \hits{\frac{k-2}{k-1}}{1}} 	& (1, 0, x, 23, 4, \dots, k-1)				& (1, 0, x, 2, \dots, k-2)\\
		 & \hfill\vdots \\
	\xrightarrow[]{i : \hits{\frac{k-i-1}{k-i}}{1}} 	& (i, i-1, \dots, 1, 0, x, (i+1)(i+2), \dots, k-1)	& (i, i-1, \dots, 1, 0, x, i+1, \dots, k-2)\\
		 & \hfill\vdots \\
	\xrightarrow[]{k-3 : \hits{\frac{2}{3}}{1}} 	& (k-3, k-4, \dots, 1, 0, x, (k-2)(k-1))		& (k-3, k-4, \dots, 1, 0, x, k-2)\\
	\xrightarrow[]{y : \hits{0}{0}} 	& (y, (k-3)(k-4), \dots, 1, 0, x, (k-2)(k-1))		& (y, (k-3)(k-4), \dots, 1, 0, x, k-2)\\
	\xrightarrow[]{k-1 : \hits{\frac{1}{2}\cdot\frac{k-1}{k}}{0}} 	& (k-1, y, (k-3)(k-4), \dots, 1, 0, x)		& (k-1, y(k-3)(k-4), \dots, 1, 0, x, k-2)\\
	\xrightarrow[]{k-3 : \hits{\frac{k-2}{k-1}}{\frac{k-1}{k+1}}} 	& (k-3, k-1, y, (k-4)(k-5), \dots, 1, 0, x)		& (k-3, k-1, y(k-4)(k-5), \dots, 1, 0, x, k-2)\\
		& \hfill\vdots \\
	\xrightarrow[]{i : \hits{\frac{i+1}{i+2}}{\frac{i+2}{i+4}}} 	& (i, \dots, k-3, k-1, y, (i-1)(i-2), \dots, 1, 0, x)	& (i, \dots, k-3, k-1, y(i-1)(i-2), \dots, 1, 0, x, k-2)\\
		& \hfill\vdots \\
	\xrightarrow[]{0 : \hits{\frac{1}{2}}{\frac{2}{4}}} 	& (0, 1, \dots, k-3, k-1, y)	& (0, 1, \dots, k-3, k-1, yx(k-2))
\end{array}\]

Observe that the layers at the end of the above sequence are equal to the layers at the beginning of the sequence up to renaming.
So we can extend the sequence arbitrarily achieving the same hit probabilities on both sides.

It remains to compute the expected number of hits and misses on the sequence above.
Summing up the hit probabilities for prefix $\sigma'$, we get:
\begin{align*}
  & 0 + \sum_{i=0}^{k-3} \frac{k-i-1}{k-i} + 0 + \frac{1}{2}\cdot\frac{k-1}{k} + \sum_{i=0}^{k-3} \frac{i+1}{i+2}\\
=~& \left(k-H_k-\frac{1}{2}\right) + \frac{1}{2}\cdot\frac{k-1}{k} + \left(k-H_{k-1}-1\right)\\
=~& 2k - 2H_k + \frac{1}{k} + \frac{1}{2}\cdot\frac{k-1}{k} - \frac{3}{2}\\
=~& 2k - 2H_k - 1 + \frac{1}{2k}.
\end{align*}
As there are $2k-1$ requests on the sequence, the expected number of misses for prefix $\sigma'$ is 
\[2H_k - \frac{1}{2k}.\]

Summing up the hit probabilities for prefix $\sigma$, we get:
\begin{align*}
  & \frac{1}{3} + \sum_{i=0}^{k-3} 1 + 0 + 0 + \sum_{i=0}^{k-3} \frac{i+2}{i+4}\\
=~& \frac{1}{3} + k-2 + \left(k-2H_{k+1}+\frac{5}{3}\right)\\
=~& 2k - 2H_{k+1},
\end{align*}
which yields an expected number of misses for prefix $\sigma$ of
\[2H_{k+1}-1 = 2H_k - \frac{k-1}{k+1}.\]

The hit probabilities on the sequence above are the same under \equitable as under \partition, except for the request to page~$k-1$.
Under \equitable this request has a hit probability of $\frac{1}{2}-\frac{1}{k(k+1)}$.
So the number of expected misses under \equitable for prefix $\sigma'$ is 
\begin{align*}
  & 2H_k-\frac{1}{2k} + \frac{k-1}{2k} - \left(\frac{1}{2}-\frac{1}{k(k+1)}\right)\\
=~& 2H_k-\frac{1}{k+1}.
\end{align*}

For prefix $\sigma$, \equitable and \partition behave exactly the same.
Observe that in both cases, the number of expected misses for prefix $\sigma$ is lower than for prefix $\sigma'$.
As we can arbitrarily extend the sequence presented above, for any cache size $k \geq 3$, there is an $\epsilon$, such that neither $\equitable$ nor $\partition$ are $(1+\epsilon, \gamma, 1)$-smooth for any $\gamma$.

For a cache size of $2$, consider the sequences $\sigma' = 4,3,2,1,0$ and $\sigma = 3,2,1,0$, with $\Delta(\sigma', \sigma)=1$, which yield layers $(0, 1234)$ and $(0, 123)$, respectively.
The sequence $4, 5, 6, 7, 1, 4, 7, 0, 2$ yields the following layers and hit probabilities under both \equitable and \partition:
\[\begin{array}{c|cc}
	\parbox[c]{5.5cm}{Requests and hit probabilities}	& \parbox[c]{3.5cm}{Layers for prefix $\sigma'$}	& \parbox[c]{3.5cm}{Layers for prefix $\sigma$}\\\hline
   							 							& (0, 1234) 					& (0, 123)\\
	\xrightarrow{4 : \hits{\frac{1}{4}}{0}} 			& (4, 0) 						& (4, 0123)\\
	\xrightarrow{5, 6, 7 : \hits{0}{0}} 				& (7, 0456)						& (7, 0123456)\\
	\xrightarrow{1 : \hits{0}{\frac{1}{7}}} 			& (1, 04567)					& (1, 7)\\
	\xrightarrow{4 : \hits{\frac{1}{5}}{0}} 			& (4, 1)						& (4, 17)\\
	\xrightarrow{7 : \hits{0}{\frac{1}{2}}} 			& (7, 14)						& (7, 4)\\
	\xrightarrow{0, 2 : \hits{0}{0}} 					& (2, 0147)						& (2, 047)
\end{array}\]
The final layers are equal up to renaming to the initial layers, so we can extend the sequence arbitrarily achieving the same hit probabilities on both sides.
As the expected number of hits differs depending on the two prefixes, $\frac{9}{20}$ for $\sigma'$ versus $\frac{9}{14}$ for $\sigma$, this proves the theorem for $k=2$.\qed
\end{proof}}{}
\end{FULL}

\begin{FORPROOFSONLYEXCLUDETHIS}
The lower bound in the theorem above is not tight, but it shows that neither of the two algorithms matches the lower bound from Theorem~\ref{thm:lowerboundrandomizedstronglycompetitive}.
This leaves open the question whether the lower bound from Theorem~\ref{thm:lowerboundrandomizedstronglycompetitive} is tight.

\begin{FULL}
Note that the lower bound for \equitable applies equally to \onlinemin~\cite{Brodal15}, as \onlinemin has the same expected number of faults as \equitable on all request sequences.
\end{FULL}

\MARK~\cite{Fiat91} is a simpler randomized algorithm that is $(2H_k-1)$-competitive.
We show that it is not smooth either.
\end{FORPROOFSONLYEXCLUDETHIS}
\begin{restatable}[Smoothness of \MARK]{theorem}{thmlowerboundmark}
Let $\alpha=\max_{1< \ell \leq k}\left\{\frac{\ell(1+H_k-H_\ell)}{\ell-1+H_k-H_{\ell-1}}\right\}=\Omega(H_k)$, where $k$ is the cache size. 
	$\MARK$ is not $(\alpha-\epsilon,\gamma,1)$-smooth for any $\epsilon>0$ and any~$\gamma$. Also, $\MARK$ is $(2H_k-1,\delta (4H_k-2))$-smooth.\label{thm:mark}
\end{restatable}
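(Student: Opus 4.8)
The upper bound is immediate from competitive analysis: \MARK is $(2H_k-1)$-competitive, so Theorem~\ref{thm:competitivesmoothness} yields that \MARK is $(2H_k-1,\,2\delta(2H_k-1))$-smooth, i.e.\ $(2H_k-1,\,\delta(4H_k-2))$-smooth. The rest of the work is the lower bound, and the plan is to exploit the fact that the behaviour of \MARK is organized around the $k$-phase partition of the request sequence. Recall the standard fact (see~\cite{Fiat91}) that a $k$-phase in which $\ell$ of the $k$ requested pages are \emph{clean} (not requested in the previous phase) and all clean pages are requested before the stale ones incurs an expected cost of exactly $\ell$ compulsory misses plus $\ell(H_k-H_\ell)$ expected misses on stale pages, i.e.\ $f(\ell):=\ell(1+H_k-H_\ell)$. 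Let $\ell^\ast\in\{2,\dots,k\}$ attain the maximum defining $\alpha$, so that $f(\ell^\ast)=\alpha\cdot(\ell^\ast-1+H_k-H_{\ell^\ast-1})$.

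I would then build, for every $n\ge 1$, a pair $(\sigma_n,\sigma'_n)$ with $\Delta(\sigma_n,\sigma'_n)=1$ of the shape (fixed prefix)$\,\circ\,$(fixed \emph{block})$^{\,n}$, the two sequences differing in a single request inside the prefix, with the following properties: (i) under $\sigma'_n$ each block coincides with one $k$-phase having $\ell^\ast$ clean pages requested first, contributing $f(\ell^\ast)$ expected faults; (ii) the single edit deletes a request to a page that occurs in only one phase, which shortens that phase to $k-1$ distinct pages and thereby shifts every later phase boundary by one distinct page, so that under $\sigma_n$ each block straddles the tail of one shifted phase and the head of the next, with the clean requests of a block now scattered across these shifted phases so as to cause only $H_k-H_{\ell^\ast-1}$ (rather than the worst-case $(\ell^\ast-1)(H_k-H_{\ell^\ast-1})$) expected stale misses, giving a per-block contribution of $\ell^\ast-1+H_k-H_{\ell^\ast-1}$; and (iii) on both sides the configuration — cache-contents distribution together with the set of marked pages — returns to a fixed state up to renaming of pages after each block, so that expected costs add up linearly. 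As for \FIFO, property (iii) and the exact constants in (i) and (ii) would be verified by an explicit, possibly $k$-recursive, description of the block together with a careful accounting of which pages are clean, stale, and marked at each step.

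Granting such a construction, linearity of expectation gives $\MARK(\sigma'_n)=c'+n\,f(\ell^\ast)$ and $\MARK(\sigma_n)=c+n(\ell^\ast-1+H_k-H_{\ell^\ast-1})$ for constants $c,c'$ independent of $n$, hence $\MARK(\sigma'_n)/\MARK(\sigma_n)\to\alpha$. For any $\epsilon>0$ and any $\gamma$, picking $n$ large enough makes $\MARK(\sigma'_n)>(\alpha-\epsilon)\MARK(\sigma_n)+\gamma$, since both sides grow linearly in $n$ and so the constant $\gamma$ is dominated; thus \MARK is not $(\alpha-\epsilon,\gamma,1)$-smooth. That $\alpha=\Omega(H_k)$ follows by estimating the maximand at $\ell=\Theta(H_k)$: there both $\ell$ and $H_k-H_\ell=H_k-\Theta(\ln\ln k)$ are $\Theta(H_k)$, and $\tfrac{\ell(1+H_k-H_\ell)}{\ell-1+H_k-H_{\ell-1}}\ge\tfrac{\ell(H_k-H_\ell)}{\ell+H_k-H_\ell}=\Theta(H_k)$.

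The main obstacle is the design of the block realizing properties (ii) and (iii) simultaneously: forcing the shifted phase partition on the $\sigma_n$ side to produce a per-block cost of exactly $\ell^\ast-1+H_k-H_{\ell^\ast-1}$ requires interleaving the clean requests of each block among the stale requests of the two phases it straddles in a precisely tuned way, and checking that this keeps the (renamed) configuration stationary on \emph{both} sides is where the bulk of the technical work lies — a case analysis comparable to the one used in the \FIFO lower bound.
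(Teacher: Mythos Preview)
Your upper bound and the $\Omega(H_k)$ estimate (via $\ell=\Theta(H_k)$) are correct and match the paper. Your high-level plan for the lower bound --- make a single edit shift the $k$-phase boundaries so that the same repeated block is charged as $f(\ell^\ast)$ on the bad side and as $\ell^\ast-1+H_k-H_{\ell^\ast-1}$ on the good side --- is also exactly what the paper does. But the specific mechanism you propose does not deliver the latter cost, and this is where the argument breaks.

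You say the deletion ``shifts every later phase boundary by one distinct page'' and that on the good side the stale misses per block drop to $H_k-H_{\ell^\ast-1}$. These two claims are incompatible. If the block on the bad side is $\ell$ new pages followed by $k-\ell$ old pages, and the good-side phases are shifted by one position, then each good-side phase is $(\ell-1)$ new pages, then the $k-\ell$ old pages, then one more new page. The $j$-th stale request still sees $n_j=\ell-1$ new pages ahead of it, so its miss probability is $\tfrac{\ell-1}{k-j+1}$, and the good-side cost per phase becomes $\ell+(\ell-1)(H_k-H_\ell)$. The resulting ratio
\[
\frac{\ell(1+H_k-H_\ell)}{\ell+(\ell-1)(H_k-H_\ell)}
\]
is bounded above by $\tfrac{\ell}{\ell-1}\le 2$ for every $\ell\ge 2$, so a shift-by-one can never witness an $\Omega(H_k)$ multiplicative gap. ``Scattering'' cannot help: once $\sigma_n$ and $\sigma_n'$ share the same block, the positions of the requests are fixed; only which of them count as new or old changes with the phase boundaries.

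What is actually needed --- and what the paper does --- is a shift by $\ell-1$ positions, so that only the \emph{last} of the $\ell$ new pages of each bad-side block becomes the \emph{first} (and only) new page at the start of a good-side phase. Then each good-side phase is $1$ new page, $k-\ell+1$ old pages (stale cost $H_k-H_{\ell-1}$), and $\ell-2$ new pages at the end, giving exactly $\ell-1+H_k-H_{\ell-1}$. The single edit that realises this large shift is an insertion at the very beginning of the sequence, combined with repeated requests in the first two (setup) phases that absorb the extra distinct page and push the phase boundary of $\sigma$ forward by $\ell-1$; from the third phase on, the $(\ell-1)$-offset is stable. The ``precisely tuned'' part you anticipate is real, but the right target is a shift of $\ell-1$, not of one.
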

\begin{FULL}
\ifthenelse{\boolean{proof-thmlowerboundmark}}{
\begin{proof}
The upper bound follows from the fact that $\MARK$ is $2H_k-1$ competitive with additive constant~$0$~\cite{Achlioptas00} and Theorem~\ref{thm:competitivesmoothness}. 
For the lower bound, we construct sequences $\sigma$ and $\sigma'$ with $\Delta(\sigma,\sigma')=1$ such that $\MARK(\sigma')\ge \alpha\cdot \MARK(\sigma)+\beta'$. Consider the $k$-phase partition of $\sigma$ and let $\sigma_h$ denote the $h^{th}$ phase of $\sigma$. The sequence $\sigma'$ will be such that its phases will be shifted to the left with respect to those of $\sigma$. More specifically, if $i_h$ and $i'_h$ are the indices of the first page of $\sigma_h$ and $\sigma'_h$, respectively, then for all $h\ge 3$, $i_h = i_h'+\ell-1$, where $\ell$ is a parameter satisfying $2\le \ell \le k$. After an initial setup that includes the difference between sequences, each phase of $\sigma'$ and $\sigma$ will consist of exactly $k$ pages. For a given phase  and sequence we call a page \emph{old} if it was requested in the previous phase and \emph{new} otherwise. Since there are $k$ requests per phase, no pages are repeated during a phase. A phase for $\sigma'$ will consist of $\ell$ new pages followed by $k-\ell$ old ones. In turn, a phase for $\sigma$ starts with one new page, followed by $k-\ell+1$ old pages, and ends with $\ell-2$ new pages (see Figure~\ref{fig:markphases}).

\begin{figure}[t]
\begin{center}
\includegraphics{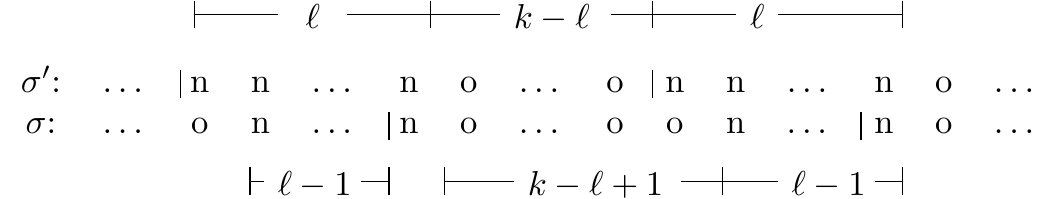}
\end{center}
\caption{New (n) and old (o) pages in each phase in $\sigma$ and $\sigma'$ for the lower bound of $\MARK$. Vertical bars between pages indicate the boundaries of phases.}
\label{fig:markphases}
\end{figure}

Since at the end of a phase $h$ all pages of the phase are in $\MARK$'s cache, a new page in phase $h+1$ is always a fault.  On the other hand, a request for the $j^{th}$ old page in the phase has a fault probability of $n_j/(k-j+1)$, where $n_j$ is the number of new pages in the phase before the request to this page~\cite{borodin98}. Then, the expected number of misses in a phase of $\sigma'$ is $\ell+\sum_{j=1}^{k-\ell}\frac{\ell}{k-j+1}=\ell(1+H_k-H_\ell)$, while the expected number of misses in a phase of $\sigma$ is $1+\sum_{j=1}^{k-\ell+1}\frac{1}{k-j+1} + \ell-2=\ell-1+H_k-H_{\ell-1}$. 

We now show that given a phase $\sigma'_h$ as described above, with $\sigma$ satisfying $i_h = i_h'+\ell-1$, we can construct the structure of the $\sigma_h$ and moreover we can maintain the configuration for the next phase. Since $i_h = i_h'+\ell-1$, the request in $i_h$ coincides with the last new page of $\sigma'_h$. This page can be any new page not requested so far, and hence is also new for $\sigma_h$. Now $k-\ell$ old pages follow in $\sigma'_h$ that are different from the first $\ell$ of the phase. Note that the first $\ell-1$ pages of $\sigma_h'$ belong to phase $\sigma_{h-1}$. This leaves $k-\ell+1$ of possible pages in $\sigma_{h-1}$ that can be old for $\sigma$. We request those pages next. The first $k-\ell$ are old in $\sigma'_h$ and the last one is the one that starts phase $\sigma'_{h+1}$. From then, we request $\ell-1$ new pages that have not been requested before and hence are new for both $\sigma$ and $\sigma'$. Since there have been already $k-\ell+2+\ell-1=k+1$ distinct pages requested in $\sigma_h$, the last of these $\ell-1$ pages marks the start of the $(h+1)^{th}$ phase of $\sigma$ and $i_{h+1}=i_{h+1}'+\ell-1$.

Finally, we show that the configuration of phases described above can be reached with one difference between $\sigma$ and $\sigma'$ and a constant number of misses on both sequences. An example sequence is shown below for any $k$ and $l$. The structure of the phases described above is satisfied starting from the third phase of the sequences.

\todo{commas or no commas between pages}

\begin{center}
\small
\begin{tabular}{*{28}{c}}
 &   &  &&&&&&&&&&&&&&n&\ldots &n&n& o&\ldots&o&n&n&\ldots & n & n\\  
$\sigma':$ & $x_0$   & $x_1$ & $x_2$ & $\ldots$ & $x_{k-1}$ & $|x_k$ &$x_1$ & $x_2$ & $\ldots$ & $x_{\ell-1}$ & $y_{1}$ & $x_1$ & $y_{2}$  & $\ldots$ & $y_{k-\ell}$& $|y_{k-\ell+1}$ & $\ldots$ & $y_{k-1}$ & $y_{k}$ &  $y_{1}$ &  $\ldots$ & $y_{k-\ell}$ & $|x_1$ & $x_2$ & $\ldots$ & $x_{\ell-1}$& $x_{\ell}$\\
 $\sigma:$ &        & $x_1$ & $x_2$ & $\ldots$ & $x_{k-1}$ & $x_k$ &$x_1$ & $x_2$ & $\ldots$ & $x_{\ell-1}$ & $|y_{1}$ & $x_1$ & $y_{2}$  & $\ldots$ & $y_{k-\ell}$& $y_{k-\ell+1}$ & $\ldots$ & $y_{k-1}$ & $|y_{k}$ &  $y_{1}$ &  $\ldots$ & $y_{k-\ell}$ & $x_1$ & $x_2$ & $\ldots$ & $x_{\ell-1}$& $|x_{\ell}$\\
&&  &&&&&&&&&&&&&&& &&n& o&\ldots&o&o&n&\ldots & n & n\\ 
\end{tabular}
\normalsize
\end{center}

For each phase $h$ the ratio between faults in $\sigma_h'$ and $\sigma_h$ is $\frac{\ell(1+H_k-H_\ell)}{\ell-1+H_k-H_{\ell-1}}$. Taking $\ell=\lceil H_k \rceil$ it is easy to show that this ratio is $\Theta(H_k)$. By the argument above the sequences can be extended to an arbitrary length maintaining the phase configurations and the theorem follows.\qed
\end{proof}}{}
\end{FULL}

\begin{FORPROOFSONLYEXCLUDETHIS}
We conjecture that the lower bound for \MARK is tight, i.e., that \MARK is $(\alpha, \beta)$-smooth for $\alpha$ as defined in Theorem~\ref{thm:mark} and some $\beta$.

We now prove that \RANDOM achieves the same bounds for smoothness as \LRU and the best possible for any deterministic, demand-paging or competitive algorithm. \begin{SHORT}For simplicity, we prove the theorem for a non-demand-paging definition of \RANDOM in which each page gets evicted upon a miss with probability $1/k$ even if the cache is not yet full.\end{SHORT} Intuitively, the  additive term $k+1$ in the smoothness of \RANDOM is explained by the fact that a single difference between two sequences can make the caches of both executions differ by one page $p$. Since $\RANDOM$ evicts a page with probability $1/k$, the expected number of faults until $p$ is evicted is $k$. 
\end{FORPROOFSONLYEXCLUDETHIS}

\begin{restatable}[Smoothness of \RAND]{theorem}{thmrandomsmoothness}
\label{thm:rand}~\\
\indent $\RAND$ is $(1,\delta(k+1))$-smooth. This is tight.
\end{restatable}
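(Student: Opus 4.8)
The plan is to first prove the one-change version of the upper bound, namely that $\RAND(\sigma') \le \RAND(\sigma) + k + 1$ whenever $\Delta(\sigma,\sigma') = 1$; the general statement $(1,\delta(k+1))$ then follows immediately from Corollary~\ref{cor:1delta}. To prove the one-change bound I would use a coupling between the execution of \RAND on $\sigma$ and its execution on $\sigma'$. Align the two sequences so that they share a common prefix, then differ in a single position (an insertion, deletion, or substitution), then share a common suffix. On the common prefix the coupling has the two executions make identical random choices, so their cache contents coincide right before the change. A short case analysis over the type of the change, and over whether the cache is already full, then shows that after serving the differing request the two caches either are equal or differ in exactly one page---meaning the symmetric difference of their contents has size at most two, where in the not-yet-full case one cache may be a proper superset of the other---and that the $\sigma'$-execution has been charged at most one \emph{extra fault} so far, where an extra fault is a request on which the $\sigma'$-execution faults but the $\sigma$-execution does not.

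The core of the argument is a resynchronization claim for the common suffix: as long as the two caches differ, couple the eviction choices so that (i) whenever both executions fault they evict the same cache slot, which removes the discrepancy exactly when that slot holds the single discrepant page (probability $1/k$) and otherwise preserves a single-page discrepancy, and (ii) whenever only the $\sigma'$-execution faults---necessarily on the one page it is missing, namely the discrepant page residing in the $\sigma$-cache---it evicts that discrepant page, hence removing the discrepancy, with probability $1/k$. With this coupling an extra fault can occur only on a request to the discrepant page held by the $\sigma$-cache, the discrepancy never increases, and once the two caches coincide the coupling keeps them equal forever. Letting $N$ be the number of extra faults incurred on the suffix, we therefore get $\Pr[N \ge j] \le (1 - 1/k)^{j-1}$ for every $j \ge 1$, so $\mathbb{E}[N] \le \sum_{j \ge 1}(1 - 1/k)^{j-1} = k$. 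By linearity of expectation, $\RAND(\sigma') \le \RAND(\sigma) + 1 + k$, which is the one-change bound.

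For tightness I would show that for every $\delta$ and every $\epsilon > 0$ the bound $(1,\delta(k+1-\epsilon))$ fails. Take $\sigma = (1,2,\dots,k)^N$, on which \RAND faults exactly $k$ times, and build $\sigma'$ from $\sigma$ by inserting, at $\delta$ far-apart positions, one request to a fresh page $k+1$, so that $\Delta(\sigma,\sigma') = \delta$. Each such insertion costs one fault and evicts a uniformly random cached page; afterwards every pass through $1,\dots,k$ costs exactly one fault and restores the cache $\{1,\dots,k\}$ precisely when that fault evicts $k+1$, which happens with probability $1/k$. Thus each insertion contributes an expected $k+1-o(1)$ additional faults as the spacing grows, and choosing $N$ large enough yields $\RAND(\sigma') > \RAND(\sigma) + \delta(k+1-\epsilon)$, as required.

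The step I expect to require the most care is the case analysis behind the resynchronization claim when the caches are not yet full: there the discrepancy can take the form of one cache properly containing the other, and one must confirm that the invariant ``an extra fault occurs only on the discrepant page, and resolves the discrepancy with probability at least $1/k$'' still holds---it does, and in the superset direction an extra fault in fact resolves it with probability one. Getting the coupling and this invariant right across all of these sub-cases, rather than the (routine) geometric-tail estimate, is where the real work lies.
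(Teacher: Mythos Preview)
Your argument is correct, and it takes a genuinely different route from the paper's proof.

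The paper does not argue by coupling.  Instead, it switches to a non-demand-paging variant of \RAND (each cached page is evicted with probability $1/k$ on every miss, even before the cache is full) and works with an \emph{earth mover's distance} between the full \emph{state distributions} reached on $\sigma$ and on $\sigma'$.  The ground metric is $d(s,s')=k\cdot H_{c(s,s')}$ with $c(s,s')=\max\{|s\setminus s'|,|s'\setminus s|\}$.  The proof first shows that after the single differing request the distance between distributions is at most $k$, and then establishes a per-request potential inequality
\[
E\bigl[M_{D_{i-1}}(\sigma_i)\bigr]-E\bigl[M_{D'_{i-1}}(\sigma_i)\bigr]\ \le\ \Delta(D_{i-1},D'_{i-1})-\Delta(D_i,D'_i),
\]
so the miss difference on the common suffix telescopes to the initial distance $\le k$.

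Your approach replaces this machinery by an explicit coupling that maintains the invariant ``the two caches differ in at most one page,'' followed by a geometric tail bound on the number of extra faults.  Conceptually this is the same phenomenon viewed through the Kantorovich--coupling duality: your coupling is a particular transport plan, and your invariant shows the ground distance stays at $c\le 1$ throughout, so the elaborate metric $k\cdot H_c$ is never needed.  What you gain is a shorter, more elementary argument that works directly for demand-paging \RAND (you handle the not-yet-full case by hand).  What the paper's framework gains is reusability: the same earth-mover's template, with a different ground metric, is what the paper uses later to analyse \lrurandom.

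Two small remarks on your write-up.  First, the step ``$\Pr[N\ge j]\le(1-1/k)^{j-1}$'' hides a stopping-time argument: the times of the extra faults are random (they depend on which page is currently discrepant), so you should say explicitly that the eviction at the $i$-th extra fault is a fresh uniform choice independent of the history, hence resolves the discrepancy with probability exactly $1/k$ conditional on that fault occurring.  Second, for the lower bound the paper inserts \emph{distinct} fresh pages $y_1,\dots,y_\delta$ rather than the same page $k{+}1$ each time; your version also works provided the spacing is large enough that the cache has relaxed back to $\{1,\dots,k\}$ with high probability before the next insertion, which is exactly what your ``$o(1)$ as the spacing grows'' captures.
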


\begin{FULL}
\ifthenelse{\boolean{proof-thmrandomsmoothness}}{
\begin{proof}
For this theorem we use a non-demand paging definition of $\RAND$ that, upon a fault, it evicts a page with probability $1/k$, even if the cache is not yet full.
This modification with respect to the demand-paging version does not change the competitiveness of the algorithm and it allows us to avoid the analysis of special cases when proving properties about smoothness. In fact, for the non-demand paging version of the algorithm, given any pair of sequences $\sigma,\sigma'$, it is possible to construct two sequences $\rho$ and $\rho'$ with $\Delta(\rho,\rho')=\Delta(\sigma,\sigma')$ such that the number of faults of $\sigma$ and $\sigma'$ starting from an empty cache equals the number of faults of $\rho$ and $\rho'$ starting with a full cache containing an arbitrary set of pages. This can be achieved by renaming in $\sigma$ and $\sigma'$ any occurrences of the pages in the initial cache so that these pages do not appear in the rest of the sequences. This implies that any property derived on the smoothness of the algorithm starting with an empty cache can also be achieved when the cache is assumed to be initially full. Note that the same property holds for $\lrurandom$, which is introduced in Section~\ref{sec:lrurandom}.

For the lower bound, we use a similar construction as the one used for the lower bound of \RAND's competitiveness in~\cite{DBLP:conf/icalp/RaghavanS89}. Consider the sequences 
$\sigma=\sigma_{1...k}\cdot\sigma_{1...k}$
and
$\sigma'=\sigma_{1...k}\cdot x_{k+1}\cdot\sigma_{1...k}$
with
$\sigma_{1...k} = (x_1,x_2,\ldots, x_k)^n$. 
The sequences are identical but for the insertion of $x_{k+1}$ in $\sigma'$ and thus $\Delta(\sigma,\sigma')=1$. 
For any $\epsilon > 0$, the number of faults in the second half of $\sigma$ is less than $\epsilon$ for a sufficiently large $n$.
On $\sigma'$, $\RAND$ faults on $x_{k+1}$ and evicts one of $x_1,\ldots,x_k$. Then, on each of the $n$ subsequences of $k$ requests in the second part of $\sigma'$, and while $x_{k+1}$ is still in its cache, $\RAND$ will incur a fault. If in one of these faults $\RAND$ evicts $x_{k+1}$, then it does not incur any faults for the rest of the sequence. Since on every fault $\RAND$ evicts $x_{k+1}$ with probability $1/k$, the expected number of faults until this happens exceeds $k-\epsilon$ for any $\epsilon > 0$ for sufficiently large $n$. This, plus the initial request to $x_{k+1}$ yield $\RAND(\sigma')\ge \RAND(\sigma)+k+1-2\epsilon$ for any $\epsilon$ and sufficiently large $n$. Now, for general $\delta$ we follow the same idea: instead of one, we have $\delta$ subsequences $(x_1\ldots x_k)^n$ in $\sigma$ and $\delta$  subsequences $y_{i}(x_1\ldots x_k)^n$, where $y_{i}$ ($1 \le i \le \delta)$ is a new page not requested so far, and it is distinct in every repetition. The number of expected faults in each repetition in $\sigma'$ is at least $k+1-\epsilon$ for any $\epsilon$ and sufficiently large $n$, while $\RAND$ does not incur extra faults. Thus,  $\RAND(\sigma')\ge \RAND(\sigma)+\delta(k+1)-\epsilon(\delta+1)$. 

In order to prove the upper bound we look at the state distributions of \RAND~when serving two sequences $\sigma$ and $\sigma'$ with $\Delta(\sigma,\sigma')=1$. We use a potential function defined as the distance between two state distributions. For this distance, we define a version of the earth mover's distance. Let $D$ and $D'$ be two probability distribution of cache states. We define the distance between $D$ and $D'$ as the minimum cost of transforming $D$ into $D'$ by means of transferring probability mass from the states of $D$ to the states of $D'$. 

Let $s$ and $s'$ be two cache states in $D$ and $D'$ with probabilities $p_s$ and $p_{s'}$, respectively. Let $\alpha$ be a function that denotes the amount of probability mass to be transferred from states in $D$ to states in $D'$. The earth mover's distance between $D$ and $D'$ is defined as

\[\Delta(D, D') := \min_{\alpha} \sum_{s,s'} \alpha(s, s')\cdot d(s,s'),\]
where for all $s$, $\sum_{s'} \alpha(s,s')=p_s$, for all $s'$, $\sum_{s} \alpha(s,s')=p_{s'}$, and $d(s,s')$ is the distance between states $s$ and~$s'$. We define $d(s,s')=k \cdot H_{c(s,s')}$, where $c(s,s')=\max\{|s\setminus s'|,|s'\setminus s|\}$, and $H_\ell$ is the $\ell^{th}$ harmonic number. Note that $|s\setminus s'|$ might not equal $|s'\setminus s|$ if either state does not represent a full cache. 
For convenience we let $H_0=0$. We now prove the following claim:

\begin{claim}
Let $D$ and $D'$ be two probability distributions over cache states. Let $\sigma$ be any request sequence and let $M_D(\sigma)$ and $M_{D'}(\sigma)$ be two random variables equal to the number of misses on $\sigma$ by \RAND~when starting from distributions $D$ and $D'$, respectively. Then, $E[M_D'(\sigma)] - E[M_{D}(\sigma)]  \le \Delta(D,D')$. 
\end{claim}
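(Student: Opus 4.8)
\emph{Overall approach.} The plan is to prove the claim by induction on the length of $\sigma$, peeling off one request at a time, and to control how the distance $\Delta$ between the two state distributions evolves by means of a coupling argument. If $\sigma$ is empty, both expected miss counts are $0$ and $\Delta(D,D')\ge 0$, so the inequality holds. For the inductive step, write $\sigma=p\rho$ and let $D_p$ and $D'_p$ denote the distributions of \RAND's cache state after serving the single request $p$ starting from $D$ and $D'$. By linearity of expectation, $E[M_D(p\rho)] = \Pr_{s\sim D}[p\notin s] + E[M_{D_p}(\rho)]$, and similarly for $D'$. Applying the induction hypothesis to $\rho$ with $D_p,D'_p$ in place of $D,D'$, it remains to show
\[
\Bigl(\Pr_{s'\sim D'}[p\notin s'] - \Pr_{s\sim D}[p\notin s]\Bigr) + \Delta(D_p,D'_p) \;\le\; \Delta(D,D').
\]

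\emph{Reducing to a pointwise inequality.} Fix a coupling $\alpha$ achieving $\Delta(D,D')$. For each pair of states $(s,s')$ in its support I would couple \RAND's random eviction on request $p$ from $s$ with the one from $s'$, obtaining a joint law on successor pairs $(t,t')$. Pushing $\alpha$ forward through this one-step coupling produces a valid (not necessarily optimal) coupling of $D_p$ and $D'_p$, so $\Delta(D_p,D'_p)\le \sum_{s,s'}\alpha(s,s')\,E[d(t,t')]$, while the miss-probability difference equals $\sum_{s,s'}\alpha(s,s')\,(m_p(s')-m_p(s))$, where $m_p(s):=1$ if $p\notin s$ and $0$ otherwise. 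Hence it suffices to establish, for every pair of states $(s,s')$, the pointwise bound
\[
m_p(s') - m_p(s) + E\!\left[\,d(t,t')\,\right] \;\le\; d(s,s'), \qquad d(s,s')=k\,H_{c(s,s')}.
\]

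\emph{The pointwise inequality (the main obstacle).} This is the heart of the argument and the one step requiring care. The eviction coupling I would use is: if $p$ is cached in exactly one of $s,s'$, the side holding $p$ does nothing while the other evicts a uniformly random page; if $p$ is in neither, couple the two uniform evictions so that common pages are evicted simultaneously and, whenever one side evicts a page of $s\setminus s'$, the other evicts a page of $s'\setminus s$ (coupling the no-eviction branches that occur only for not-yet-full caches in the same spirit). Writing $c=c(s,s')$, a short calculation then shows: in the both-hit case the distance is unchanged; in the both-miss case $E[d(t,t')]=(k-c)H_c+cH_{c-1}$; and in the case where one side hits and the other misses, $m_p(s')-m_p(s)+E[d(t,t')] = \pm1+(k-c)H_c+cH_{c-1}$. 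In each case the required inequality reduces to comparing $c\,(H_c-H_{c-1})$ with $1$, and since $H_c-H_{c-1}=1/c$ this holds, with equality exactly in the hit/miss case --- which is precisely why the bound in Theorem~\ref{thm:rand} is tight. Summing the pointwise bound against $\alpha$ gives $\Pr_{s'\sim D'}[p\notin s']-\Pr_{s\sim D}[p\notin s]+\Delta(D_p,D'_p)\le \sum_{s,s'}\alpha(s,s')\,d(s,s')=\Delta(D,D')$, closing the induction.

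\emph{Remark on difficulty.} The delicate points are choosing the eviction coupling so that the harmonic sum telescopes cleanly and handling the bookkeeping for caches that are not yet full; the latter can be sidestepped entirely via the reduction to an initially full cache used in the proof of Theorem~\ref{thm:rand}, under which the non-demand-paging \RAND keeps a full cache full and all states in the support of $D,D'$ have $|s\setminus s'|=|s'\setminus s|=c(s,s')$.
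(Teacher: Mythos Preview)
Your proposal is correct and follows essentially the same approach as the paper's proof: fix an optimal transport $\alpha$ realizing $\Delta(D,D')$, couple the random evictions on each pair $(s,s')$ so that common pages are evicted together and difference pages are paired off, and then do the four-case analysis on whether $p$ lies in $s$ and/or $s'$; in each case the contribution to $\Delta$ drops by at least the signed miss difference, because $c(H_c-H_{c-1})=1$. The only cosmetic difference is that you phrase it as an induction on $|\sigma|$, whereas the paper writes the same one-step inequality as a telescoping potential bound summed over all requests; your bookkeeping of $m_p(s')-m_p(s)$ per pair is in fact slightly tighter than the paper's, which crudely upper-bounds the miss difference by $\Pr_D[\text{miss}]$ alone before comparing with the potential drop.
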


Let us assume that the claim is true. Then, we prove the theorem by considering two sequences $\rho$ and $\rho'$ such that $\delta=\Delta(\rho,\rho')=1$ and arguing that $\Delta(D,D')\le k$ for any pair of distributions $D$ and $D'$ that can be reached, respectively, by serving prefixes of $\rho$ and $\rho'$ starting from an empty cache. If this prefix includes the single difference between both sequences, then the theorem follows by applying the claim above to the maximal suffix $\sigma$ shared by both sequences. 

Let $j$ be the minimum $j$ such that $\rho[(j+1)..|\rho|]=\rho'[(j+1)..|\rho'|]=\sigma$. Then, $\rho_{j}\ne\rho'_{j}$ (one of the two might be empty) and $\rho[1..(j-1)]=\rho'[1..(j-1)]$. Since $\RAND(\rho)$ and $\RAND(\rho')$ start both with an empty cache, their distributions and expected misses before serving $\rho_j$ and $\rho_j'$ coincide. We now argue that after serving $\rho_j$ and $\rho_j'$ the distance between the resulting distributions $D$ and $D'$ is at most $k$.

Let $F$ be the state distribution of both executions before serving $\rho_j$ and $\rho_j'$. 
Suppose first that $\rho_j'$ is empty and thus $D'=F$ (the case when $\rho_j$ is empty is symmetric). We look at the minimum cost to transfer the probability mass from each state from $F$ to $D$. Let $s_i$ be a state in $F$ with probability $p_i$. If $\rho_j \in s_i$ then $s_i$ has probability at least $p_i$ in $D$ and hence we can transfer $p_i$ mass between these states in $F$ and $D$ at cost zero. Otherwise, if  $\rho_j \notin s_i$, $D$ contains $k$ states $s_{i_1}',\ldots,s_{i_k}'$ resulting from the eviction of each of the $k$ pages of $s_i$, with $c(s_i,s_{i_r}')=1$ and hence $d(s_i,s_{i_r}')=kH_1=k$ for all $1\le r \le k$. Moreover, the probability of these states is at least $p_i/k$ and hence we can transfer all the mass of $s_i$ to these states at a total cost of $\sum_{r=1}^k k(p_i/k)= kp_i$. Adding up over all states $s_i \in F$, we can transfer all probability mass of $F$ to $D$ at a cost of at most $k\sum {p_i}=k$, since $\sum {p_i}=1$. Since the distance between $D$ and $F$ is the minimum cost of transferring the probability mass from $F$ to $D$, this cost is at most $k$.
For the case when  $\rho_j\ne  \rho_j'$ and neither is empty, we apply a similar argument. Let $s_i$ be a state in $F$ with probability $p_i$. If both  $\rho_j$ and $\rho_j'$ are in $s_i$, then this state is also in $D$ and $D'$, an we can transfer $p_i$ from $D$ to $D'$ at cost zero. Assume that $\rho_j \in s_i$ but $\rho_j'\notin s_i$. Then, as we argued above, in $D'$ there are $k$ states with probability at least $p_i/k$ with distance 1 to $s_i$. Since $s_i\in D$, we can transfer a mass of $p_i$ to these states at a cost of $kp_i$. Now, if $\rho_j \notin s_i$ but $\rho_j'\in s_i$, $s_i$ is in $D'$ and there are $k$ states in $D$ with distance 1 to $s_i$. We can transfer $p_i/k$ mass from each of these states in $D$ to $s_i$ in $D'$ at a cost of $kp_i$. Finally, if $\rho_j \notin s_i$ and $\rho_j'\notin s_i$, then there are $k$ pairs of states $(s,s')$ with $s\in D$ and $s'\in D'$ resulting from the replacement of the same page in $s_i$ by $\rho_j$ and $\rho_j'$, respectively, and thus $c(s,s')=1$. In the distance $\Delta(D,D')$ we can transfer $p_i/k$ from $s$ to $s'$ at a cost of $k$. Since there are $k$ such such pairs for each $s_i$, the total cost contributed by these pairs is $p_ik$. Since for all cases the cost contributed by a state $s_i \in F$ when transferring mass from $D$ to $D'$ is at most $kp_i$, the distance  $\Delta(D,D')$ is at most $k\sum {p_i}=k$.

Since serving $\rho_j$ and $\rho_j'$ can add at most 1 to the difference in expected misses, and by the claim above the difference in expected misses in the suffix $\sigma$ is at most $\Delta(D,D')=k$, it follows that $E[\RAND(\rho')- \RAND(\rho)] \le k+1$. The theorem follows by Corollary~\ref{cor:1delta}.

\todo{define substrings of sequences and pages within sequences somewhere}
We now prove the claim. Let $M_{D}(\sigma_i)$ denote the number of misses of \RAND~when $\sigma_i$ is requested and when the state distribution of \RAND~is $D$. Let $D_0=D$ and $D_0'=D'$. Then, it is sufficient to prove that for every request $\sigma_i \in \sigma$, for $1\le i \le |\sigma|$, 
\begin{equation}
E[M_{D_{i-1}}(\sigma_i)]-E[M_{D'_{i-1}}(\sigma_i)] \le \Delta(D_{i-1},D'_{i-1})-\Delta(D_{i},D'_{i}). 
\label{eq:potential}
\end{equation}

This implies that $E[M_D(\sigma)] - E[M_{D'}(\sigma)] \le  \Delta(D_{0},D'_{0}) - \Delta(D_{f},D'_{f}) \le \Delta(D_{0},D'_{0}) = \Delta(D,D')$, since $\Delta(\cdot,\cdot)\ge 0$ for any pair of distributions.

Let $D_{i-1}$ and $D_{i-1}'$ be the distributions before the request to $\sigma_i$. \[\Delta(D_{i-1},D'_{i-1})=\sum_{s_u,s_v}\alpha(s_u,s_v)d(s_u,s_v),\] 
where $\alpha(s_u,s_v)$ is the amount of mass transferred from $s_u$ to $s_v$ (which could be zero). We look at two states $s_u \in D$ with probability $p_u$ and $s_v\in D'$ with probability $p_v$ and construct a valid assignment $\alpha'$ after the request to $\sigma_i$ for the distance $\Delta(D_{i},D'_{i})$.

 We separate the analysis in the following cases:

\begin{enumerate}
\item $[\sigma_i \in s_u, s_v]$ In this case $s_u \in D_i$ with probability at least $p_u$ and $s_v \in D'_i$ with probability at least $p_v$. Hence, since $\alpha(s_u,s_v)\le p_u, p_v$ we can make $\alpha'(s_u,s_v)=\alpha(s_u,s_v)$. The contribution of this pair of states to   $\Delta(D_{i},D'_{i})$ is $\alpha(s_u,s_v)d(s_u,s_v)= \alpha(s_u,s_v)kH_{c(s_u,s_v)}$.
\item $[\sigma_i \notin s_u, s_v]$ There are $k$ states $r=\{r_1,\ldots,r_k\}$ in $D_i$ and $t=\{t_1,\ldots,t_k\}$ in $D_i'$ resulting from the eviction of each page of $s_u$ and $s_v$, respectively. The probability of each state of $r$ and $t$ is at least $p_u/k$ and $p_v/k$, respectively. Let $c=c(s_u,s_v)$ and $\alpha=\alpha(s_u,s_v)$. If $c=0$, then we pair states in $r$ and $t$ such that $r_{j_1}=t_{j_2}$ and we make $\alpha'(r_{j_1},t_{j_2})=\alpha/k$. Otherwise, there are $c$ pages that $s_u$ and $s_v$ do not have in common. We sort the states in $r$ and $s$ such that the first $c$ states are those that result from evicting a page from $s_u$ that is not in $s_v$ and vice versa, while the rest of the states are the ones resulting from evicting a common page. We pair the states in order and set $\alpha'(r_j,t_j)=\alpha/k$. Note that $c(r_j,t_j)=c-1$ for all $j\le c$ and $c(r_j,t_j)= c$ for all $j> c$. The  contribution of this pair of states to $\Delta(D_{i},D'_{i})$ is at most

$$(\alpha/k) (ckH_{c-1}+(k-c)kH_{c})=
\alpha(kH_{c}+c(H_{c-1}-H_c))=
\alpha(kH_{c}-1)
$$

\item $[\sigma_i \in s_u, \sigma_i \notin s_v]$ We transfer $\alpha(s_u,s_v)/k$ to the $k$ states in $D'_{i}$ resulting from evictions from $s_v$. As in case 2. there are $c=c(s_u,s_v)$ states that result from evicting a non-common page with $s_u$ and the rest evict a common page. Each of the first $c$ states has $c-1$ non-common pages with $s_u$, while the rest have $c$ non-common pages. Hence, the contribution of these states to $\Delta(D_{i},D'_{i})$ is $\alpha(s_u,s_v)(kH_{c(s_u,s_v)}-1)$.
\item $[\sigma_i \notin s_u, \sigma_i \in s_v]$ This case is analogous to case 3. We transfer $\alpha(s_u,s_v)/k$ mass to $s_v \in D'$ from each of the $k$ states in $D$ that result from evictions from $s_u$. The contribution of these states is $\alpha(s_u,s_v)(kH_{c(s_u,s_v)}-1)$.
\end{enumerate}

Since in the cases above we account for all the probability mass of all possible states in $D_i$ and $D_i'$, the described mass transfer is a valid distance between the distributions, and its cost is:

\begin{eqnarray*}
\Delta(D_{i},D'_{i}) & \le   &\sum_{s_u,s_v |\sigma_i\in s_u,s_v} \alpha(s_u,s_v)kH_{c(s_u,s_v)}+\sum_{s_u,s_v |\sigma_i\notin s_u, \sigma_i \in s_v} \alpha(s_u,s_v)(kH_{c(s_u,s_v)}-1)\\
&  & + \sum_{s_u,s_v |\sigma_i\in s_u, \sigma_i \notin s_v} \alpha(s_u,s_v)(kH_{c(s_u,s_v)}-1) + \sum_{s_u,s_v |\sigma_i\notin s_u, \sigma_i \in s_v}  \alpha(s_u,s_v)(kH_{c(s_u,s_v)}-1)\\
& = & \Delta(D_{i-1},D'_{i-1}) - \sum_{s_u,s_v |\sigma_i \notin s_u \vee \sigma_i \notin s_v}\alpha(s_u,s_v)\\
& \le &  \Delta(D_{i-1},D'_{i-1}) - \sum_{s_u,s_v |\sigma_i \notin s_u}\alpha(s_u,s_v)
\label{eq:}
\end{eqnarray*}

Therefore, $\Delta(D_{i-1},D'_{i-1})-\Delta(D_{i},D'_{i})\ge \sum_{s_u,s_v |\sigma_i \notin s_u}\alpha(s_u,s_v)$.
On the other hand,  $E[M_{D_{i-1}}(\sigma_i)] - E[M_{D'_{i-1}}(\sigma_i)] \le E[M_{D_{i-1}}(\sigma_i)] = \sum_{s_u,s_v |\sigma_i \notin s_u}\alpha(s_u,s_v)$, and hence $E[M_{D_{i-1}}(\sigma_i)] - E[M_{D'_{i-1}}(\sigma_i)] \le \Delta(D_{i-1},D'_{i-1})-\Delta(D_{i},D'_{i})$. \qed
\end{proof}}{}
\end{FULL}

\begin{FORPROOFSONLYEXCLUDETHIS}

\nvsp
\nvsp
\subsection{Trading Competitiveness for Smoothness}
\nvsp

We have seen that none of the well-known randomized algorithms are particularly smooth.
\RANDOM is the only known randomized algorithm that is $(1, \delta c)$-smooth for some $c$.
However, it is neither smoother nor more competitive than \LRU, the smoothest deterministic algorithm. In this section we show that greater smoothness can be achieved at the expense of competitiveness. First, as an extreme example of this, we show that Evict-on-access (\EOA)~\cite{Cazorla13}---the policy that evicts each page with a probability of $\frac{1}{k}$ upon \emph{every} request, i.e., not only on faults but also on hits---beats the lower bounds of Theorems~\ref{thm:lowerboundrandomizeddemandpaging} and~\ref{thm:lowerboundrandomizedstronglycompetitive} and is strictly smoother than \opt. This policy is non-demand paging and it is obviously not competitive. We then introduce \smoothlru, a parameterized randomized algorithm that trades competitiveness for smoothness.

\end{FORPROOFSONLYEXCLUDETHIS}

\begin{restatable}[Smoothness of \EOA]{theorem}{thmeoasmoothness}
\label{thm:eoa}~\\
\indent$\EOA$ is $(1,\delta(1+\frac{k}{2k-1}))$-smooth. This is tight.
\end{restatable}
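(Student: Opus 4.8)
The plan is to reduce to a single edit, analyse it via a coupling of the two runs, and control the residual cost by a potential function. By Corollary~\ref{cor:1delta} it suffices to prove that \EOA is $(1,1+\tfrac{k}{2k-1},1)$-smooth; the factor $\delta$ then follows automatically, and for tightness at a generic $\delta$ it suffices to build a distance-$1$ gadget realising the bound and concatenate $\delta$ disjoint copies of it (on pairwise disjoint page sets, so the gadgets do not interfere). As in the proof of Theorem~\ref{thm:rand}, and since \EOA is non-demand-paging, we may analyse both runs starting from an empty cache.

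For the upper bound, fix $\sigma,\sigma'$ with $\Delta(\sigma,\sigma')=1$ and couple the random $\tfrac1k$-eviction choices of \EOA on the two runs. On the common prefix the runs behave identically, so their cache-state distributions coincide up to the edit; the edit itself contributes at most one extra fault and, under the coupling, leaves the two runs in states that differ in at most one cached page (either a ``swap'' $C\cup\{q\}$ versus $C\cup\{p\}$, or a containment relation when a cache is not full). I would then exhibit a potential $\Phi$ on coupled pairs of cache states that is $0$ on equal states, is at most $\tfrac{k}{2k-1}$ on every pair reachable right after an edit, and on every subsequent request drops in expectation by at least the expected difference in faults incurred by the two runs on that request. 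Summing the last property over the common suffix bounds the extra faults there by $\tfrac{k}{2k-1}$, giving $\EOA(\sigma')\le\EOA(\sigma)+1+\tfrac{k}{2k-1}$. The constant $\tfrac{k}{2k-1}$ is the fixed point of the recurrence governing the worst-case one-page discrepancy left after one request: because \EOA gives the spurious page a $\tfrac1k$ eviction chance on \emph{every} request --- hits included --- it is flushed before it can cause many faults, and the value balances the event that a request to the page present on only one side yields an extra fault against the (total probability $\approx\tfrac1k$) events in which the discrepancy is healed for free.

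The main obstacle is verifying the drop property of $\Phi$. One must enumerate, for a coupled pair differing in one page and for each kind of next request (to the page present on only one side, to a page common to both caches, or to a fresh page), all joint outcomes of the coupled eviction, check that a one-page discrepancy never grows into a two-page discrepancy (so that $\Phi$ stays on its small scale rather than jumping to $\Theta(k)$ as the \RAND potential would), and confirm that in every case the decrease of $\Phi$ covers the per-request fault gap. The non-demand-paging subtlety --- that a ``hit'' can itself turn into a fault when the requested page is the one that gets evicted --- has to be tracked throughout the case analysis.

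For tightness, take a sequence on which \EOA behaves as periodically as possible, e.g.\ $\sigma=(x_1,\dots,x_k)^n$, and let $\sigma'$ insert one fresh page $x_{k+1}$ near the middle. The inserted request is a guaranteed fault, and a direct computation of the discrepancy process over the remaining $(x_1,\dots,x_k)$-requests shows that the extra expected faults converge to $\tfrac{k}{2k-1}$ as $n\to\infty$, so $\EOA(\sigma')-\EOA(\sigma)\to 1+\tfrac{k}{2k-1}$. Concatenating $\delta$ such gadgets on disjoint page sets yields $\sigma,\sigma'$ at edit distance $\delta$ with $\EOA(\sigma')-\EOA(\sigma)=\delta\bigl(1+\tfrac{k}{2k-1}\bigr)-o(1)$, matching the upper bound.
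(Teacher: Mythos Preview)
Your tightness construction is wrong. For $\sigma=(x_1,\dots,x_k)^n$ with a single fresh page $x_{k+1}$ inserted, the only requests whose reuse distance changes are the first $x_1,\dots,x_k$ after the insertion: each goes from distance $k-1$ to $k$. Using the exact hit formula $P(\textit{hit})=(1-1/k)^{d}$ for \EOA, the extra expected faults are
\[
1+\sum_{i=1}^{k}\bigl[(1-1/k)^{k-1}-(1-1/k)^{k}\bigr]=1+(1-1/k)^{k-1},
\]
which for every $k\ge 2$ is strictly below $1+\tfrac{k}{2k-1}$ (e.g.\ $3/2$ versus $5/3$ at $k=2$, and $1+1/e$ versus $3/2$ as $k\to\infty$). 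The cyclic sequence is therefore not extremal. The paper's tight example is a palindrome: $\rho=x_1\cdots x_m$ with $m$ distinct pages, $\rho'$ its reverse, $\sigma=(\rho\rho')^\delta$, and the insertion is placed between $\rho$ and $\rho'$. Then the affected reuse distances in $\rho'$ are $0,2,4,\dots$, and the extra faults equal $1+\tfrac{1}{k}\sum_{i\ge 0}(1-1/k)^{2i}\to 1+\tfrac{k}{2k-1}$ as $m\to\infty$. The ``diagonal'' pattern $d_i=2i$ is what makes the bound tight; a flat pattern $d_i\equiv k-1$ does not.

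Your upper-bound plan also misses the key idea. The paper does \emph{not} use a coupling/potential argument here; it exploits the special feature of \EOA that the hit probability of every request is exactly $(1-1/k)^{d}$ where $d$ is the reuse distance. Thus $\EOA(\sigma)=\sum_i\bigl(1-(1-1/k)^{d_i}\bigr)$, and a single insertion raises by one the reuse distance of a set $S$ of later requests, giving
\[
\EOA(\sigma')-\EOA(\sigma)=1+\tfrac{1}{k}\sum_{i\in S}(1-1/k)^{d_i}.
\]
The whole problem becomes the combinatorial maximisation of $\sum_{i\in S}(1-1/k)^{d_i}$ over all valid reuse-distance configurations, and the paper proves the maximum is the diagonal $d_i=2i$, yielding $\sum_i(1-1/k)^{2i}=k^2/(2k-1)$. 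Your ``fixed point of the recurrence'' heuristic does not produce this constant: in the coupled one-swap state the adversary is oblivious to \emph{which} page was evicted, so the per-request fault gap is not simply $1$, and the healing probability is not simply $1/k$; without re-deriving the diagonal structure it is unclear how a state-pair potential would land exactly on $k/(2k-1)$. The reuse-distance decomposition is both the right tool for the upper bound and the source of the correct tight example.
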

\nvsp

\begin{FULL}
\ifthenelse{\boolean{proof-thmeoasmoothness}}{
\begin{proof}
We show that \EOA is $(1,1+k/(2k-1),1)$-smooth. The theorem follows from Corollary~\ref{cor:1delta}. 
Let $\sigma$ be a sequence and let $\sigma_i$ denote the $i$-th request in $\sigma$. We denote by $d_i$ the \emph{reuse distance} of $\sigma_i$, i.e., the number of requests since the last request to this page in $\sigma$ (not including $\sigma_i$). If $\sigma_i$ is the first request to this page, then $d_i=\infty$. 
Consider a page $p$ in \EOA's cache. On a request for another page, $p$ is evicted with probability $1/k$ and thus the probability that $p$ is still in cache after $d$ requests to other pages is $(1-1/k)^d$. Hence, the probability of any request $\sigma_i$ being a hit is $(1-1/k)^{d_i}$ and therefore $\EOA(\sigma)=\sum_{i=1}^{|\sigma|} 1-(1-1/k)^{d_i}$.

Let $\sigma'$ be a sequence resulting from one change to $\sigma$. Assume that this change is an insertion of a new page $p$ that is not requested elsewhere in $\sigma$. The insertion of $p$ increases the reuse distance of all requests after $p$ whose previous request was before $p$. Let $d_i'$ denote the reuse distances of request $i$ in $\sigma'$. For simplicity of notation, assume that if $p$ is requested at index $j$ in $\sigma'$, then we add a request $\sigma_j$ to $\sigma$ for an empty page (which has distance $d_j=0$, does not affect other reuse distances nor requires any action from EOA). Hence, indices in both sequences correspond to the same pages (but for the request to $p$).

Let $S$ denote the set of indices of requests whose distances increase in $\sigma'$ compared to the ones they had in $\sigma$ after the insertion of $\sigma'_j=p$. That is, $S=\{i \mid i>j, d_i'\ge i-j\}$. Then $d_j'=\infty$, $d_i'=d_i+1$ if $i\in S$, and $d_i'=d_i$ otherwise.
Hence, 
\begin{eqnarray}
\EOA(\sigma')-\EOA(\sigma) &=& \left(\sum_{i=1}^{|\sigma'|} 1-\left(1-\frac{1}{k}\right)^{d_i'}\right)- \left(\sum_{i=1}^{|\sigma|} 1-\left(1-\frac{1}{k}\right)^{d_i}\right)\\
&=& 1+\sum_{i \in S}\left(1-\frac{1}{k}\right)^{d_i}-\left(1-\frac{1}{k}\right)^{d_i+1}\\
&=&1+\frac{1}{k}\sum_{i \in S}\left(1-\frac{1}{k}\right)^{d_i} 
\label{eq:eoa}
\end{eqnarray}

We now show that for all $S$ and valid distances $d_i$, $\sum_{i \in S}(1-1/k)^{d_i}\le k^2/(2k-1)$ and hence $\EOA(\sigma')-\EOA(\sigma)\le 1+k/(2k-1) \le 2$.

Let $i\in S$ and let $i'<j$ be the largest index such that $\sigma_{i'}=\sigma_i$. Then, $d_i=i-i'-2\ge 0$ (recall that $\sigma_j$ does not contribute to the reuse distance). It is convenient to represent $d_i$ as a pair $(a_i,b_i)$, where $a_i$ is the number of requests from $\sigma_{i'}$ until $\sigma_j$ and $b_i$ is the number of requests from $\sigma_{j}$ to $\sigma_i$. Thus, $d_i=a_i+b_i$. Let $\vec{d_S}=\{ (a_i,b_i) \}_{i=1}^{|S|}$ denote a configuration of reuse distances of requests with indices in  $S$, and let $h_{S,d}=\sum_{i \in S}(1-1/k)^{d_i}$. We claim that $h_{S,d}$ is maximal when, after the request $j$, all previous requests are requested  in reverse order. This is, $S=\{j+1,\ldots,2j-1\}$, and for all $d_i$, $a_i=b_i=i-j-1$. First, we argue that the maximum  of $h_{S,d}$ is attained when $S=\{j+1,\ldots,2j-1\}$, for some configuration $\vec{d_S}$. The size of $S$ can be at most $j-1$, as there cannot be more than $j-1$ requests whose previous requests are before $\sigma_j$. Now, assume that $S$ does not contain an index from $j+1$ to $2j-1$. Let $t_1$ be the smallest such index. Then, either there exists an index $t_0<j$ such that $\sigma_{t_0}$ is not requested after $\sigma_j$ or the next request to $\sigma_{t_0}$ is after $t_1$. In either case, we can modify $\sigma$ to obtain $S'$ (and the corresponding configuration $d'=d_{S'}$) that includes $t_1$ by making $\sigma_{t_1}=\sigma_{t_0}$, with $h_{S',d'}\ge h_{S,d}$. Suppose that $\sigma_{t_0}$ was not requested after $\sigma_{j}$. Then, making $\sigma_{t_1}=\sigma_{t_0}$ implies $h_{S',d'}= h_{S,d}+(1-1/k)^{t_1-t_0-2} \ge h_{S,d}$. Otherwise, $\sigma_{t_0}$ is requested again at index $t_2>t_1$. In this case, by making $\sigma_{t_1}=\sigma_{t_0}$, $h_{S',d'}= h_{S,d}-(1-1/k)^{t_2-t_0-2}+(1-1/k)^{t_1-t_0-2}$. Since $t_2>t_1$, $h_{S',d'} \ge h_{S,d}$. Thus, the largest value of $h_{S,d}$ is obtained for $S=\{j+1,\ldots,2j-1\}$. 
We now fix $S$ to be this set and show that $h_{S,d}$ is maximized when for all $i\in S$, $a_i=b_i=i-j-1$.

Let $\vec{d_S}=\{ (a_i,b_i) \}_{i=1}^{|S|}$ be a configuration of reuse distances.  We say that a configuration $\vec{d_S}$ is valid if for all pairs $i_1,i_2$ in $\vec{d_S}$ with $i_1 \ne i_2$  , $a_{i_1}\ne a_{i_2}$ and $b_{i_1} \ne b_{i_2}$. Since in a sequence only one request can start and end at each index, any configuration built from reuse distances of a subset $S$ of indices in an actual sequence is valid.

We now show that the maximum $h_{S,d}$ is attained when $a_i=b_i=i-j-1$ for all $i\in S$, i.e.,  for $\vec{d_S}=\{(0,0),(1,1),\ldots,(j-2,j-2)\}$ and $d_i=2i$ for $i=0,\ldots,j-2$. In this case, we say that $\vec{d_S}$ is a diagonal. 

Suppose, to the contrary, that $\vec{d_S}$ is a valid configuration that maximizes $h_{S,d}$ and is not a diagonal. Then, since $\vec{d_S}$ is valid there must exist at least two pairs $(a_{i_1},b_{i_1})$ and $(a_{i_2},b_{i_2})$ in $\vec{d_S}$ with $a_{i_1}\ne b_{i_1}$ and $a_{i_2}\ne b_{i_2}$ and, furthermore, $a_{i_1} > a_{i_2}$ and $b_{i_1} < b_{i_2}$. Let $\vec{d'_S}$ be a new configuration created by removing these two pairs from $\vec{d_S}$ and adding the pairs $(a_{i_1},b_{i_2})$ and $(a_{i_2},b_{i_1})$. It is easy to see that $\vec{d'_S}$ is also a valid configuration. In fact, $\vec{d'_S}$ corresponds to exchanging requests $\sigma_{j+b_1}$ and $\sigma_{j+b_2}$ in $\sigma$. 

We show that $h_{S,d'}> h_{S,d}$. Let $c=(1-1/k)$. Then $h_{S,d'}-h_{S,d}= c^{a_{i_2}+b_{i_1}}+c^{a_{i_1}+b_{i_2}}-(c^{a_{i_1}+b_{i_1}}+c^{a_{i_2}+b_{i_2}})$.
Since $c<1$, $a_{i_1} > a_{i_2}$, and $b_{i_1} < b_{i_2}$, then $c^{a_{i_1}}<c^{a_{i_2}}$ and $c^{b_{i_1}}>c^{b_{i_2}}$. Hence,  $c^{a_{i_2}}(c^{b_{i_1}}-c^{b_{i_2}})> c^{a_{i_1}}(c^{b_{i_1}}-c^{b_{i_2}})$ $\Rightarrow$ $c^{a_{i_2}+b_{i_1}}-c^{a_{i_2}+b_{i_2}}> c^{a_{i_1}+b_{i_1}}-c^{a_{i_1}+b_{i_2}}$ $\Rightarrow$ $c^{a_{i_2}+b_{i_1}}+c^{a_{i_1}+b_{i_2}} > c^{a_{i_1}+b_{i_1}}+c^{a_{i_2}+b_{i_2}} $. Therefore, $h_{S,d'}> h_{S,d}$, and hence $h_{S,d}$ is not maximal. Since for a fixed set $S$ there is a finite number of configurations $\vec{d_S}$, the maximum of $h_{S,d}$ exists and it must be attained when $\vec{d_S}$ is a diagonal.  
Then, for all $S$ and $d_S$, $h_{S,d}\le \sum_{i=0}^{j-2}(1-1/k)^{2i} \le \sum_{i=0}^{\infty}(1-1/k)^{2i} = k^2/(2k-1)$. 

We have shown an upper bound on the number of extra misses of \EOA when $\sigma'$ is the result of an insertion of a new page $p$ into $\sigma$. 
It remains to argue that this is an upper bound when $p$ is not a new page, or when the change is a substitution or deletion. 

\begin{itemize}
\item \textbf{Insertion}: We compare the expected number of misses to the case of $\sigma'$ considered above. Suppose $\rho$ is a sequence resulting from inserting $\rho_j=p'$ in $\sigma$ that had been requested before but not after. Let $j'<j$ be the largest index with $\rho_{j'}=p'$. The miss probability of $\rho_j$ is $\epsilon < 1$, whereas the miss probability of $\sigma'_j$ is 1. Furthermore, the reuse distances of other pages are at least as large in $\sigma'$ as they are in $\rho$. Hence $\EOA(\sigma')>\EOA(\rho)$. Suppose now that the insertion is of a page that is requested later but not before. Let $j'>j$ be the smallest index such that $\rho_{j'}=\rho_j$. Then, the miss probability of both $\sigma'_j$ and $\rho_j$ is 1 and the miss probability of $\sigma_{j'}$ is larger than that of $\rho_{j'}$. In addition, the reuse distance of other pages are at least as large in $\sigma'$ as they are in $\rho$. Thus, $\EOA(\sigma')>\EOA(\rho)$. Finally, suppose that $\rho_{j}$ is requested both before and after (at index $j'$) and that the reuse distances of requests at $j$ and $j'$ are $d$ and $d'$, respectively. The expected number of misses of these requests is $2-c^d-c^{d'}$. On the other hand, the expected number of misses on these requests on $\sigma'$ is $2-c^{d+d'+1}$. Since $c<1$, $c^{d+d'}\le c^d$ and $c^{d+d'}\le c^{d'}$. Hence $c^{d+d'+1}=c\cdot c^{d+d'}\le c^d+c^{d'}$. Hence, $2-c^{d+d'+1}\ge $$2-c^d-c^{d'}$, and since the expected number of misses on the rest of the requests is equal for both sequences, $\EOA(\sigma')>\EOA(\rho)$. \todo{this can most likely be simplified}

\item \textbf{Deletion}: Let $\sigma'$ be the sequence resulting from deleting $\sigma_j$ from $\sigma$. We argue that $\EOA(\sigma')\le\EOA(\sigma)$. Suppose that $\sigma_j$ is not requested later, then the reuse distances of pages in $\sigma'$ are at most those on $\sigma$, and $\EOA(\sigma)$ includes the non-zero miss probability of $\sigma_j$, which is not present in $\EOA(\sigma')$. Hence, $\EOA(\sigma')\le\EOA(\sigma)$. Suppose now that $\sigma_j$ is requested later in $\sigma$, but not earlier. Let $j'>j$ be the smallest index such that $\sigma_{j'}=\sigma_j$. Then the expected number of misses of these pages in $\sigma$ is at least 1, while the probability of miss of $\sigma_j$ in $\sigma'$ is 1. Again, the reuse distance of other pages in $\sigma$ are at least those in $\sigma'$ and thus $\EOA(\sigma')\le\EOA(\sigma)$. Finally, assume that $\sigma_j$ is requested before and after in $\sigma$ (at index $j'$) and that the reuse distances of requests at $j$ and $j'$ are $d$ and $d'$. The expected number of misses of these requests in $\sigma$ is $m(\sigma)=2-c^d-c^{d'}$, while the expected number of misses of $\sigma_{j'}$ in $\sigma'$ is $m(\sigma')=1-c^{d+d'}$. We claim that $m(\sigma')\le m(\sigma)$. This is true if $c^d+c^{d'}\le 1+c^{d+d'}$. Let $b=d+d'$ and let $f(x)=c^x+c^{b-x}$, for $0 \le x \le b$. It is easy to verify that $f(x)$ is maximized at $x=0$ and $x=b$, and hence $f(x)\le 1+c^{d+d'}$. Once again, since the reuse distances of other pages in $\sigma$ are at most those in $\sigma$, it holds that $\EOA(\sigma')\le\EOA(\sigma)$.  
\item \textbf{Substitution}: Since a substitution is a deletion followed by an insertion, and a deletion cannot increase the expected number of misses, the expected number of misses due to a substitution cannot be larger than those created by an insertion.
\end{itemize}

\paragraph{\textbf{Lower bound}} For the lower bound we consider a sequence that realizes the analysis done above for the upper bound. Let $\rho=x_1x_2\ldots x_m$ be a sequence of $m$ requests, all to distinct pages, and let $\rho'$ be this sequence but reversed. Let $\sigma=(\rho \rho')^\delta$ and let $\sigma'=(\rho x \rho')^\delta$, where $x$ is a page not requested in $\rho$. Clearly, $\Delta(\sigma,\sigma')=\delta$. Let $d_i$ be the reuse distance of the $i$-th page in each repetition of $\rho'$ in $\sigma$. Then, this distance in $\sigma'$ equals $d_i+1$. The reuse distances in $\rho$ are equal in both sequences. Hence, the expected difference in number of misses in each repetition of $\rho$ is given by (\ref{eq:eoa}) above with $d_i=2i$. Then, for the entire sequence we have:
 
$$\EOA(\sigma')-\EOA(\sigma)=\delta\left(1+\frac{1}{k}\sum_{i=0}^{m-1}\left(1-\frac{1}{k}\right)^{2i}\right)=\delta\left(1+\frac{k}{2k-1}\left(1-\left(1-\frac{1}{k}\right)^{2m}\right)\right)$$
 
For any $\epsilon>0$, there exists $m$ such that $\EOA(\sigma')-\EOA(\sigma)>\delta(1+\frac{k}{2k-1}-\epsilon)$.
\qed
\end{proof}}{}
\end{FULL}

\nvsp
\nvsp
\subsubsection{\smoothlru}
\nvsp
\begin{FORPROOFSONLYEXCLUDETHIS}

We now describe \smoothlru. The main idea of this algorithm is to smooth out the transition from the hit to the miss case.

\begin{FULL}
Recall the notion of age that is convenient in the analysis of \LRU:
\end{FULL}
\begin{SHORT}
The following notion of \emph{age} is convenient in the analysis of \LRU:
\end{SHORT}
The age of page $p$ is the number of distinct pages that have been requested since the previous request to $p$.
\LRU faults if and only if the requested page's age is greater than or equal to $k$, the size of the cache.
An additional request may increase the ages of $k$ cached pages by one.
At the next request to each of these pages, the page's age may thus increase from $k-1$ to $k$, and turn the request from a hit into a miss, resulting in $k$ additional misses.

By construction, under \smoothlru, the hit probability of a request decreases only gradually with increasing age.
The speed of the transition from definite hit to definite miss is controlled by a parameter~$i$, with $0 \leq i < k$.
\end{FORPROOFSONLYEXCLUDETHIS}
Under \smoothlru, the hit probability~$P(\textit{hit}_{\smoothlru_{k,i}}(a))$ of a request to a page with age~$a$ is:
\begin{SHORT}\vspace{-3mm}\end{SHORT}
\begin{equation}\label{eq:smoothlru}
P(\textit{hit}_{\smoothlru_{k,i}}(a)) = \begin{cases}
													1 & :  a  < k-i\\
									 				\frac{k+i-a}{2i+1} & : k-i \leq a < k+i\\
									 				0 & : a \geq k+i
									 			\end{cases}
\end{equation}%
where $k$ is the size of the cache.
\begin{FORPROOFSONLYEXCLUDETHIS}
Figure~\ref{fig:hitprobabilitiessmoothlru} illustrates this graphically in relation to \LRU for cache size $k=8$ and $i=4$.


\begin{figure}[t]
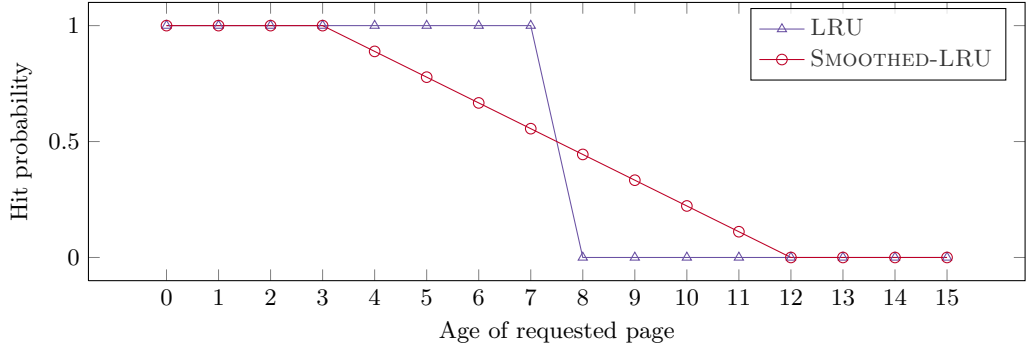

\begin{center}
	\TikzHitProbabilitiesFig{}{.85\textwidth}{.32\textwidth}{
		\AddCurve{lruprobabilities.txt}{color5}{triangle}{\LRU}
		\AddCurve{robustlruprobabilities.txt}{color6}{o}{\smoothlru}
	}
\end{center}
\begin{SHORT}\vspace{-8mm}\end{SHORT}
\caption{Hit probabilities of \LRU and \smoothlru in terms of the age of the requested page\label{fig:hitprobabilitiessmoothlru}}
\begin{SHORT}\vspace{-4mm}\end{SHORT}
\end{figure}

\end{FORPROOFSONLYEXCLUDETHIS}

\begin{restatable}[Smoothness of \smoothlru]{theorem}{thmsmoothlrusmoothness}
\label{thm:smoothlrusmoothness}~\\
\indent$\smoothlru_{k,i}$ is $(1,\delta(\frac{k+i}{2i+1}+1))$-smooth. This is tight.
\end{restatable}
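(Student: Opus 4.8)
The plan is to mirror the structure of the \LRU proof (Theorem~\ref{thm:lrusmoothness}): establish the single-change bound that $\smoothlru_{k,i}$ is $(1,\frac{k+i}{2i+1}+1,1)$-smooth, and then invoke Corollary~\ref{cor:1delta} to obtain the statement for general $\delta$. Tightness will follow by exhibiting an explicit pair of sequences $\sigma,\sigma'$ with $\Delta(\sigma,\sigma')=\delta$ realizing the bound (and, as an alternative, one could note that the lower bound of Theorem~\ref{thm:lowerbounddemandpaging} is not directly applicable here since $\smoothlru$ is not demand-paging in the deterministic sense, so an ad hoc construction is needed). As with \LRU, the key observation is that the hit probability of any request depends only on the \emph{age} of the requested page, i.e. the number of distinct pages requested since its previous occurrence, so $\smoothlru_{k,i}(\sigma) = \sum_{j} \bigl(1 - P(\textit{hit}_{\smoothlru_{k,i}}(a_j))\bigr)$ where $a_j$ is the age of the $j$-th request.

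For the single-change upper bound I would treat insertions, deletions, and substitutions in turn, exactly as in the \LRU argument. \textbf{Insertion of a request to page $p$:} the inserted request contributes at most $1$ additional expected miss. Beyond that, inserting $p$ increases the age of each page currently "in flight" by at most $1$, but only until that page's next request; and at any moment at most $k+i-1$ distinct pages have age $< k+i$ (ages are distinct integers, and any page of age $\geq k+i$ already has hit probability $0$, so bumping its age changes nothing). For each such page, bumping its age from $a$ to $a+1$ decreases its next request's hit probability by $P(\textit{hit}(a)) - P(\textit{hit}(a+1))$, which is $0$ for $a < k-i-1$ or $a \geq k+i$, and at most $\frac{1}{2i+1}$ in the transition region. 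Crucially, the age increase propagates only to the \emph{next} request of each page, so the total loss of hit probability is bounded by summing $\frac{1}{2i+1}$ over the pages whose age lies in the window $\{k-i-1,\dots,k+i-1\}$; there are exactly $k+i$ integer ages in a window that matters (from $k-i-1$ up to $k+i-2$, since only a transition landing at age $\leq k+i-1$ loses probability), giving at most $\frac{k+i}{2i+1}$ additional expected misses. Together with the inserted request this is $\frac{k+i}{2i+1}+1$. \textbf{Substitution} behaves like an insertion (the substituted request itself contributes at most $1$, the tail effect is the same). \textbf{Deletion} of a request to $p$ only \emph{decreases} ages of other pages and removes one request, so it cannot increase the expected number of misses.

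The main obstacle I anticipate is making the "at most $k+i$ affected ages" counting fully rigorous: one must argue carefully that a single inserted request raises the age of a given page by at most $1$ at its next occurrence (and that subsequent occurrences are unaffected because the previous-occurrence interval no longer straddles the insertion point), and that the ages of distinct in-flight pages are distinct integers so the per-age loss $\frac{1}{2i+1}$ is charged at most once per age in the critical window. A clean way to do this is to set up an explicit coupling between the execution on $\sigma$ and on $\sigma'$ in which the randomness (eviction decisions) is shared, and to maintain as an invariant that the two caches/age-vectors agree except on a controlled set of pages; but the age-based closed-form for the expected miss count should make the direct accounting argument, as above, sufficient. For \textbf{tightness}, I would take $\rho = x_1,\dots,x_m$ with all $x_j$ distinct, $\rho' = x_m,\dots,x_1$ its reversal, and set $\sigma = (\rho\,\rho')^\delta$, $\sigma' = (\rho\, y\, \rho')^\delta$ with $y$ fresh in each block; choosing $m$ large enough that the ages in $\rho'$ sweep through the whole transition window $[k-i,k+i)$ makes each of the $\delta$ inserted requests cost $1$ plus a telescoping sum over the window that approaches $\frac{k+i}{2i+1}$, matching the upper bound in the limit. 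This yields $\smoothlru_{k,i}(\sigma') - \smoothlru_{k,i}(\sigma) \to \delta\bigl(\frac{k+i}{2i+1}+1\bigr)$, establishing the claimed tightness.
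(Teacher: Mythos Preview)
Your overall approach---reduce to a single edit via Corollary~\ref{cor:1delta}, then do an age-based accounting as in the \LRU proof---is exactly the paper's. However, three concrete errors need fixing.

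\textbf{The window counting in the insertion case is incorrect.} You claim ``there are exactly $k+i$ integer ages in a window that matters (from $k-i-1$ up to $k+i-2$)''---but that window contains $2i$ integers, not $k+i$. The mistake is that you are trying to restrict attention to pages whose age \emph{at the insertion point} lies in the transition region. That is not the right set: a page with age $0$ at the insertion point may well have age $k+i-1$ at its \emph{next} request, and then the inserted page bumps that age to $k+i$, costing $\tfrac{1}{2i+1}$. The correct count (and this is what the paper does) is simply: at the moment of insertion at most $k+i$ pages have nonzero hit probability (those with age $<k+i$); any other page's next request has age $\geq k+i$ in both sequences and contributes nothing; and each of the $k+i$ relevant pages loses at most $\tfrac{1}{2i+1}$ at its next request. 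That yields $\tfrac{k+i}{2i+1}$ directly.

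\textbf{The deletion case is wrong as stated.} You assert that deleting a request ``cannot increase the expected number of misses.'' It can. Deleting a request to $p$ increases the age of the \emph{next} request to $p$ (its previous occurrence is now further back), which may raise its miss probability. For instance, with $k=3$, $i=1$, take $\sigma = p,q_1,q_2,p,q_3,q_4,p$ and delete the middle $p$: the final $p$ goes from age $2$ (hit probability $\tfrac{2}{3}$) to age $4$ (miss), and the total expected misses increase from $\tfrac{17}{3}$ to $6$. The correct statement, as in the paper, is that deletion affects only the next request to the deleted page, so contributes at most $1$ extra miss---well within the bound.

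\textbf{The tightness construction does not achieve the bound.} In your reversed sequence $\rho\rho'$, the ages in the second half are $0,1,\dots,m-1$ (page $x_j$ has $m-j$ distinct pages between its two occurrences). Inserting $y$ raises each age by $1$, but only the $2i+1$ pages whose age lands in the transition window lose anything---and they lose $\tfrac{1}{2i+1}$ each, for a total extra loss of $1$, not $\tfrac{k+i}{2i+1}$. (Check $i=0$: your construction gives $2$ extra misses, not $k+1$.) To get tightness you need \emph{all} $k+i$ affected pages to sit at age exactly $k+i-1$ at their next request, so that the bump pushes each from hit probability $\tfrac{1}{2i+1}$ to $0$. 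The paper achieves this with $\sigma = 1,\dots,k+i,1,\dots,k+i,y$ versus $\sigma' = 1,\dots,k+i,x,1,\dots,k+i,y$: every page in the second block has age $k+i-1$ in $\sigma$ and $k+i$ in $\sigma'$, and the difference is exactly $\tfrac{k+i}{2i+1}+1$.
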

\begin{FULL}
\ifthenelse{\boolean{proof-thmsmoothlrusmoothness}}{
\begin{proof}
The proof of the upper bound is similar to that for \LRU. The key difference is that, in contrast to \LRU, an age increase may only increase the miss probability of a page by $\frac{1}{2i+1}$.
We show that $\smoothlru_{k,i}$ is $(1,\frac{k+i}{2i+1}+1,1)$-smooth. 
Corollary~\ref{cor:1delta} then implies that $\smoothlru_{k,i}$ is $(1,\delta(\frac{k+i}{2i+1}+1))$-smooth. 

Let us first consider how the insertion of one request may affect ages and the expected number of faults.
By definition, the age of any page is only affected from the point of insertion up to its next request.
Only the hit probability of the next request to a page may thus change due to an additional request. 

Under $\smoothlru_{k,i}$, at most $k+i$ pages have a non-zero hit probability at any time.
Only subsequent requests to these pages may increase the expected number of misses.
By construction, increasing the age of a request by one may only decrease the hit probability by $\frac{1}{2i+1}$.
As the inserted request itself may also introduce a fault, the overall number of faults may thus increase by at most~$\frac{k+i}{2i+1}+1$.

Substitutions are similar to insertions: they may increase the ages of at most $k+i$ pages, and the substituted request itself may introduce one additional fault.
The deletion of a request to page $p$ does not increase the ages of other pages.
Only the next request to $p$ may turn from a hit into a miss.\looseness=-1

For tightness, consider the two sequences $\sigma = 1, 2, \dots, k+i, 1, 2, \dots, k+i, y$ and $\sigma' = 1, 2, \dots, k+i, x, 1, 2, \dots, k+i, y$, with $\Delta(\sigma', \sigma) = 1$ if $y > x > k+i$.
The difference between the expected number of faults on the two sequences is exactly $\frac{k+i}{2i+1}+1$.
For $\delta > 1$, consider the sequences $\sigma_\delta$ and $\sigma'_\delta$ obtained by concatenating $\delta$ copies of $\sigma$ and $\sigma'$, respectively.\qed
\end{proof}}{}
\end{FULL}

\begin{FORPROOFSONLYEXCLUDETHIS}
For $i=0$, $\smoothlru$ is identical to $\LRU$ and $(1,\delta(k+1))$-smooth.
At the other extreme, for $i=k-1$, $\smoothlru$ is $(1, 2\delta)$-smooth, like the optimal offline algorithm.
However, for larger $i$, \smoothlru is less competitive than \LRU:\looseness=-1
\end{FORPROOFSONLYEXCLUDETHIS}

\begin{restatable}[Competitiveness of \smoothlru]{lemma}{lemcompetitivenesssmoothlru}
For any sequence~$\sigma$ and $l \leq k-i$,
\nvsp\nvsp\nvsp
\[\smoothlru_{k,i}(\sigma) \leq \frac{k-i}{k-i-l+1}\cdot \opt_l(\sigma) + l,\]
where $\opt_l(\sigma)$ denotes the number of faults of the optimal offline algorithm processing $\sigma$ on a fast memory of size $l$.
For $l > k-i$ and any $\alpha$ and $\beta$ there is a sequence $\sigma$, such that
$\smoothlru_{k,i}(\sigma) > \alpha\cdot \opt_l(\sigma) + \beta.$
\end{restatable}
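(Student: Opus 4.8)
The plan for the upper bound is to show that $\smoothlru_{k,i}$ never performs worse than deterministic \LRU on a cache of size $k-i$, and then to invoke the classical resource‑augmented analysis of \LRU. First I would use that the \emph{age} of a request (the number of distinct pages requested since the previous request to the same page) depends on the request sequence alone, and that by~(\ref{eq:smoothlru}) the hit probability of a request depends only on its age; hence, by linearity of expectation, $\smoothlru_{k,i}(\sigma)=\sum_{r}\bigl(1-P(\textit{hit}_{\smoothlru_{k,i}}(a_r))\bigr)$, the sum ranging over the requests $r$ of $\sigma$, with $a_r$ the age of $r$. Since $1-P(\textit{hit}_{\smoothlru_{k,i}}(a))=0$ whenever $a<k-i$ and is at most $1$ in general, each summand is at most $\mathbf{1}[a_r\ge k-i]$, which is precisely the indicator that \LRU with a fast memory of size $k-i$ faults on $r$ (recall that under \LRU a request faults iff its age is at least the cache size). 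Summing yields $\smoothlru_{k,i}(\sigma)\le \mathrm{LRU}_{k-i}(\sigma)$, where $\mathrm{LRU}_{k-i}(\sigma)$ denotes the number of faults of \LRU with cache size $k-i$ on $\sigma$.

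It then remains to bound $\mathrm{LRU}_{k-i}(\sigma)$ in terms of $\opt_l(\sigma)$ for $l\le k-i$, which is the standard fact that a conservative algorithm on a cache of size $h$ is $\tfrac{h}{h-l+1}$-competitive against the optimal offline algorithm on a cache of size $l$ (cf.~\cite{Sleator85}). I would reprove it by the usual phase argument: partition $\sigma$ into maximal blocks each containing at most $k-i$ distinct pages, so that $\mathrm{LRU}_{k-i}$ incurs at most $k-i$ faults per block, while a shift‑by‑one charging argument shows that every algorithm with a cache of size $l$ incurs at least $k-i-l+1$ faults per block after the first. Dividing and accounting for the first block and the initial compulsory misses gives $\mathrm{LRU}_{k-i}(\sigma)\le \tfrac{k-i}{k-i-l+1}\,\opt_l(\sigma)+l$, and combining with the previous paragraph proves the claimed upper bound.

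For the impossibility statement, suppose $l>k-i$, i.e.\ $l\ge k-i+1$, and consider the round‑robin sequence $\sigma_n=(p_1p_2\cdots p_{k-i+1})^n$ over $k-i+1$ distinct pages. Because $l\ge k-i+1$, the optimal offline algorithm with cache size $l$ keeps all $k-i+1$ pages resident, so $\opt_l(\sigma_n)=k-i+1$. On the other hand, after the first round every request has age exactly $k-i$, hence by~(\ref{eq:smoothlru}) it is a miss with probability $1-\tfrac{2i}{2i+1}=\tfrac{1}{2i+1}>0$; therefore $\smoothlru_{k,i}(\sigma_n)\ge\tfrac{(n-1)(k-i+1)}{2i+1}$, which is unbounded in $n$. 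Given any $\alpha$ and $\beta$, picking $n$ with $\tfrac{(n-1)(k-i+1)}{2i+1}>\alpha(k-i+1)+\beta$ yields a sequence $\sigma=\sigma_n$ with $\smoothlru_{k,i}(\sigma)>\alpha\cdot\opt_l(\sigma)+\beta$.

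The main obstacle I anticipate is the bookkeeping in the phase argument for the resource‑augmented \LRU bound — in particular carrying out the shift‑by‑one charging so that the additive term comes out exactly as stated; the reduction from $\smoothlru_{k,i}$ to $\mathrm{LRU}_{k-i}$ and the lower‑bound construction are both routine once the age‑only dependence of the hit probability~(\ref{eq:smoothlru}) is exploited.
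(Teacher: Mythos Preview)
Your proposal is correct and follows essentially the same approach as the paper: the upper bound is obtained by observing that $\smoothlru_{k,i}(\sigma)\le \LRU_{k-i}(\sigma)$ (the paper states this in one line, you spell it out via the age‑indicator bound) and then invoking the Sleator--Tarjan resource‑augmented bound; the impossibility part uses a cyclic sequence on which $\opt_l$ is constant while $\smoothlru_{k,i}$ has a positive per‑request miss probability of at least $\tfrac{1}{2i+1}$. The only cosmetic difference is that the paper cycles through $l$ pages rather than $k-i+1$, which changes nothing.
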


\begin{FULL}
\ifthenelse{\boolean{proof-lemcompetitivenesssmoothlru}}{
\begin{proof}
Clearly, $\smoothlru_{k,i}(\sigma) \leq \LRU_{k-i}(\sigma)$ for any sequence $\sigma$, as $\smoothlru_{k,i}$ caches all pages younger than $k$ with probability one.
From Sleator and Tarjan~\cite{Sleator85}, we know that $\LRU_{k-i}(\sigma) \leq \frac{k-i}{k-i-l+1}\cdot \opt_l(\sigma) + l$.

For the second part of the theorem consider the sequence $\sigma_n = (1, \dots, l)^n$, which contains $l$ distinct pages.
The optimal offline algorithm misses exactly $l$ times on this sequence independently of $n$.
For $k-i < l$, on the other hand, $\smoothlru_{k,i}$ has a non-zero miss probability of at least $\frac{1}{2i+1}$ on every request.
For every $\alpha$ and $\beta$ there is an $n$ such that $\smoothlru_{k,i}(\sigma_n) > \alpha\cdot \opt_l(\sigma_n) + \beta$.
\qed
\end{proof}}{}
\end{FULL}

So far we have analyzed \smoothlru based on the hit probabilities given in (\ref{eq:smoothlru}).
We have yet to show that a randomized algorithm satisfying (\ref{eq:smoothlru}) can be realized.
\begin{SHORT}In the full version of the paper~\cite{smoothness_full}\end{SHORT}\begin{FULL}In the following\end{FULL}, we construct a probability distribution on the set of all deterministic algorithms using a fast memory of size~$k$ that satisfies (\ref{eq:smoothlru}).
This is commonly referred to as a \emph{mixed strategy}.
\begin{FULL}

First, we decompose an instance of \smoothlru into $i+1$ instances of a simpler algorithm called \steplru.
Then we show how \steplru can be realized as a mixed strategy.
Like \smoothlru, \steplru is parameterized by $i$, and it exhibits the following hit probabilities in terms of the age of a requested page:
\begin{equation}\label{eq:steplru}
P(\textit{hit}_{\steplru_{k,i}}(a)) = \begin{cases}
													1 & :  a  < k-i\\
									 				\frac{1}{2} & : k-i \leq a < k+i\\
									 				0 & : a \geq k+i
									 			\end{cases}
\end{equation}

\begin{restatable}[Decomposition of \smoothlru in terms of \steplru]{lemma}{lemdecompositionsmoothlru}
\label{lem:decompositionsmoothlru}
For all ages $a$,
\[P(\textit{hit}_{\smoothlru_{k,i}}(a)) = \frac{1}{2i+1}\left(1\cdot P(\textit{hit}_{\steplru_{k,0}}(a)) + \sum_{j=1}^i 2\cdot P(\textit{hit}_{\steplru_{k,j}}(a))\right).\]
\end{restatable}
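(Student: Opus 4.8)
The plan is to prove the identity pointwise: fix an age $a$ (recall ages range over $\{0,1,2,\dots\}\cup\{\infty\}$) and show both sides agree. The trick that makes this painless is to rewrite every hit probability occurring in the statement as a sum of step functions (Iverson brackets), after which the claim collapses to a one-line re-indexing. Concretely, I would first record the following three identities, all immediate from~(\ref{eq:smoothlru}) and~(\ref{eq:steplru}):
\[(2i+1)\cdot P(\textit{hit}_{\smoothlru_{k,i}}(a)) \;=\; \sum_{m=-i}^{i} \mathbf{1}\{a < k+m\},\]
\[2\cdot P(\textit{hit}_{\steplru_{k,j}}(a)) \;=\; \mathbf{1}\{a < k-j\} + \mathbf{1}\{a < k+j\} \quad (1 \le j \le i),\]
\[P(\textit{hit}_{\steplru_{k,0}}(a)) \;=\; \mathbf{1}\{a < k\}.\]
The first follows by observing that the number of integers $m\in[-i,i]$ with $k+m>a$ is $2i+1$ when $a<k-i$, is $k+i-a$ when $k-i\le a<k+i$, and is $0$ when $a\ge k+i$; the second and third are read off directly (the middle interval of \steplru degenerates to the empty set for $j=0$, so $\steplru_{k,0}$ is simply the deterministic algorithm with the same hit set as $\LRU$).

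Given these, multiply the right-hand side of the lemma by $2i+1$ to get
\[P(\textit{hit}_{\steplru_{k,0}}(a)) + \sum_{j=1}^{i} 2\,P(\textit{hit}_{\steplru_{k,j}}(a)) \;=\; \mathbf{1}\{a<k\} + \sum_{j=1}^{i}\bigl(\mathbf{1}\{a<k-j\} + \mathbf{1}\{a<k+j\}\bigr).\]
Substituting $m=-j$ in the first sum and $m=j$ in the second, and treating $\mathbf{1}\{a<k\}$ as the $m=0$ term, the right-hand side becomes exactly $\sum_{m=-i}^{i}\mathbf{1}\{a<k+m\}$, which by the first identity equals $(2i+1)\,P(\textit{hit}_{\smoothlru_{k,i}}(a))$. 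Dividing through by $2i+1$ finishes the proof.

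There is no genuine obstacle here: the statement is an elementary arithmetic identity and the indicator-sum representation handles uniformly the two mild subtleties — the clamping of $P(\textit{hit}_{\smoothlru_{k,i}}(a))$ to $1$ and $0$ outside $[k-i,k+i)$, and the degenerate $j=0$ term. If one prefers to avoid the indicator calculus entirely, an equally short alternative is a direct case split on the ranges $a<k-i$, $k-i\le a<k+i$, $a\ge k+i$ (and within the middle range on $a<k$ versus $a\ge k$), comparing the two sides in each case; this is the more pedestrian route but amounts to the same computation.
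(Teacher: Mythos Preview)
Your proof is correct. The three indicator identities are each verified exactly as you describe, and the re-indexing $m=-j$, $m=j$, $m=0$ indeed reconstitutes $\sum_{m=-i}^{i}\mathbf{1}\{a<k+m\}$ on the nose.

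Your route differs from the paper's. The paper takes precisely the ``pedestrian'' alternative you mention at the end: a four-way case split on $a<k-i$, $k-i\le a<k$, $k\le a<k+i$, and $a\ge k+i$, computing the right-hand side explicitly in each range and matching it against the definition of $\smoothlru$. Your indicator-sum argument is shorter and handles the boundary clamping uniformly without the need to split the middle range at $a=k$; on the other hand, the paper's case split requires no preliminary identities and makes the match with the piecewise definition of $\smoothlru$ visible term by term. Both are entirely routine verifications of the same arithmetic fact, so the choice is a matter of taste.
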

\ifthenelse{\boolean{proof-lemdecompositionsmoothlru}}{
\begin{proof}
We perform a case distinction on the age $a$:
\begin{enumerate}
	\item For $a < k-i$, all of the above summands are one and the equality holds.
	\item For $a \geq k+i$, all summands are zero and the equality holds as well.
	\item For $k-i \leq a < k$, we get 
		\begin{align*}
			 & \frac{1}{2i+1}\left(1\cdot P(\textit{hit}_{\steplru_{k,0}}(a)) + \sum_{j=1}^i 2\cdot P(\textit{hit}_{\steplru_{k,j}}(a))\right)\\
			 =~& \frac{1}{2i+1}\left(1 + \sum_{j=1}^{k-a-1} 2\cdot 1 + \sum_{j=k-a}^i 2\cdot \frac{1}{2}\right) = \frac{1}{2i+1}\left(1 + 2\cdot(k-a-1) + 1\cdot(i-(k-a)+1)\right)\\
			 =~& \frac{k+i-a}{2i+1} = P(\textit{hit}_{\smoothlru_{k,i}}(a))
		\end{align*}
	\item For $k \leq a < k+i$, we get
		\begin{align*}
			 & \frac{1}{2i+1}\left(1\cdot P(\textit{hit}_{\steplru_{k,0}}(a)) + \sum_{j=1}^i 2\cdot P(\textit{hit}_{\steplru_{k,j}}(a))\right)\\
			 =~& \frac{1}{2i+1}\left(1\cdot 0 + \sum_{j=1}^{a-k} 2\cdot 0 + \sum_{j=a-k+1}^i 2\cdot \frac{1}{2}\right) = \frac{1}{2i+1}\left(0 + 0 + 1\cdot(i-(a-k+1)+1)\right)\\
			 =~& \frac{k+i-a}{2i+1} = P(\textit{hit}_{\smoothlru_{k,i}}(a))
		\end{align*}\qed
\end{enumerate}
\end{proof}}{}

As a consequence, we can realize \smoothlru as a mixed strategy if we can realize \steplru as a mixed strategy.

While the hit probabilities $P(\textit{hit}_{\steplru_{k,i}}(a))$ do not fully define \steplru, by linearity of expectation they are sufficient to determine the expected number of faults on any sequence~$\sigma$, which we denote by $\steplru_{k,i}(\sigma)$. 

\begin{restatable}[\steplru as a mixed strategy]{proposition}{thmsteplrumixed}
\label{thm:steplrumixed}
There is a probability distribution $d : {\cal A} \rightarrow \mathbb{R}$ over a finite set of deterministic paging algorithms ${\cal A}$ using a fast memory of size~$k$, such that for all sequences $\sigma$,
\[\steplru_{k,i}(\sigma) = \sum_{A \in {\cal A}} d(A)\cdot A(\sigma).\]
\end{restatable}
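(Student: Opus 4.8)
The plan is to exhibit $d$ explicitly as the \emph{uniform} distribution over $\binom{2i}{i}$ deterministic algorithms, one for each choice of which pages of ``intermediate'' age to retain. Recall from~(\ref{eq:steplru}) that $\steplru_{k,i}$ always caches the $k-i$ most recently used pages, never caches a page of age $\ge k+i$, and caches each page of age in the ``window'' $\{k-i,\dots,k+i-1\}$ with probability $\tfrac12$. I would mirror this by a family $\{A_X\}$ indexed by the $i$-element subsets $X\subseteq\{1,\dots,2i\}$: the algorithm $A_X$ uses a cache of size $k$, always keeps the $k-i$ freshest pages, and, once at least $k+i$ distinct pages have been requested, keeps exactly those window pages whose position (numbered $1$ to $2i$ from youngest to oldest within the window) currently lies in its ``mark set'', starting from mark set $X$. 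Since~(\ref{eq:steplru}) constrains only hit probabilities and, under any $A_X$, whether the $t$-th request is a hit depends solely on its age and on $A_X$'s current mark set, linearity of expectation reduces the claim to the following: for every sequence $\sigma$ and every $t$, if $X$ is drawn uniformly then the mark set of $A_X$ at time $t$ is a uniformly random $i$-subset of $\{1,\dots,2i\}$ — then every window page is cached with probability $\tfrac{i}{2i}=\tfrac12$, and $\sum_A d(A)A(\sigma)=\steplru_{k,i}(\sigma)$ follows.

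To get this, I would specify how the mark set is updated on each of the three kinds of request and show every such update is a \emph{bijection} on the set of $i$-subsets of $\{1,\dots,2i\}$; composing bijections preserves uniformity, so a uniformly random initial $X$ stays uniform forever, on every sequence. Concretely: a request to a page of age $<k-i$ leaves the window and the mark set unchanged (identity). A request to a never-cached-or-cold page slides every mark one position older, promotes the page that has just aged out of the fresh region into window position $1$, evicts the page sliding out of the window if it was marked, fetches the requested page, and marks position $1$ exactly when this is needed to keep the cache full. A request to the window page at position $q$ slides the marks past $q$ one step older, inserts position $1$ for the newly demoted fresh page, and drops position $q$ (that page becomes fresh); if $q$ was marked this is a hit with no fetch and no eviction, and if $q$ was unmarked it is a miss that fetches the requested page and then evicts window position $1$. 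One then verifies that each induced map $X\mapsto X'$ is a bijection on $i$-subsets and that every $A_X$ is a legitimate size-$k$ paging algorithm (the cache never exceeds $k$; exactly one page is fetched on a miss, none on a hit).

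The main obstacle is precisely this joint verification in the last step: the eviction rule for ``a miss that nonetheless slides the window'' must be chosen so that \emph{both} $A_X$ remains a valid size-$k$ algorithm \emph{and} the map on mark sets is a bijection. The rule ``evict window position $1$'' does the job, but checking bijectivity for it is the genuinely fiddly computation: for each $q$ the hit branch $X\mapsto\phi_q(X)$ (when $q\in X$) and the miss branch $X\mapsto\phi_q(X)\setminus\{1\}$ (when $q\notin X$), where $\phi_q$ is the slide-past-$q$ map, are each injective and have as images, respectively, the $i$-subsets that do and do not contain position $1$; these are complementary and of the right total size, since $\binom{2i-1}{i-1}+\binom{2i-1}{i}=\binom{2i}{i}$, so the combined map is a bijection. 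An analogous (and similar) check handles the new-page transition. The argument is elementary but requires care with the index shifts. Combined with the already-proved decomposition of \smoothlru into a positive combination of \steplru instances, this also realizes \smoothlru as a mixed strategy.
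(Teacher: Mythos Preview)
Your construction is the paper's construction. The paper defines the same family of $\binom{2i}{i}$ deterministic algorithms (there called $\detsteplru_{k,i,D}$, indexed by an $i$-subset $D$ of the window ages), with exactly your eviction rules: on a cold miss evict the page of age $k+i-1$ if it is cached and otherwise the page of age $k-i-1$; on a window miss always evict the page of age $k-i-1$ (your ``new window position~$1$''). Where you prove directly that each transition is a bijection on $i$-subsets, the paper proves the equivalent statement that any two distinct initial subsets remain distinct after every request, and then counts: since there are $\binom{2i}{i}$ algorithms and $\binom{2i}{i}$ possible cached-age sets, each is realized by exactly one algorithm at every step, hence each window position is cached with probability $\tfrac12$. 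Your bijection computation and the paper's distinctness case analysis are two sides of the same coin.

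The one point the paper handles and you do not is initialization. Your phrase ``once at least $k+i$ distinct pages have been requested'' leaves the transient undefined, and a default choice (say, keep the $k$ freshest until the window fills) does \emph{not} reproduce the required hit probability $\tfrac12$ on early window requests, so the identity would fail on short sequences. The paper sidesteps this by starting each deterministic algorithm with a cache already full of $k$ dummy pages carrying pre-assigned ages: ages $0,\dots,k-i-1$ for the fresh region and the $i$ ages in $D$ for the window. Then the invariant ``uniform over $i$-subsets of window ages'' holds from the very first real request, and the equality holds for all $\sigma$. You should adopt the same device.
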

\ifthenelse{\boolean{proof-thmsteplrumixed}}{
\begin{proof}
Consider the following deterministic paging algorithm, called \detsteplru, which is parameterized by $k$ and $i$:
\detsteplru always caches the $k-i$ youngest pages.
In addition, it caches $i$ of the $2i$ pages whose ages are between $k-i$ and $k+i-1$.\looseness=-1

Upon a miss to a page of age $a \geq k+i$, \detsteplru replaces the page of age $k+i-1$ if it is in the fast memory. \todo{this requires additional memory compared with a regular cache of the same associativity}
Otherwise, it replaces the page of age $k-i-1$, which is guaranteed to be cached before the request.
Upon a miss to a page of age $a < k+i$, \detsteplru always replaces the page of age $k-i-1$.

Instead of starting with an empty cache, the fast memory is initially filled with $k$ ``dummy'' pages, which may not be requested later on.
The first $k-i$ of these ``dummy'' pages get assigned ages $0$ to $k-i-1$.
A further parameter, $D \subseteq \{k-i, \dots, k+i-1\}$ with $|D|=i$, controls the assignment of ages to the remaining $i$ ``dummy'' pages.
We denote by $\detsteplru_{k,i,D}$ the algorithm that arises when the dummy pages initially assume the ages specified in $D$.

We argue that $\steplru_{k,i}$ results from the uniform distribution over the set of deterministic algorithms ${\cal A} = \{\detsteplru_{k,i,D} \mid D \subseteq \{k-i, \dots, k+i-1\}, |D|=i\}$:
\[\steplru_{k,i}(\sigma) = \sum_{A \in {\cal A}} \frac{1}{{2i \choose i}}\cdot A(\sigma).\]

To see this, consider two arbitrary algorithms $A_1, A_2 \in {\cal A}$ defined by $D_1, D_2$ with $D_1 \neq D_2$.
We claim that after processing an arbitrary sequence $\sigma$, the set of ages of cached pages differ between $A_1$ and $A_2$.
We prove this by induction on the length of $\sigma$:
Clearly this holds for $|\sigma| = 0$ as $D_1 \neq D_2$.
Assume the statement holds before a request.
We perform a case distinction on the age $a$ of the requested page:
\begin{enumerate}
 	\item If $a < k-i$, both caches hit and the set of cached ages does not change in either of the two caches.
 	\item If $a \geq k+i$, then both caches must miss. If one of the two caches stores age $k+i-1$ and the other does not, then one replaces the page with age $k-i-1$ and the other does not. If neither of the caches stores age $k+i-1$, then they both replace the page with age $k-i-1$, which they have in common. If they both store age $k+i-1$, they also replace a common page.
 	\item If $k-i \leq a < k+i$, there are three cases to consider:
 		\begin{enumerate}
 			\item Both caches hit: then there is no change in the set of cached pages.
 			\item Both caches miss: then both replace a common page, the page with age $k-i-1$.
 			\item If one cache misses and the other hits, then one replaces the page with age $k-i-1$ and the other does not.
 		\end{enumerate}
\end{enumerate}

Initially, all of the $2i \choose i$ algorithms in ${\cal A}$ differ from each other regarding the set of ages of cached pages.
Based on the reasoning above they continue to differ from each other after an arbitrary request sequence.
As there are exactly $2i \choose i$ possibilities of choosing $i$ of the $2i$ pages with age $k-i$ to $k+i-1$, each of the possibilities is covered by exactly one algorithm in ${\cal A}$ at any point in time.
A page with age $a$ between $k-i$ and $k+i-1$ is contained in exactly half of these possibilities, and thus the  hit probability is exactly $\frac{1}{2}$ in the uniform distribution over ${\cal A}$.\qed
\end{proof}}{}

\begin{restatable}[\smoothlru as a mixed strategy]{corollary}{corsmoothlrumixed}
There is a probability distribution $d : {\cal A} \rightarrow \mathbb{R}$ over a finite set of deterministic paging algorithms ${\cal A}$ using a fast memory of size~$k$, such that for all sequences $\sigma$,
\[\smoothlru_{k,i}(\sigma) = \sum_{A \in {\cal A}} d(A)\cdot A(\sigma).\]
\end{restatable}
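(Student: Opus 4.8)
The plan is to combine the decomposition lemma (Lemma~\ref{lem:decompositionsmoothlru}) with the realization of \steplru as a mixed strategy (Proposition~\ref{thm:steplrumixed}). First I would observe that for any paging algorithm whose hit probability is a function of the age of the requested page alone---and recall that age is a property of the request sequence, not of the algorithm---the expected number of faults on a sequence $\sigma$ is, by linearity of expectation, $\sum_{r} \bigl(1 - P(\textit{hit}(a_r))\bigr)$, where $r$ ranges over the requests of $\sigma$ and $a_r$ is the age of $r$. Applying this to \smoothlru and to each \steplru, substituting the pointwise identity of Lemma~\ref{lem:decompositionsmoothlru}, and using $\frac{1}{2i+1}(1 + 2i) = 1$ to split the constant term, yields the identity
\[
\smoothlru_{k,i}(\sigma) = \frac{1}{2i+1}\,\steplru_{k,0}(\sigma) + \sum_{j=1}^{i} \frac{2}{2i+1}\,\steplru_{k,j}(\sigma)
\]
for every sequence $\sigma$.

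Next I would invoke Proposition~\ref{thm:steplrumixed}: for each $j \in \{0, 1, \dots, i\}$ there is a finite set ${\cal A}_j$ of deterministic paging algorithms using a fast memory of size~$k$ together with a probability distribution $d_j$ on ${\cal A}_j$ satisfying $\steplru_{k,j}(\sigma) = \sum_{A \in {\cal A}_j} d_j(A)\,A(\sigma)$ for all $\sigma$. Taking ${\cal A}$ to be the disjoint union of the ${\cal A}_j$ (tagging each algorithm with the index $j$ it comes from so that the supports do not collide) and defining $d$ to be $\frac{1}{2i+1}\,d_0(A)$ on the copy of $A \in {\cal A}_0$ and $\frac{2}{2i+1}\,d_j(A)$ on the copy of $A \in {\cal A}_j$ for $j \geq 1$, gives a non-negative function on ${\cal A}$ with total mass $\frac{1}{2i+1} + i\cdot\frac{2}{2i+1} = 1$, hence a probability distribution over deterministic algorithms that all use a fast memory of size~$k$. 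Substituting the mixed-strategy expression for each \steplru into the displayed identity then gives $\smoothlru_{k,i}(\sigma) = \sum_{A \in {\cal A}} d(A)\,A(\sigma)$, which is the claim.

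There is no substantial obstacle here; the corollary is essentially bookkeeping on top of the lemma and proposition. The only step that deserves a sentence of care is the first one: justifying that an age-indexed hit-probability profile determines the expected number of faults on a sequence, so that the pointwise identity of Lemma~\ref{lem:decompositionsmoothlru} lifts to an identity between expected fault counts and then to an identity between mixed strategies. The disjoint-union bookkeeping (equivalently, letting the coefficients of coinciding algorithms add) is the only other thing to track, and it is routine.
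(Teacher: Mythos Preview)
Your proposal is correct and follows exactly the route the paper takes: the paper's proof is the single sentence ``This follows immediately from Lemma~\ref{lem:decompositionsmoothlru} and Proposition~\ref{thm:steplrumixed},'' and you have spelled out precisely those immediate details (lifting the pointwise hit-probability identity to expected fault counts via linearity of expectation, then combining the mixed strategies for each \steplru with the appropriate weights).
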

\ifthenelse{\boolean{proof-corsmoothlrumixed}}{
\begin{proof}
This follows immediately from Lemma~\ref{lem:decompositionsmoothlru} and Proposition~\ref{thm:steplrumixed}.\qed
\end{proof}}{}
\end{FULL}

\begin{FORPROOFSONLYEXCLUDETHIS}
\subsection{A Competitive and Smooth Randomized Paging Algorithm: \lrurandom}
\label{sec:lrurandom}

In this section we introduce and analyze \lrurandom, a competitive randomized algorithm that is smoother than any competitive deterministic algorithm.
\lrurandom orders the pages in the fast memory by their recency of use; like \LRU.
Upon a miss, \lrurandom evicts older pages with a higher probability than younger pages.
More precisely, the $i^{th}$ oldest page in the cache is evicted with probability $\frac{1}{i\cdot H_k}$. 
By construction the eviction probabilities sum up to 1: $\sum_{i=1}^k \frac{1}{i\cdot H_k} = \frac{1}{H_k}\cdot \sum_{i=1}^k \frac{1}{i} = 1$.
\lrurandom is \emph{not} demand paging: if the cache is not yet entirely filled, it may still evict cached pages according to the probabilities mentioned above.
\todo{create a separate definition for lrurandom?}

\lrurandom is at least as competitive as strongly-competitive deterministic algorithms:
\end{FORPROOFSONLYEXCLUDETHIS}
\begin{restatable}[Competitiveness of \lrurandom]{theorem}{thmcompetitivenesslrurandom}
For any sequence~$\sigma$, 
\[\lrurandom(\sigma) \leq k \cdot \opt(\sigma).\]
\label{thm:lrurandomcomp}
\end{restatable}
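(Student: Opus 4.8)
The plan is to carry out an amortized (potential-function) analysis comparing \lrurandom against an arbitrary offline algorithm, which we may take to be \opt. First I would reduce to the case where \lrurandom holds a full cache of $k$ pages throughout, so that it behaves like a demand-paging algorithm: by the renaming argument already used for \RAND (replace, in $\sigma$, every occurrence of a page lying in the initial cache by a fresh page), the number of faults from an empty cache equals the number from an arbitrary full cache, so we may assume \lrurandom and \opt start from a common full cache, and that \opt also always keeps its cache full. From then on each request is processed in two sub-steps: \opt serves it (paying $0$ or $1$), then \lrurandom serves it (paying $0$ or $1$ in expectation).

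The core is a nonnegative potential $\Phi$, a function of \opt's (deterministic) cache $C^\ast$ and of \lrurandom's random cache $C$, with $\Phi=0$ initially, such that (i) when \opt serves a request $\Phi$ increases by at most $k$ times \opt's cost, and (ii) when \lrurandom serves a request, its expected cost plus the expected change of $\Phi$ is at most $0$; summing over the sequence and using $\Phi\ge 0$ then gives $\lrurandom(\sigma)\le k\cdot\opt(\sigma)$. The potential should be a \emph{weighted} discrepancy between $C$ and $C^\ast$: a page in $C\cap C^\ast$ contributes nothing; a page in $C^\ast\setminus C$ (among the pages \opt holds, exactly the ones whose request makes \lrurandom fault) contributes an amount reflecting the expected number of \lrurandom faults needed to ``repair'' it, which depends on the recency-ranks of the ``clutter'' pages $C\setminus C^\ast$ — the more clutter sits at \emph{old} ranks, the more likely \lrurandom's next eviction removes clutter rather than a useful page, hence the cheaper the repair. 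Since the $i$-th oldest page is evicted with probability $\tfrac{1}{i H_k}$, the natural weights are harmonic, and the identity $\sum_{i=1}^{k}\tfrac1i=H_k$ is precisely what forces a drop of at least one per \lrurandom fault. On hit steps $\Phi$ does not increase: after \opt's sub-step the requested page is shared, so the only effect on $\Phi$ is the shift of recency-ranks of clutter pages toward older ranks, which only lowers their weights.

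I expect the main obstacle to be choosing the weight function so that the \opt-step bound ($\Delta\Phi\le k$ per \opt fault) and the \lrurandom-step bound (expected drop $\ge 1$ per \lrurandom fault) hold simultaneously: rank-independent weights already fail for $k\ge 3$, since a single clutter page at a young rank is hard to evict yet cheap for \opt to create. Making this work requires exploiting structural constraints on which recency-ranks the discrepancy can occupy — in particular that \lrurandom's most-recently-used page is always also in \opt's cache (it was the last request, served by both), and, by measuring $\Phi$ only between whole rounds, analogous ``protection'' of the youngest discrepancy pages — together with the observation that a page in $C\setminus C^\ast$ is costly only indirectly (a request to it is a \lrurandom hit, not a fault). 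Tuning the harmonic weights against these constraints, using the slack afforded by $H_k<k$, is the crux; the remainder is routine bookkeeping of how one random eviction and one \opt eviction shift ranks and membership.
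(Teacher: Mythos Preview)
Your overall plan---a nonnegative potential function with rank-dependent, harmonic-type weights, showing that an \opt fault increases $\Phi$ by at most $k$ while a \lrurandom fault decreases it by at least $1$ in expectation---is exactly the route the paper takes. You also correctly identify the crux: rank-independent weights fail, and the most-recently-used page is always shared, which caps the contribution a freshly created discrepancy page can add.

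Where your plan remains genuinely unfinished is the choice of potential. You propose to attach weights to pages in $C^\ast\setminus C$, with each weight determined by the rank profile of the clutter $C\setminus C^\ast$. But pages in $C^\ast\setminus C$ carry no recency rank in \lrurandom's cache, so every such weight must be a global function of the clutter set; you are then effectively defining $\Phi$ as a function of the multiset of ranks of $C\setminus C^\ast$ anyway, just through an unnecessary indirection. The paper cuts this knot by summing directly over the clutter: with $s(p)=k-\mathit{age}(p)\in\{1,\dots,k\}$ (so $s=1$ is oldest, $s=k$ youngest),
\[
\Phi \;=\; H_k \sum_{p\,\in\, S_{\lrur}\setminus S_{\adv}} \frac{s(p)}{H_{s(p)}}.
\]
This gives each clutter page its own weight depending only on its own rank, which makes both inequalities fall out by a short case analysis: a new clutter page created by the adversary has $s\le k-1$ (the just-requested page is shared), contributing at most $H_k\cdot\frac{k-1}{H_{k-1}}<k$; and on a \lrurandom fault the expected drop in the contribution of each clutter page is exactly $1$, since with probability $\tfrac{1}{sH_k}$ it is evicted (losing $H_k\cdot\tfrac{s}{H_s}$) and with probability $\tfrac{H_{s-1}}{H_k}$ it ages by one (losing $H_k\cdot(\tfrac{s}{H_s}-\tfrac{s-1}{H_{s-1}})$). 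No additional ``structural constraints'' or round-level batching are needed. A minor point: the paper argues against an \emph{adaptive online} adversary, which subsumes \opt and avoids your reduction to a common full initial cache.
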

\begin{FULL}
\ifthenelse{\boolean{proof-thmcompetitivenesslrurandom}}{
\begin{proof}\todo{why is there such a large distance between proof and theorem?}
We actually prove a stronger statement, namely that \lrurandom is $k$-competitive against any \emph{adaptive online adversary}~\cite{Motwani95}.
Our proof is based on a potential argument.

Let $S_\adv$ and $S_\lrur$ be the set of pages contained in the adversary's and \lrurandom's fast memory, respectively.
Further, let $age(p)$ be the age of page $p \in S_\lrur$, i.e., $age(p)$ is~$0$ for the most-recently-used page and $k-1$ for the least-recently-used one among those pages that are in $S_\lrur$.
Based on $age(p)$, we define $s(p) = k - age(p)$.
In other words, $s(p)$ is $1$ for the oldest cached page, and $k$ for the youngest, most-recently-used.
Using these notions we define the following potential function:
 \[\Phi = H_k\cdot \sum_{p \in S_{\lrur}\setminus S_{\adv}} \frac{s(p)}{H_{s(p)}}.\]
We will show that for any page $x$ and any decision of the adversary to evict a page from its memory, we have
\begin{equation}\label{eq:potentiallrurandom}
\lrurandom(x) + \Delta\Phi(x) \leq k\cdot \adv(x),
\end{equation}
where $\lrurandom(x)$ and $\adv(x)$ denote the cost of the request, and $\Delta\Phi(x)$ is the expected change in the potential function.
Note that the potential function is initially zero, given that both caches are initially empty.
Further it is never negative.
From this and (\ref{eq:potentiallrurandom}) the $k$-competitiveness of \lrurandom against an adaptive online adversary follows.
To prove (\ref{eq:potentiallrurandom}), we distinguish four cases upon a request to page $x$:
\begin{enumerate}
	\item \lrurandom hits and \adv hits. Then, $\lrurandom(x) = \adv(x) = 0$. The request may not decrease the ages of pages in $S_\lrur \setminus S_\adv$ and so the potential may not increase, as $\frac{s}{H_s}$ is monotone in $s$.
	\item \lrurandom hits and \adv misses. As $\lrurandom(x) = 0$ and $\adv(x) = 1$, we have to show that $\Delta\Phi(x) \leq k$. 
		The contribution of each page $p \in S_\lrur \setminus S_\adv$ to the potential drops or stays the same, as the ages of these pages may not decrease.		
		The potential may only increase if \adv chooses to evict a page in $S_\lrur \cap S_\adv$.	
		The maximal increase is achieved by evicting the youngest such page $p$.
		After the request, $p$'s age is at least $1$, as it was not the requested page.
		Therefore it contributes at most $H_k\cdot \frac{k-1}{H_{k-1}}$ to the potential, which is
		\[H_k\cdot \frac{k-1}{H_{k-1}} = \left(H_{k-1} + \frac{1}{k}\right)\cdot \frac{k-1}{H_{k-1}} = k-1 + \frac{k-1}{k\cdot H_{k-1}} < k.\]
	\item \lrurandom misses and \adv hits. Then, we have to show that the potential reduces by at least $1$ in expectation. 
		Again, the contribution of no page $p \in S_\lrur \setminus S_\adv$ may increase.
		Further, as \adv may not evict a page, no new page may contribute to the potential.
		We show that the contribution of each page $p \in S_\lrur \setminus S_\adv$ drops by at least 1 in expectation.
		There are three possible cases for a page $p$ with $s(p) = s$:
		\begin{enumerate}
			\item A younger page is replaced, and $p$'s contribution to the potential does not change.
				This happens with probability $\sum_{i=s+1}^k \frac{1}{i\cdot H_k} = 1 - \frac{H_s}{H_k}$.
			\item Page $p$ gets replaced. This happens with probability $\frac{1}{s\cdot H_k}$ and it reduces the potential by~$\frac{H_k\cdot s}{H_s}$.
			\item An older page is replaced, and $p$'s age increases by one.
				This happens with probability $\sum_{i=1}^{s-1} \frac{1}{i\cdot H_k} = \frac{H_{s-1}}{H_k}$ and it reduces the potential by~$H_k\cdot \left(\frac{s}{H_{s}}-\frac{s-1}{H_s-1}\right) = H_k\cdot\frac{H_s-1}{H_sH_{s-1}}.$
		\end{enumerate} 
		So the expected change in potential due to page $p$ is
		 \[\frac{1}{s\cdot H_k}\cdot \frac{-H_k\cdot s}{H_s} + \frac{H_{s-1}}{H_k}\cdot H_k\cdot\frac{1-H_s}{H_sH_{s-1}} = -\frac{1}{H_s} + \frac{1-H_s}{H_s} = -1.\]
	
	\item \lrurandom misses and \adv misses. If, before the request, $S_\lrur \setminus S_\adv \neq \emptyset$, then we can combine the arguments from cases 2 and 3 to show that the potential increases by at most $k-1$.
		This does not cover the case where $S_\lrur = S_\adv$.
		In this case, the potential is increased maximally if the adversary chooses to evict the most-recently-used page.
		If \lrurandom replaces a different page, the potential increases by $H_k\cdot \frac{k-1}{H_{k-1}}$. 
		However, with probability $\frac{1}{k\cdot H_k}$, \lrurandom also replaces the most-recently-used page (in which case the potential remains the same).
		The expected change in potential is thus bounded by 
		\[\left(1-\frac{1}{k H_k}\right) H_k \frac{k-1}{H_{k-1}} = \frac{kH_k-1}{kH_k} H_k \frac{k-1}{H_{k-1}} = \frac{(kH_k-1)(k-1)}{H_{k-1}k} \leq \frac{kH_{k-1}(k-1)}{H_{k-1}k} = k-1.\]
\end{enumerate}
\qed
\end{proof}}{}
\end{FULL}
\begin{SHORT}\vspace{-7mm}\end{SHORT}

\begin{FORPROOFSONLYEXCLUDETHIS}
The proof of Theorem~\ref{thm:lrurandomcomp} applies to an adaptive online adversary. An analysis for an oblivious adversary might yield a lower competitive ratio.

For $k=2$, we also show that \lrurandom is $(1,\delta c)$-smooth, where $c$ is less than $k+1$, which is the best possible among deterministic, demand-paging or competitive algorithms.
Specifically, $c$ is $1+11/6=2.8\bar{3}$. 
Although our proof technique does not scale beyond $k=2$, we conjecture that this algorithm is in fact smoother than $(1, \delta(k+1))$ for all $k$.
\end{FORPROOFSONLYEXCLUDETHIS}

\begin{restatable}[Smoothness of \lrurandom]{theorem}{thmlrurandomsmoothness}~\\
\indent Let $k=2$. $\lrurandom$ is $(1,\frac{17}{6}\delta)$-smooth.
\end{restatable}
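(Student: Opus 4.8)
The plan is to follow the same structure as the proof of smoothness for \RAND (Theorem~\ref{thm:rand}): it suffices to show that \lrurandom is $(1, \frac{17}{6}, 1)$-smooth, since Corollary~\ref{cor:1delta} then yields the $(1, \frac{17}{6}\delta)$-smoothness claimed. As in the \RAND proof, I will work with the non-demand-paging formulation and, by renaming pages in the initial cache, assume that the cache starts full with an arbitrary fixed set of pages; this lets me split an arbitrary pair $\rho, \rho'$ with $\Delta(\rho,\rho') = 1$ at the single point of difference. Before the difference the two state distributions coincide; serving the differing request $\rho_j$ vs.\ $\rho'_j$ adds at most $1$ to the difference in expected misses and produces two distributions $D, D'$. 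I then need (i) a bound on an appropriate distance $\Delta(D, D') \leq \frac{11}{6}$ right after the difference, and (ii) a potential-style claim that this distance does not increase (never increases the gap in expected misses) as the common suffix $\sigma$ is served, i.e. $E[M_{D'}(\sigma)] - E[M_D(\sigma)] \le \Delta(D,D')$.

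For the distance, I will use an earth-mover's distance between cache-state distributions, exactly analogous to the one in the \RAND proof but with $d(s,s') = H_2 \cdot f(c(s,s'))$ for a suitably chosen weight function $f$ with $f(0) = 0$ — specialized to $k = 2$, where $H_2 = 3/2$ and a cache state is just an ordered pair of pages. The cost of the eviction step needs to reflect \lrurandom's nonuniform probabilities: when $k=2$, on a miss the older page is evicted with probability $\frac{1}{H_2} = \frac{2}{3}$ and the younger with probability $\frac{1}{2H_2} = \frac{1}{3}$. Because the split is not uniform, I expect the constant to come out larger than the $k=2$ specialization of \RAND's bound of $k=2$; a direct calculation of the worst-case cost of transferring mass from a state $s_i$ where the differing page is absent — splitting into the two successor states with weights $2/3$ and $1/3$ and the appropriate recency orderings — should give exactly $\frac{11}{6}$ after optimizing over the orderings. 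I would verify this by a small finite case analysis over the possible recency configurations (at most a handful of cases for $k=2$), the point of difference being a hit in one, both, or neither cache, and in the latter case whether the replaced slot is the older or the younger one.

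For the invariance claim I will mirror the \RAND argument: fix the mass-transport plan $\alpha$ realizing $\Delta(D_{i-1}, D'_{i-1})$, and for each request $\sigma_i$ and each pair of states $(s_u, s_v)$ carrying mass $\alpha(s_u,s_v)$, construct a valid transport plan $\alpha'$ for $D_i, D'_i$ by pushing the mass through \lrurandom's transition, again distinguishing the four cases according to whether $\sigma_i \in s_u$ and whether $\sigma_i \in s_v$. In each case I need to check that the increase in transported cost is at most compensated by the miss-probability contributed by $\sigma_i$ on the $D'$ side minus the $D$ side — i.e. the telescoping inequality $E[M_{D_{i-1}}(\sigma_i)] - E[M_{D'_{i-1}}(\sigma_i)] \le \Delta(D_{i-1},D'_{i-1}) - \Delta(D_i, D'_i)$. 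Summing over all requests of $\sigma$ gives the claim, hence $\lrurandom(\rho') - \lrurandom(\rho) \le 1 + \frac{11}{6} = \frac{17}{6}$.

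The main obstacle I anticipate is choosing the weight function $f$ (and the scaling) so that both (i) and (ii) go through simultaneously: the distance must be large enough that the single-request transition never increases it beyond what the miss-probability gap pays for, yet small enough that the initial distance after the difference is at most $\frac{11}{6}$. With the nonuniform eviction probabilities of \lrurandom the natural harmonic weights used for \RAND no longer telescope cleanly, so I expect to need a tailored $f$ for $k=2$; pinning down that $f$ (and checking the handful of transition cases) is where the real work lies, and is also the reason the technique does not obviously scale to general $k$.
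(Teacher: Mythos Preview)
Your high-level plan mirrors the paper's: reduce to $\delta=1$ via Corollary~\ref{cor:1delta}, compare state distributions after the single edit, and bound the remaining gap on the common suffix by an earth-mover's distance that telescopes against the per-request miss difference. However, there is a genuine gap in the distance you propose.

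You set $d(s,s') = H_2\cdot f(c(s,s'))$, i.e.\ a function of the \emph{set} difference only. For \lrurandom a cache state is an \emph{ordered} pair, and no function of $c(s,s')$ alone can make the telescoping inequality hold. Concretely, take $s=[0\,1]$ and $s'=[1\,0]$: then $c(s,s')=0$, so your distance is $0$. Request page~$2$: both miss, and $s$ transitions to $\{[2\,0]:\tfrac{2}{3},\,[2\,1]:\tfrac{1}{3}\}$ while $s'$ transitions to $\{[2\,1]:\tfrac{2}{3},\,[2\,0]:\tfrac{1}{3}\}$; these supports coincide as sets, so your distance is still~$0$. Now request page~$0$: the hit probabilities are $\tfrac{2}{3}$ versus $\tfrac{1}{3}$, a gap of $\tfrac{1}{3}$ that a distance of~$0$ cannot pay for. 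The paper avoids this by letting $d$ depend on the ordering: it computes $d$ as the least fixed point of $d(s,s')\ge\max_p\{f_p(s,s')+\Delta(D_p(s),D_p(s'))\}$ over the six non-equivalent ordered state pairs, obtaining e.g.\ $d([0\,1],[1\,0])=\tfrac{1}{2}$, $d([0\,1],[0\,2])=\tfrac{3}{2}$, $d([0\,1],[2\,0])=2$.

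There is a second, smaller issue. You aim to show $\Delta(D,D')\le\tfrac{11}{6}$ immediately after the edit. With the paper's (correct) distances, an insertion of a fresh page into $[0\,1]$ yields $D'=\{[2\,0]:\tfrac{2}{3},\,[2\,1]:\tfrac{1}{3}\}$ and $\Delta(D,D')=\tfrac{2}{3}\cdot 2+\tfrac{1}{3}\cdot 2=2$, not $\tfrac{11}{6}$. The paper recovers $\tfrac{11}{6}$ by unrolling \emph{one further request} of the common suffix and taking the worst $p$, exploiting that the induction gives $m_p(s,s')$ bounds that are sometimes strictly tighter than $d(s,s')$ (here $m_1([0\,1],[2\,1])\le\tfrac{3}{2}<2$). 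So your step~(i) needs this extra unrolling; a direct distance bound after the edit only gives $1+2=3$.
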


\begin{FULL}
\ifthenelse{\boolean{proof-thmlrurandomsmoothness}}{
\begin{proof}
Similar to the proof for $\RAND$, we will look at the distances between state distributions. In this case, however, we will only consider the distances between singleton distributions and distributions that result from requests to a page.  The distances between states shown in Table~\ref{tab:lrurandomdistances} satisfy the following property. Let $s$ and $s'$ be two cache states and let $f_p(s,s')=[p \notin s'] - [p \notin s]$ be the difference between the number of faults when accessing $p$ with caches $s$ and $s'$. Let $d(s,s')$ denote the distance between $s$ and $s'$. Let $D_p(s)$ denote the resulting distribution when requesting $p$ to state $s$. Then, all pairs of states $s,s'$  in Table~\ref{tab:lrurandomdistances} satisfy \todo{mention how we know this?}:

\begin{equation}
d(s,s') \ge \max_p\{f_p(s,s') + \Delta(D_p(s),D_p(s'))\},
\label{eq:distanceineq}
\end{equation}
 
\noindent where $\Delta(D,D')$ is the distance between distributions defined as the minimum cost to transfer the probability mass of $D$ to $D'$. For this definition, the cost to transfer mass between two states $r\in D$ and $r'\in D'$ equals $d(r,r')$.
Note that we only consider cache states that are full. See the discussion at the beginning of the proof of Theorem~\ref{thm:rand} for a justification.

Let $\lrur_{D}(\sigma)$ denote the expected number of faults of $\lrurandom(\sigma)$ when starting from a probability state distribution $D$, and let $\{s\}$ denote the singleton distribution with state $s$. Let $m(\sigma,s,s')=\lrur_{\{s'\}}(\sigma)-\lrur_{\{s\}}(\sigma)$, and let $m(s,s')=\max_\sigma m(\sigma,s,s')$.
We claim that for any pair of states $s$ and $s'$ that satisfy (\ref{eq:distanceineq}), $m(s,s')\le d(s,s')$. 

We prove this by induction on the number of page requests $n$. Let $n=1$ and let $s$ and $s'$ be two states. Let $m^n(s,s')$ be $m(s,s')$ restricted to sequences of length $n$. Then, $m^1(s,s')=\max_p\{f_p(s,s')\}$ and since $\Delta(D,D')\ge 0$ for any pair of distributions and $d(s,s')$ satisfies (\ref{eq:distanceineq}), then $d(s,s')\ge m^1(s,s')$. Now assume that the claim is true for sequences of length $n$. We prove that it holds for sequences of length $n+1$. Let $\sigma$ be a sequence of length $n$ and let $p\cdot\sigma$ be the concatenation of page $p$ and $\sigma$. Then for a pair of states $s$ and $s'$, 

\begin{eqnarray}
m(p\cdot\sigma,s,s') & = & f_p(s,s') + \lrur_{D_p(s')}(\sigma)-\lrur_{D_p(s)}(\sigma)\\
										 & = & f_p(s,s') + \sum_{s_i\in D_p(s')}\lrur_{\{s_i\}}(\sigma)w'(s_i)-\sum_{s_i\in D_p(s)}\lrur_{\{s_i\}}(\sigma)w(s_i)
\label{eq:induction}
\end{eqnarray}

In the last equation, 
and $w(s_i)$ and $w'(s_i)$ are the probability of state $s_i$ in $D_p(s)$ and $D_p(s')$. We rewrite (\ref{eq:induction}) by pairing states in both distributions and assigning a weight $\alpha(s_i,s_j)$  to each pair $(s_i,s_j)$:

\begin{eqnarray}
m(p\cdot\sigma,s,s') & = & f_p(s,s') +  \sum_{s_i\in D_p(s'),s_j\in D_p(s)}(\lrur_{\{s_i\}}(\sigma)-\lrur_{\{s_j\}}(\sigma))\alpha(s_i,s_j)\\
										 & \le & f_p(s,s') +  \sum_{s_i\in D_p(s'),s_j\in D_p(s)}m^n(s_i,s_j)\alpha(s_i,s_j)
\label{eq:induction2}
\end{eqnarray}

Above, $\alpha$ is an assignment that satisfies $\forall s_i\in D_p(s'), w'(s_i) = \sum_{s_j\in D_p(s)} \alpha(s_i,s_j)$ and $\forall s_j\in D_p(s), w(s_j) = \sum_{s_i\in D_p(s')} \alpha(s_i,s_j)$. $\alpha$ defines a transfer of mass from distribution $D_p(s')$ to $D_p(s)$. We pick $\alpha$ to be the assignment of weight of minimum cost when the cost of transferring mass between states $s_i$ and $s_j$ is $d(s_i,s_j)$. Since by the inductive hypothesis $m^n(s_i,s_j) \le d(s_i,s_j)$, we have
\begin{eqnarray}
m(p\cdot\sigma,s,s') & \le & f_p(s,s') +  \sum_{s_i\in D_p(s'),s_j\in D_p(s)}d(s_i,s_j)\alpha(s_i,s_j)\\
										 & = & f_p(s,s') + \Delta(D_p(s),D_p(s'))	
\label{eq:induction3}
\end{eqnarray}

Therefore, $m^{n+1}(s,s') \le \max_p\{f_p(s,s') + \Delta(D_p(s),D_p(s'))\} \le d(s,s')$, which proves the claim.

Now we are ready to prove the theorem. 
We will consider a pair of initial states and will show an upper bound on the expected difference between number of misses that can be reached after an insertion, deletion, or substitution.  Let $m_{p,q}(s,s') = \max_{\sigma}\{\lrur_{\{s'\}}(p\cdot\sigma)-\lrur_{\{s\}}(q\cdot\sigma)$\}.
Let $m_p(s,s')=m_{p,p}(s,s')$
 and let $m(s,s')=m_{\bot,\bot}(s,s')$, where $\bot$ is an empty request.

We will prove that $m_{p,q}(s,s)\le 17/6$. If $p=q$, $m_{p,q}(s,s)=0$. Assume $p\ne q$. Let $\bot$ be an empty request. We consider the insertion ($p=\bot$), deletion ($q=\bot$), and substitution cases, and all possible non-redundant requests $p,q$. Let $s=[0\ 1]$ be a cache state, where the left page is the most recently used one. For some pairs of initial states, when the distance between states shown in Table~\ref{tab:lrurandomdistances} is a good enough upper bound on the expected number of misses we use that value.

The values in Table~\ref{tab:lrurandomdistances} are the least-fixed point of the monotone function induced by (\ref{eq:distanceineq}) and have been computed using a Kleene iteration starting from the bottom element $d_\bot(s, s') = 0$.

\begin{table}[!t]
\begin{center}
\caption{Distances between pairs of states. In each entry $d(s,s')$ is an upper bound on the worst-case expected difference between faults when executing $\lrurandom$ starting with $s$ and $s'$.\label{tab:lrurandomdistances}}
\begin{tabular}{c  c | c}
$s$ & $s'$ & $d(s,s')$\\\hline
$[0 \ 1]$ & $[0 \ 1]$ & 0\\
$[0 \ 1]$ & $[0 \ 2]$ & $3/2$\\
$[0 \ 1]$ & $[1 \ 0]$ & $1/2$\\
$[0 \ 1]$ & $[1 \ 2]$ & $3/2$\\
$[0 \ 1]$ & $[2 \ 0]$ & 2\\
$[0 \ 1]$ & $[2 \ 1]$ & 2\\
\end{tabular}
\end{center}
\vspace{-1mm}
\end{table}

\begin{itemize}
\item Insertion: 
\begin{enumerate}
	\item $m_{\bot,1}([0\ 1],[0\ 1]) = m([0 \ 1],[1\ 0]) \le 1/2$
	\item $m_{\bot,2}([0\ 1],[0\ 1]) \le 1 + \max_p\{\frac{2}{3}m_p([0\ 1],[2\ 0]) +\frac{1}{3}m_p([0\ 1],[2\ 1])\}$.
	
	Let $p\in \{0,1,2,3\}$. We compute $m_p([0\ 1],[2\ 0])$:
	\begin{enumerate}
	\item $m_0([0\ 1],[2\ 0]) \le 0+ m([0\ 1],[0\ 2]) = \frac{3}{2}$ 
	\item $m_1([0\ 1],[2\ 0]) \le 1+ \frac{2}{3}m([1\ 0],[1\ 2]) +\frac{1}{3}m([1\ 0],[1\ 0]) \le 1+ \frac{2}{3}\cdot\frac{3}{2}+0 = 2$
	\item $m_2([0\ 1],[2\ 0]) \le -1+ \frac{2}{3}m([2\ 0],[2\ 0]) +\frac{1}{3}m([2\ 1],[2\ 0]) \le -1+ 0+ \frac{1}{3}\cdot\frac{3}{2} = \frac{1}{2}$ 
	\item $m_3([0\ 1],[2\ 0]) \le 0+ \frac{1}{3}m([3\ 0],[3\ 0]) +\frac{1}{3}m([3\ 1],[3\ 2])+\frac{1}{3}m([3\ 1],[3\ 2]) \le 0+ \frac{1}{3}\cdot\frac{3}{2}+\frac{1}{3}\cdot\frac{3}{2} = 1$ 
	\end{enumerate}
	Note that in the last inequality above we can choose to pair any of the states of the resulting distribution to compute an upper bound on the expected number of misses. 
	
	We compute $m_p([0\ 1],[2\ 1])$:
	\begin{enumerate}
	\item $m_0([0\ 1],[2\ 1]) \le 1+ \frac{2}{3}m([0\ 1],[0\ 2]) +\frac{1}{3}m([0\ 1],[0\ 1]) \le 1+ \frac{2}{3}\cdot\frac{3}{2}+0 = 2$ 
	\item $m_1([0\ 1],[2\ 1]) \le 0+ m([1\ 0],[1\ 2]) = \frac{3}{2}$ 
	\item $m_2([0\ 1],[2\ 1]) \le -1+ \frac{2}{3}m([2\ 0],[2\ 1]) +\frac{1}{3}m([2\ 1],[2\ 1]) \le -1+ \frac{2}{3}\cdot\frac{3}{2}+0 = 0$ 
	\item $m_3([0\ 1],[2\ 1]) \le 0+ \frac{2}{3}m([3\ 0],[3\ 2]) +\frac{1}{3}m([3\ 1],[3\ 1]) \le \frac{2}{3}\cdot\frac{3}{2}+0 = 1$ 
	\end{enumerate}
	Plugging in the results in $m_{\bot,2}([0\ 1],[0\ 1])$ we obtain:
	
	$m_{\bot,2}([0\ 1],[0\ 1])\le 1 + \frac{2}{3}m_1([0\ 1],[2\ 0]) +\frac{1}{3}m_1([0\ 1],[2\ 1])=\frac{17}{6}$
\end{enumerate}

\item Deletion:

\begin{enumerate}
	\item $m_{1,\bot}([0\ 1],[0\ 1]) = m([1 \ 0],[0\ 1]) \le 1/2$
	\item $m_{2,\bot}([0\ 1],[0\ 1]) \le -1 + \frac{2}{3}m([2\ 0],[0\ 1]) +\frac{1}{3}m([2\ 1],[0\ 1])=-1 +\frac{2}{3}\cdot\frac{3}{2} + \frac{1}{3}\cdot 2 = \frac{2}{3}$
\end{enumerate}

\item Substitution:

\begin{enumerate}
	\item $m_{0,1}([0\ 1],[0\ 1]) = m_{\bot,1}([0\ 1],[0\ 1]) \le 1/2$ 
	\item $m_{0,2}([0\ 1],[0\ 1]) = m_{\bot,2}([0\ 1],[0\ 1]) \le 17/6$ 
	\item $m_{1,0}([0\ 1],[0\ 1]) = m_{\bot,1}([0\ 1],[0\ 1]) \le 1/2$ 
	\item $m_{1,2}([0\ 1],[0\ 1]) \le 1 + \max_p\{\frac{2}{3}m_p([1\ 0],[2\ 0]) +\frac{1}{3}m_p([1\ 0],[2\ 1])\}$.
	
Let $p\in \{0,1,2,3\}$. We compute $m_p([1\ 0],[2\ 0])$:
	\begin{enumerate}
	\item $m_0([1\ 0],[2\ 0]) \le \frac{3}{2}$ (equivalent to $m_1([0\ 1],[2\ 1])$ in Insertion)  
	\item $m_1([1\ 0],[2\ 0]) \le  2$  (equivalent to $m_0([0\ 1],[2\ 1])$ in Insertion)
	\item $m_2([1\ 0],[2\ 0]) \le 0$ (equivalent to $m_2([0\ 1],[2\ 1])$ in Insertion)
	\item $m_3([1\ 0],[2\ 0])  \le 1$ (equivalent to $m_3([0\ 1],[2\ 1])$ in Insertion)
	\end{enumerate}
	We compute $m_p([1\ 0],[2\ 1])$:
	\begin{enumerate}
	\item $m_0([1\ 0],[2\ 1]) \le 2$ (equivalent to $m_1([0\ 1],[2\ 0])$ in Insertion)  
	\item $m_1([1\ 0],[2\ 1]) \le  \frac{3}{2}$  (equivalent to $m_0([0\ 1],[2\ 0])$ in Insertion)
	\item $m_2([1\ 0],[2\ 1]) \le -\frac{1}{2}$ (equivalent to $m_2([0\ 1],[2\ 0])$ in Insertion)
	\item $m_3([1\ 0],[2\ 1])  \le 1$ (equivalent to $m_3([0\ 1],[2\ 0])$ in Insertion)
	\end{enumerate}
		Therefore, $m_{1,2}([0\ 1],[0\ 1]) \le 1 + \frac{2}{3}m_1([1\ 0],[2\ 0]) +\frac{1}{3}m_1([1\ 0],[2\ 1])=\frac{17}{6}$.
	\item $m_{2,1}([0\ 1],[0\ 1]) = m_{2,\bot}([0\ 1],[0\ 1]) + m_{\bot,1}([0\ 1],[0\ 1]) \le \frac{2}{3} +\frac{1}{2} = \frac{7}{6}$
	\item $m_{2,3}([0\ 1],[0\ 1]) \le 0 + \frac{2}{3}m([2\ 0],[3\ 0]) +\frac{1}{3}m([2\ 1],[3\ 1])=\frac{2}{3}\cdot 2 + \frac{1}{3}\cdot 2 = 2$
\end{enumerate}

\end{itemize}
The maximum of all the upper bounds derived above is $17/6$ and hence $\lrurandom(\sigma')-\lrurandom(\sigma)\le 17/6=2.8\bar{3}$ for any $\sigma$ and $\sigma'$ with $\Delta(\sigma,\sigma')=1$. Corollary~\ref{cor:1delta} implies the theorem.
\qed 
\end{proof}}{}
\end{FULL}

\begin{FORPROOFSONLYEXCLUDETHIS}
\nvsp\nvsp\nvsp\nvsp
\begin{restatable}[Smoothness of \lrurandom]{conj}{conlrurandomsmoothness}~\\
\indent $\lrurandom$ is $(1,\Theta(H_k^2)\delta)$-smooth.
\end{restatable}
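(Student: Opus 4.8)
The plan is to reduce to a single change via Corollary~\ref{cor:1delta}, so that it suffices to prove $\lrurandom(\sigma')\le\lrurandom(\sigma)+c(k)$ with $c(k)=\Theta(H_k^2)$ whenever $\Delta(\sigma,\sigma')=1$, together with a family of instances on which no $c(k)=o(H_k^2)$ suffices. As in the proof of Theorem~\ref{thm:rand}, we may assume both runs start from a full cache by renaming the initially cached pages away from the rest of the sequences, which removes the non-demand-paging corner cases. The machinery I would reuse is the state-distribution coupling already used for \RAND\ and for \lrurandom\ with $k=2$: work with an \emph{earth-mover distance} $\Delta(D,D')$ between distributions over cache states, induced by a base distance $d(s,s')$ between individual (ordered) states, and prove the amortization inequality
\[ d(s,s')\ \ge\ \max_p\bigl\{\,f_p(s,s')+\Delta\bigl(D_p(s),D_p(s')\bigr)\,\bigr\}, \]
where $f_p(s,s')=[p\notin s']-[p\notin s]$ and $D_p(\cdot)$ is the distribution after requesting $p$. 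Exactly as in the $k=2$ case, this inequality implies $\lrur_{\{s'\}}(\sigma)-\lrur_{\{s\}}(\sigma)\le d(s,s')$ for all $\sigma$ by induction on $|\sigma|$ and convexity of the earth-mover distance.

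Because a single insertion, deletion, or substitution leaves the two caches differing in exactly one page — state $s$ holding a page $q$ at some rank $r$ (counted from the oldest), state $s'$ holding a page $p$ at some rank $r'$, the other $k-1$ pages shared with a common recency order — the whole problem collapses to: (a) choosing $d$ on such one-replacement states so that the amortization inequality holds, and (b) bounding $\max_{r,r'}d$ over one-replacement states by $O(H_k^2)$, after which adding the $\le 1$ cost of the changed request gives $c(k)=O(H_k^2)$. I would look for $d$ as an explicit harmonic expression $\phi(r,r')$, guided by the competitiveness potential $H_k\sum_p \tfrac{s(p)}{H_{s(p)}}$ of Theorem~\ref{thm:lrurandomcomp} and by the tabulated $k=2$ distances (whose values already display the harmonic pattern and were obtained as the least fixed point of the inequality above via Kleene iteration). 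The inequality then must be checked against the three request types: a request to a common page only moves $r,r'$ toward the oldest slot, so $\phi$ must be monotone; a request to $q$ is a hit for $s$ and a miss for $s'$, which re-couples the runs with probability $\tfrac{1}{r'H_k}$ and otherwise re-injects the differing page at the most-recently-used end, and symmetrically for a request to $p$; a common miss, under the coupling that evicts the same rank in both runs whenever possible, either merges the runs, replaces one differing page, or ages both in lockstep. Each case should reduce to an identity among harmonic sums, and I would expect the extremal value of $\phi$ to come out $\Theta(H_k^2)$ — well below the naive $\Theta(k)$ expected time to evict a single fresh page, precisely because the re-coupling chances on requests to the two differing pages keep the difference from drifting passively down to the oldest slot.

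For the matching lower bound I would adapt the \RAND\ lower-bound construction of Theorem~\ref{thm:rand}: start from a cyclic block $(x_1,\dots,x_k)^n$, on which \lrurandom\ faults negligibly once warmed up, and insert a single new request $x_{k+1}$; the inserted page is evicted only with probability $\Theta(1/H_k)$ per subsequent miss once it has aged to the oldest slot, and driving it back toward a young slot between misses — by interleaving the block with chosen requests to the currently differing page — should be what pushes the expected extra faults up to $\Theta(H_k^2)$ rather than merely $\Theta(H_k)$. The main obstacle, and the reason this is only a conjecture, is the construction and verification of the base distance $d$ for general $k$: unlike \RAND, where states are unordered sets and $d(s,s')=kH_{c(s,s')}$ works, \lrurandom's states carry a recency order, the transitions mix eviction, aging, and re-injection at the top, and the fixed point defining $d$ appears to lack a simple closed form — which is exactly why the $k=2$ analysis had to resort to an explicit fixed-point computation and does not obviously extend.
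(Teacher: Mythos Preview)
The statement you are addressing is a \emph{conjecture} in the paper, not a theorem; the paper offers no proof of it. Indeed, the paper explicitly says that its proof technique ``does not scale beyond $k=2$'' and leaves the general-$k$ smoothness of \lrurandom open. So there is no proof in the paper to compare your proposal against.

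Your write-up is therefore best read not as a proof but as a proof \emph{plan}, and you yourself flag the central gap: the construction and verification of a base distance $d$ on ordered states for general $k$ satisfying the amortization inequality. That is precisely the obstacle the paper could not overcome either; the $k=2$ argument relied on an explicit fixed-point computation over finitely many state shapes, and no closed form for $d$ is known. Until such a $d$ is exhibited and the case analysis (common hit, hit/miss on the differing pages, common miss) is carried out against it, the upper bound $O(H_k^2)$ remains conjectural. Your lower-bound sketch is likewise heuristic: the claim that one can ``drive the differing page back toward a young slot between misses'' to force $\Theta(H_k^2)$ extra faults is plausible but not established, and the paper does not provide such a construction either. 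In short, your proposal accurately reproduces the intuition behind the conjecture and the shape the argument would have to take, but it does not close the gap---nor does the paper.
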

\begin{SHORT}\vspace{-1mm}\end{SHORT}
\todo{discuss this conjecture}

\begin{figure}[t]
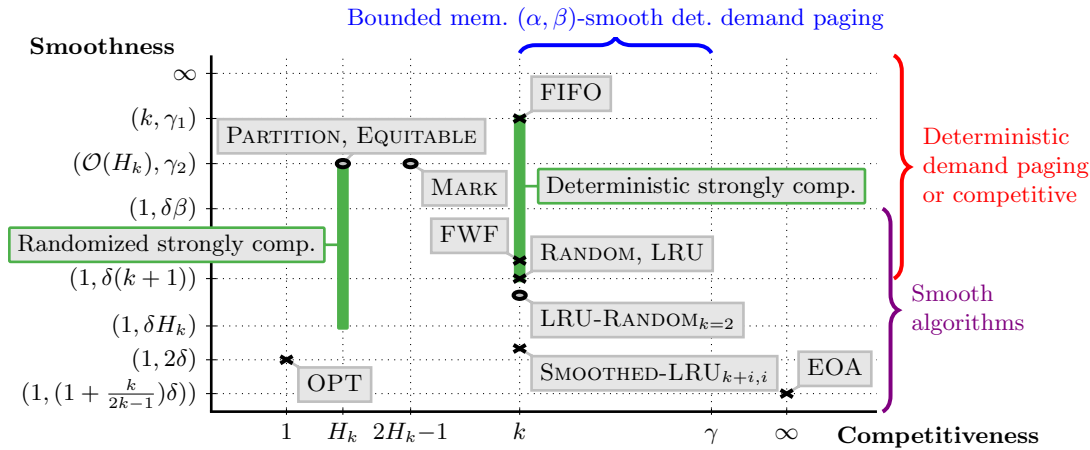

\begin{center}
\begin{SHORT}
\scalebox{0.7925}{\robustnessvscompetitiveness}
\end{SHORT}
\begin{FULL}
\robustnessvscompetitiveness
\end{FULL}
\begin{SHORT}\vspace{-4mm}\end{SHORT}
\caption{Schematic view of the smoothness and competitiveness landscape. Crosses indicate tight results, whereas ellipses indicate upper bounds. Braces denote upper and lower bounds on the smoothness or competitiveness of classes of algorithms. For simplicity of exposition, $\gamma_1$ and $\gamma_2$ are left unspecified; $\gamma$ can be chosen arbitrarily. More precise statements are provided in the respective theorems.}\label{fig:landscape}
\begin{SHORT}\vspace{-8mm}\end{SHORT}
\end{center}
\end{figure}

\nvsp\nvsp\nvsp\nvsp
\section{Discussion}
\nvsp\nvsp\nvsp\nvsp

We have determined fundamental limits on the smoothness of deterministic and randomized paging algorithms.
No deterministic competitive algorithm can be smoother than $(1, \delta(k+1))$-smooth.
Under the restriction to bounded-memory algorithms, which is natural for hardware implementations of caches, smoothness implies competitiveness.
\begin{SHORT}
\LRU is strongly competitive, and it matches the lower bound for deterministic competitive algorithms.
\end{SHORT}
\begin{FULL}
\LRU is strongly competitive, and it matches the lower bound for deterministic competitive algorithms, while \FIFO matches the upper bound.
\end{FULL}
There is no trade-off between smoothness and competitiveness for deterministic algorithms.

In contrast, among randomized algorithms, we have identified \smoothlru, an algorithm that is very smooth, but not competitive.
In particular, it is smoother than any strongly-competitive randomized algorithm may be.
The well-known randomized algorithms \MARK, \partition, and \equitable are not smooth.
It is an open question, whether there is a randomized ``\LRU sibling'' that is both strongly-competitive and $(1, \delta H_k)$-smooth.
With \lrurandom we introduce a randomized algorithm that is at least as competitive as any deterministic algorithm, yet provably smoother, at least for $k=2$. 
Its exact smoothness remains open.
Figure~\ref{fig:landscape} schematically illustrates many of our results.

\subsubsection*{Acknowledgments.}
\begin{SHORT}
\nvsp\nvsp
This work was partially supported by the DFG as part of the SFB/TR 14 AVACS.
\nvsp\nvsp
\end{SHORT}
\begin{FULL}
This work was partially supported by the German Research Council (DFG) as part of the Transregional Collaborative Research Center ``Automatic Verification and Analysis of Complex Systems'' (SFB/TR 14 AVACS).
\end{FULL}

\end{FORPROOFSONLYEXCLUDETHIS}

\begin{SHORT}\vspace{-1mm}\end{SHORT}
\bibliographystyle{splncs}
\bibliography{cache}

\begin{CUSTOM}
\ifthenelse{\boolean{atleastoneomittedproof}}
{
\appendix
\section*{Appendix}

\ifthenelse{\boolean{proof-lemonetodelta}}{}
{ 
\lemonetodelta*

}

\ifthenelse{\boolean{proof-thmlowerbounddemandpaging}}{}
{ 
\thmlowerbounddemandpaging*

}

\ifthenelse{\boolean{proof-thmlowerboundcompetitive}}{}
{ 
\thmlowerboundcompetitive*

}

\ifthenelse{\boolean{proof-thmoptsmoothness}}{}
{ 
\thmoptsmoothness*

}

\ifthenelse{\boolean{proof-thmcompetitivesmoothness}}{}
{ 
\thmcompetitivesmoothness*

}

\ifthenelse{\boolean{proof-thmboundedmemorysmooth}}{}
{ 
\thmboundedmemorysmooth*

}

\ifthenelse{\boolean{proof-thmlrusmoothness}}{}
{ 
\thmlrusmoothness*

}

\ifthenelse{\boolean{proof-lemsuffix}}{}
{ 
\begin{restatable}{proposition}{lemsuffix}
\label{prop:suffix}
For a sequence $\sigma$, let $\Phi(\sigma)$ denote the number of phases in its $k$-phase partition.
Let $\sigma$ be a sequence, let $\rho$ be a suffix of $\sigma$, and let $\ell$ and $\ell'$ denote the number of distinct pages in the last phase of $\sigma$ and~$\rho$, respectively. Then $\Phi(\rho)\le \Phi(\sigma)$. Furthermore, if $\Phi(\rho)= \Phi(\sigma)$ then $\ell'\le \ell$. 
\end{restatable}

}

\ifthenelse{\boolean{proof-lemphases}}{}
{ 
\lemphases*

}

\ifthenelse{\boolean{proof-thmfwfsmoothness}}{}
{ 
\thmfwfsmoothness*

}

\ifthenelse{\boolean{proof-thmfifosmoothness}}{}
{ 
\thmfifosmoothness*

}

\ifthenelse{\boolean{proof-thmlowerboundrandomizeddemangpaging}}{}
{ 
\thmlowerboundrandomizeddemangpaging*

}

\ifthenelse{\boolean{proof-thmlowerboundstronglycompetitive}}{}
{ 
\thmlowerboundstronglycompetitive*

}

\ifthenelse{\boolean{proof-thmlowerboundpartitionequitable}}{}
{ 
\thmlowerboundpartitionequitable*

}

\ifthenelse{\boolean{proof-thmlowerboundmark}}{}
{ 
\thmlowerboundmark*

}

\ifthenelse{\boolean{proof-thmrandomsmoothness}}{}
{ 
\thmrandomsmoothness*

}

\ifthenelse{\boolean{proof-thmeoasmoothness}}{}
{ 
\thmeoasmoothness*

}

\ifthenelse{\boolean{proof-thmsmoothlrusmoothness}}{}
{ 
\thmsmoothlrusmoothness*

}

\ifthenelse{\boolean{proof-lemcompetitivenesssmoothlru}}{}
{ 
\lemcompetitivenesssmoothlru*

}

\ifthenelse{\boolean{proof-lemdecompositionsmoothlru}}{}
{ 
\lemdecompositionsmoothlru*

}

\ifthenelse{\boolean{proof-thmsteplrumixed}}{}
{ 
\thmsteplrumixed*

}

\ifthenelse{\boolean{proof-corsmoothlrumixed}}{}
{ 
\corsmoothlrumixed*

}

\ifthenelse{\boolean{proof-thmcompetitivenesslrurandom}}{}
{ 
\thmcompetitivenesslrurandom*

}

\ifthenelse{\boolean{proof-thmlrurandomsmoothness}}{}
{ 
\thmlrurandomsmoothness*

}

}
{}
\end{CUSTOM}

\end{document}